\newtheorem{remark}{Remark}
\newtheorem{thm}{Theorem}
\newtheorem{cor}[thm]{Corollary}
\newtheorem{lem}[thm]{Lemma}
\newtheorem{fact}[thm]{Fact}
\newtheorem{claim}[thm]{Claim}
\newtheorem{obs}[thm]{Observation}
\newtheorem{definition}[thm]{Definition}
\newtheorem{example}[thm]{Example}
\newcommand{\expectation}[1]{\mathbb{E} \left[ #1 \right]}
\def\squarebox#1{\hbox to #1{\hfill\vbox to #1{\vfill}}}
\begin{document}

				
\title{On the Complexity of Dynamic Mechanism Design}

\author{Christos Papadimitriou\thanks{Columbia University; christos@cs.columbia.edu} 
\and George Pierrakos\thanks{Hudson River Trading; geopier@gmail.com}
\and Alexandros Psomas\thanks{Purdue University; apsomas@cs.purdue.edu}
\and Aviad Rubinstein\thanks{Stanford University; aviad@cs.stanford.edu}}

\maketitle
		
\begin{abstract}
 
We introduce a simple dynamic mechanism design problem in which the designer offers two items in two consecutive stages to a single buyer. The buyer's joint distribution of valuations for the two items is known, and the buyer knows the valuation for the current item, but not for the one in the future.  The designer seeks to maximize expected revenue, and the mechanism must be deterministic, truthful, and ex-post individually rational. We show that finding the optimum deterministic mechanism in this situation --- arguably one of the simplest meaningful dynamic mechanism design problems imaginable --- is \NP-hard.  We also prove several positive results, including a polynomial-time linear programming-based algorithm for the revenue optimal randomized mechanism (even for many buyers and many stages). We prove strong separations in revenue between non-adaptive, adaptive, and randomized mechanisms, even when the valuations in the two stages are independent. Finally, for the same problem in an environment in which contracts cannot be enforced, and thus perfection of equilibrium is necessary, we show that the optimum randomized mechanism requires multiple rounds of cheap talk-like interactions.
\end{abstract}


\section{Introduction}

Consider the problem of a revenue maximizing seller with two items for sale, one today and one tomorrow, to a single buyer. The buyer knows her value for today's item, but for tomorrow's item she only has a prior. The seller knows the joint distribution (for which the buyer's prior is the conditional).  How should the seller behave in order to maximize expected revenue?
If there was no item tomorrow, this would be a simple application of Myerson's theorem \cite{myerson1981optimal}:  the seller makes an offer easily calculated from the buyer's prior.  But, the second item makes things much more complicated.  We have a {\em dynamic} mechanism design problem.


Dynamic mechanisms have been studied extensively in quite general settings; see for example \cite{bergemann2011dynamic} and \cite{bergemann2019dynamic} for recent surveys. But, in contrast to previous work, our focus here is \emph{computation}. We propose the two-stage mechanism problem as a useful surrogate of dynamic mechanisms for the purpose of exploring the problem's computational complexity.  It is certainly extremely simple, and yet surprisingly hard.  To see why, suppose that the two valuations, for today and tomorrow, are independent random variables.  It is then tempting to assume that, in this simple case, running Myerson's mechanism in each round should work (we refer to this as the \emph{non-adaptive} mechanism). Is it optimal among all truthful and ex-post individually rational mechanisms? Even if not, it must surely at least be a good approximation? The answer is ``no''!   

\begin{example}\label{example 1}
Let $X_1$ and $X_2$ be the random variables indicating the value of the buyer for the first and second stage item. $X_1$ takes value $2^i$ with probability $2^{-i}$ for $i= 1,\dots,n$, and value $0$ with probability $2^{-n}$. $X_2$ takes value $2^i$ with probability $2^{-i}$ for $i=1,\dots,2^n$, and value $0$ with probability $2^{-2^n}$. It can be verified that the optimal static mechanism for both $X_1$ and $X_2$ extracts revenue at most $2$: Consider setting some price $2^k$. The expected revenue is at most $2^k \cdot \sum_{i \geq k} 2^{-i} \leq 2$. Therefore, running the optimal static mechanism at each stage extracts revenue at most $4$. On the other hand, there exists a dynamic mechanism that extracts revenue $n$: on the first stage the buyer pays her report $v^{(1)}$. On the second stage the item is given for free with probability $\frac{v^{(1)}}{\expectation{X_2}}$. Notice that $\expectation{X_2} > 2^n$, therefore $\frac{v^{(1)}}{\expectation{X_2}}$ is a probability. An easy calculation shows that truthful reporting (weakly) maximizes the buyer's utility. The revenue extracted is $\expectation{X_1} = n$. The intuition here is that if the expected value of the future item is large, the buyer is willing to pay her value on the first stage for a better probability of getting allocated the future item.
\end{example}


We note that similar behaviors have been exhibited before, e.g. see~\cite{courty2000sequential, krahmer2016optimality}. And, in fact, for the exact same valuations as Example~\ref{example 1}, we can extract more revenue than the non-adaptive mechanism using a deterministic mechanism.  Here, we overall prove that, even when restricted to ex-post IR mechanisms, there is revenue loss by a non-constant factor between: non-adaptive mechanisms and the optimum deterministic adaptive mechanism (even for uncorrelated distributions);  the optimum deterministic and the optimum randomized mechanism; the optimum randomized mechanism and the optimum social welfare. See Section~\ref{app:separations} for the precise statements and proofs. 

\subsection*{Our results}

Let us focus on deterministic, ex post individually rational mechanisms. How hard can it be to find the optimum one?  The reason we insist on determinism and ex post IR is because we believe that they draw the boundary of mechanisms in which people are likely to choose to participate. That is, the ex post IR constraint rules out mechanisms in which the seller asks an advance payment equal to the expected surplus. Ruling out such mechanisms (which would be feasible under ex-ante or ex-interim IR constraints) is of practical importance. For example, in online ad auctions advertisers' bids are interpreted as willingness to pay, and typically the platform cannot (legally) charge advertisers more than the amount declared as maximum. Our main result (Theorem~\ref{thm:2-day-np-hard}) is that, for the two stage case, it is strongly  {\NP-complete}, given a prior with finite support, to find the optimum such mechanism. However, as we discuss below, we do show that we can compute the optimal randomized mechanism in polynomial time.\footnote{A similar contrast was noted in the case of Myerson's mechanism with correlated buyers, see~\cite{dobzinski2011optimal},\cite{papadimitriou2011optimal}.}

We first characterize the mechanisms of interest (Section \ref{sec:prelim}).  It turns out that there is always an optimum deterministic mechanism that is {\em semi-adaptive} (Lemma~\ref{det-semi-adaptive}).  A non-adaptive mechanism is one that makes two independent offers, one now and one in the future, without eliciting any input from the buyer.  In contrast, a semi-adaptive mechanism starts by eliciting the buyer's type (and takes care that she is truthful), and then makes two offers simultaneously, one for now and one for the future.  The buyer can take or leave the first offer now, and come back in the second stage to take or leave the second offer (which she knows now). So, our task is reduced to designing a function, informed by the whole joint distribution, that maps the support of the buyer's first stage type distribution to two prices. 
One of the reasons this task is daunting is that truthfulness is quite subtle in this context, and incentive compatibility constraints are a big part of the problem's difficulty.  We must give the buyer the right incentives (both right now and in future expectation) so she will not misrepresent her type.  This is done by choosing price pairs such that, for any other current valuation, the buyer is best off, in expectation, telling the truth.  Low prices now must be counterbalanced carefully with higher prices in the future, and the inequalities involve integrals of the cumulative conditional distributions of the future valuation.

In Section \ref{sec:reduction} we describe our \NP-completeness proof, from {\sc Independent Set}.  It is quite elaborate.  The first stage types (values in the support of the first stage distribution) are the nodes.  For each type, two of the possible current prices stand out as potentially optimal, and choosing between them is tantamount to deciding whether a node will be in the maximum independent set.  The optimum revenue achieved is a strictly increasing function of the independent set size.  The truthfulness constraints enforce that no two adjacent nodes are included, and this necessitates an elaborate design of the conditional distributions associated with the nodes.

Before we proceed to our next set of results, it is worth pointing out that the source of the complexity of the two-stage mechanism is not the multi-dimensionality of the buyer's private information. Finding the optimal deterministic one-shot mechanism for selling two (possibly correlated) items to a single buyer is computationally tractable (\cite{chen2018complexity}).\footnote{And in fact, it remains tractable for any constant number of items.} The key fact that allows for good algorithms is that for a constant number $k$ of items, the optimal mechanism for any number $N$ of types (where a type here is a vector of $k$ values) has at most $d=2^k$ possible prices, one for each bundle of items, which is again a constant. This allows to partition the $d$-dimensional space of price vectors into cells, such that for each cell the buyer has the same behavior. In contrast, in the dynamic problem the optimal deterministic mechanism for $2$ items and $N$ types (where a type is a vector of two values) might have $N$ different price pairs (one price for each item), which is not a constant. In fact, in our construction we have two candidate price pairs for each type, leading to an exponential number of candidate solutions. Thus, the dynamic problem is computationally much harder than the static one. Furthermore, as we prove later, the dynamic problem is tractable (at least for two items) when stages are independent, as well as when randomization is allowed. Therefore, correlation and determinism are both necessary for our hardness result. Finally, even though we do not know whether computing the optimal deterministic ex-ante IR mechanism is tractable for correlated stages, notice that this task is trivial for independent stages, even for $D>2$ items: the optimal mechanism offers a take-it-or-leave-it price for the first item (recall that the buyer knows her value for this one) and all the remaining items for a take-it-or-leave-it price equal to their expected value (which will be always accepted by a risk-neutral buyer).

In Section \ref{sec:special_deterministic} we present positive results for deterministic mechanisms.  First, if the support of the distribution of the first stage valuation is a constant, then the problem becomes easy: once we have fixed the second stage prices (there is only a constant number of them), we can easily optimize the first stage prices, by writing a simple linear program. A similar simple linear program suffices even when the second stage prices are not fixed, but we have decided between which second stage types each price should be in. The overall number of linear programs we need to solve is the number of second stage types raised to the power of first stage types, which is polynomial. Second, if we are given the first stage prices and we want to optimize the second stage prices, we can find a $(1-\epsilon)$ approximately optimal mechanism in time polynomial in $1/\epsilon$ and the size of the input, for all $\epsilon > 0$ (that is, there exists an FPTAS), based on an {\em integer} program that happens to be totally unimodular.  These two positive results point to the source of one major difficulty in proving \NP-completeness of the problem:  in our construction the prices of {\em both} items must vary over types. Our last positive result for deterministic mechanisms is a polynomial time algorithm for computing the optimal deterministic mechanism when the stages are independent. The two driving factors of this result are (1) the allocation of the first stage item is a monotone function, and, (2) the first stage price of a type $t_i$ is either zero, or her valuation for the first item. Since all first stage types have the same second stage distribution (since the stages are independent), the IC constraints severely limit the amount that we can price discriminate between different first stage types. Combining this observation with the two aforementioned facts we can show that the number of optimal mechanisms is a small polynomial: a simple enumeration is computationally tractable.

We proceed to study randomized mechanisms. A deterministic and optimal mechanism can often be expressed as the solution to an integer program (but finding an optimal solution to an integer program is  an \NP-complete problem). The relaxed program,\footnote{The relaxation of an integer program arises by replacing integrality constraints, e.g. a constraint of the form $x_i \in \{ 0, 1 \}$, with linear constraints, e.g. $x_i \in [0,1]$.} which can be solved in polynomial time, typically encodes the optimal randomized mechanism.
In Section \ref{sec:LP} we show that the problem of finding the optimum randomized mechanism can be solved in time polynomial in the number of types (where a type here specifies the buyer's values for all items, across stages), and in fact for any finite number of stages of sale and for any constant number of buyers. We show how to optimize over mechanisms that satisfy a number of different notions of incentive compatibility (some of which are with loss of generality, but, as we argue, might be worth optimizing over). Several reasons why this LP should be have an exponential number of variables and constraints must be overcome.  For example, our mechanisms elicit from each buyer and each stage only the value of the buyer for the item in that stage. Thus, for our choice of incentive compatibility, naively there are exponentially many ways a buyer can misreport: when strategizing about what to report on the current stage, our buyers consider all (exponentially many) functions from future realizations of her value to future reports. Carefully defining the IC constraints by backward induction resolves this issue.  A second issue is that one seems to need to ``remember'' in each stage the utility accumulated so far for each buyer in order to achieve ex-post individual rationality, which requires that the total utility of a buyer is non-negative at the end, once all uncertainty has been resolved. We overcome this obstacle by reducing ex-post IR mechanisms to stage-wise ex-post IR mechanisms (the utility in each stage is non-negative).

All these results imply that the seller can increase her revenue by committing to a specific future behavior, presumably through a contract. In Section \ref{sec:No-contract} we consider a closely related question, {\em what is the revenue-optimal design when contracts about future behavior cannot be written and enforced,} and thus the seller cannot commit to an irrational behavior in the future, such as doing something that may be suboptimal at the time? 
We model this question as a form of (rather benign%
\footnote{In the conference version of this paper we called this setting {\em no commitment}. Here we change the name to be consistent with the rich Economics literature on limited commitment.}) {\em limited commitment} where in the second stage the seller will observe all previous communication, update her prior accordingly, and then run the current optimal mechanism (Myerson's mechanism on the updated prior).
For the first stage, the seller designs a truthful%
\footnote{As one should suspect, asking the buyer to report her type on the first stage is with loss of generality; by {\em truthful} we mean that it is incentive compatible for the buyer to follow the protocol.} mechanism and communication protocol.
In this setting, we demonstrate a different facet of the complexity of dynamic mechanisms:  The revenue-optimal randomized mechanism requires the seller and buyer to interact through {\em multiple rounds of communication} in the first stage (we can prove three, and we conjecture an unbounded number).  The idea is that the seller can offer during the first stage, along with the first item, a second ``product'', the {\em OTR}: an opportunity for the buyer to truthfully report more information about her valuation on the second stage. This leads to multiple rounds of communication: bidding for the first item, allocation of the OTR with some probability, reporting the second stage information (if given the OTR), and so on. We see this phenomenon as a different aspect of the difficulty of dynamic mechanism design.

See Table~\ref{table:summary} for a summary of our results.

\begin{table}[ht]
\centering
    \begin{tabular}{ | p{5cm} | p{10cm} |}
    \hline
    Setting & Result \\ \hline
    $2$ correlated stages, $1$ buyer &  Thm~\ref{thm:2-day-np-hard}: Finding the optimum deterministic mechanism is NP-hard. \\ \hline
    $2$ correlated stages, $1$ buyer&  Thm~\ref{thm:FPTAS-unimodular}: If the prices in the first stage are fixed, then the optimum deterministic mechanism can be approximated by an FPTAS. \\ \hline
    $2$ correlated stages, $1$ buyer &  Thm~\ref{thm:constant-support}: If the support of first stage valuations is constant, then the optimum deterministic mechanism can be computed in polynomial time. \\ \hline
    $2$ independent stages, $1$ buyer &  Thm~\ref{thm:independent}: The optimum deterministic mechanism can be computed in polynomial time. \\ \hline
    $D$ correlated stages, $k$ buyers (constant) &  Thm~\ref{thm:lp-manydays-manybidders}: The optimum randomized mechanism can be computed in polynomial time in the number of types and in the number of stages. \\ \hline
    $2$ correlated stages, $1$ buyer &  Thm~\ref{thm:separations}: Separations between social welfare and optimal non-adaptive, adaptive deterministic and adaptive randomized mechanisms. \\ \hline
    $2$ independent stages, $1$ buyer, no contracts &  Thm~\ref{thm:no-contract}: The optimum randomized mechanism requires multiple rounds of communication. \\ \hline    
\end{tabular}
\caption{Summary of main results}
\label{table:summary}
\end{table}

\subsection{Related and Subsequent work}

We briefly discuss research in dynamic mechanism design that is most related to the current work. For an extensive review of the literature see~\cite{bergemann2011dynamic} and \cite{bergemann2019dynamic}. The study of revenue maximization in an environment where the agent's private
information changes over time was initiated by~\cite{baron1984regulation}. Revenue optimal dynamic auctions are studied more recently by~\cite{courty2000sequential}, \cite{esHo2007optimal} and \cite{pavan2009dynamic, pavan2010infinite, pavan2014dynamic}. A common constraint in these works is that the principal has to satisfy all of the sequential incentive constraints, but only a single ex-ante participation constraint.

\paragraph{Limited commitment}
There is a long literature on mechanism design with limited commitment, starting from~\cite{laffont1987comparative, laffont1988dynamics}.
Two recent works (published after the conference version of our paper) that are particularly closely related to the no-contract model we study in Section~\ref{sec:No-contract} are the limited commitment models of~\cite{lobel2019dynamic} and~\cite{doval2018mechanism}. \cite{lobel2019dynamic} study a positive commitment model where the seller can make promises regarding future allocations, but not future mechanisms; that is, the seller can commit to allocating a future item to the buyer, but can make no promises regarding the future mechanism. In our work the seller can instead make promises regarding the way she will update her prior, but no promises about future allocations and payments can be made.~\cite{doval2018mechanism} allow the seller to design a \emph{communication device} that the buyer can interact with. The buyer's input to the communication device is not observed by the seller (and thus, if she is incentivized to do so the buyer can report her full type to the device), and instead the communication device sends a message $m$ to the seller. This message is the only piece of information that the seller has going into the second stage. In our work we study direct communication between the seller and buyer, that is, everything that the buyer signals is accessible to the seller. Thus, as opposed to the setting of~\cite{doval2018mechanism} multiple rounds of communication in the first stage can (and do) benefit the seller. We note that the communication device of~\cite{doval2018mechanism} can simulate infinite rounds of communication in the setting studied here.

Limited commitment mechanisms have also been studied in the context of sellers with private information, e.g.~\cite{Hrner2010SellingI, BabaioffKL12}. Specifically,~\cite{BabaioffKL12} consider the seller's optimal mechanism for selling information; this mechanism depends on the information that the buyer wants to acquire, so it can't immediately be revealed to the buyer, which leads to multi-rounds mechanisms. In contrast, in our model the seller does not have any private information and the additional revenue from interactive mechanisms arises because the buyer {\em wants to report} more information about their private type to incentivize seller's behavior in the second stage.

\paragraph{Subsequent work in Computer Science} Subsequent to the preliminary version of this paper, \cite{ashlagi2016} provide characterizations of the optimal ex-post IR, periodic incentive compatible dynamic mechanism, with $m$ \textit{independent} stages and $n$ buyers. \cite{ashlagi2016} show that there exists an optimal mechanism that has stage utility equal to zero for all stages, except maybe the last, where the seller might have to pay the buyers. Surprisingly, their mechanism can be described via updates, at every stage, to a scalar variable that guides the future allocation and payments. The authors use this characterization to give a mechanism that obtains a $\frac{1}{2}$ approximation to the optimal revenue for the single buyer problem.

\cite{mirrokni2016dynamic} study dynamic mechanisms with an interim IR constraint. They define a class of mechanisms called \textit{bank account} mechanisms. Bank account mechanisms maintain a state variable, the balance, that is updated throughout the execution of the mechanism depending on a ``spending'' and ``depositing'' policy. The allocation and payment at each stage depend on the report and the balance.~\cite{mirrokni2016optimal} study revenue maximization for bank account mechanisms subject to an ex-post IR constraint. In Section~\ref{sec:LP} we use their reduction from ex-post IR mechanisms to stage-wise ex-post IR mechanisms in order to simplify our linear program.

\cite{mirrokni2020non} study the design of \textit{non-clairvoyant} dynamic mechanisms. An oblivious dynamic mechanism decides on the allocation and payment for stage $k$ using information only about the current and past stages, i.e. it is oblivious about the buyers' value distributions $D_{k+1},\dots,D_m$. Their mechanism \textit{ObliviousBalance} runs at each stage a combination of Myerson's optimal auction, a second price auction, and the money burning mechanism of~\cite{hartline2008optimal}. Their mechanism obtains a $\frac{1}{5}$ approximation to the optimal revenue. 

\cite{mirrokni2019optimal} show that optimal dynamic auctions are virtual welfare maximizers, under some definition of virtual welfare. Specifically, in each stage $d$ the optimal dynamic auction is a second price auction on an appropriately defined virtual value space. In order for the virtual welfare maximizing allocation rule to be monotone, ironing is necessary, but unlike ironing in Myerson's optimal auction, the ironing
step is interdependent across the values of different buyers.

\cite{liu2018competition} study prior-independent dynamic mechanisms, and more specifically, they show an analogue of the classic Bulow-Klemperer result in auction theory. $m$ items are auctioned off in $m$ consecutive stages to $n$ independent and identical buyers.
They show that recruiting $3n$ more buyers and executing a simple second price auction at each stage yields more revenue than the optimal dynamic auction, even when the buyers’ values are correlated across stages, under a monotone hazard rate assumption on the stage (marginal) distributions. This result can be turned into a $4$-approximation algorithm by simulating the $3n$ additional buyers. For the general case, beyond marginals that have monotone hazard rate, we are not aware of any algorithms that give a constant approximation,  even for a single buyer and two correlated stages.

\cite{agrawal2018robust} study revenue maximization for a buyer who is not fully rational, but instead uses some specific form of learning behavior. They give a simple state-based mechanism that gives simultaneously a constant approximation to revenue extracted by the optimal auction for a $k$-lookahead buyer for all $k$, a buyer who is a no-regret learner, and a buyer who is a policy-regret learner.

\section{The mechanism}
\label{sec:prelim}

The {\sc Two-stage Mechanism} problem involves  a seller with $2$ items to sell to a single buyer. The items are sold in two consecutive stages, one item per stage. The buyer privately learns her types over time. In the beginning of stage $i$ she learns her type for that stage. The buyer can have one of $|V^{(1)}|$ types in the first stage.  The $i$-th first stage type $t_i$ occurs with probability $Pr[t_i]$. A buyer with first stage type $t_i$  has valuation $v^{(1)}_i$ for the first item, and a probability distribution $f_i$ over valuations/second stage types for the second item. We assume that each type $t_i$ has a different first stage valuation $v^{(1)}_i$, and therefore we use type and valuation interchangeably.\footnote{We note that this restriction makes the problem computationally easier.} The joint distribution is known to the seller and the buyer. We write $v^{(2)}$ for the valuation of the second stage item, and $V^{(2)}$ for the support of the distribution in stage $2$. We assume that $0$ is always in the support of $f_i$, for all $i$.

The order of events is as follows: (1) The buyer privately learns her type $t_i \in V^{(1)}$ for stage $1$, and sends a message to the mechanism, (2) the seller implements an allocation $x_1 \in \{ 0, 1 \}$ for item $1$ and charges a payment $p$, (3) the buyer obtains stage utility $u_1 = v^{(1)}_i \cdot x_1 - p$, (4) the buyer privately learns her value (second stage type) $v^{(2)}$ for the second stage item, and sends a message to the mechanism, (5) the seller implements an allocation $x_2 \in \{ 0, 1 \}$ for item $2$ and charges a payment $q$, (6) the buyer obtains stage utility $u_2 = v^{(2)} \cdot x_2 - q$. The buyer's overall utility is $u_1 + u_2$, i.e. the buyer is additive without discount.  
Motivated by practical considerations (e.g. in advertising auctions bidders must submit a bid, not an abstract signal/message) we restrict the set of messages the buyer can send to the mechanism to be types, i.e. the buyer sends a type $b^{(1)} \in V^{(1)}$ and a value $b^{(2)} \in V^{(2)}$ in the first and second stage, respectively.

We focus on dynamic, direct and deterministic mechanisms.  A deterministic mechanism for this problems consists of an allocation, payment rule pair $(x_1, p)$ for stage $1$, and an allocation, payment rule pair $(x_2,q)$ for stage $2$. $x_1$ and $p$ map $V^{(1)}$ to $\{ 0 , 1 \}$ and $\mathbb{R}$, respectively. $x_2$ and $q$ map $V^{(1)} \times V^{(2)}$ to $\{ 0 , 1 \}$ and $\mathbb{R}$, respectively. That is, the allocation and payment in the second stage can depend on the report in the first stage. Note, however, that the buyer does not report all the values she has observed so far in each stage $2$, but only her value in stage $2$. Our goal is to design a mechanism that maximizes the seller's expected revenue, subject to \textit{dynamic incentive compatibility} and \textit{ex post individual rationality}.

The dynamic revelation principle~\cite{myerson1986multistage,sugaya2021revelation} states that there is no loss of generality in restricting attention to dynamic direct mechanisms where buyers report their information truthfully (the buyer's reported type coincides with her true type). However, the dynamic revelation principle does not require that a buyer reports their information truthfully in the second stage after a lie in the first stage (see~\cite{pavan2014dynamic} for an application of the dynamic revelation principle in a similar context). As we see later in this section, our mechanisms will satisfy a stronger notion of incentive compatibility, where a buyer reports her true value in the second stage, no matter what the report in the first stage was. This should be intuitively obvious since the second stage utility $v^{(2)} x_2( t_i, v^{(2)} ) - q( t_i, v^{(2)} )$ is unaffected by the first stage type $t_i$ (beyond its effects on the mechanism itself).
Dynamic incentive compatibility (DIC) can be defined by backward induction. In the last stage, assuming honest reports so far, it should be incentive compatible for the buyer to report her true type (using the standard notion of incentive compatibility in static mechanism design). That is, assuming an honest first stage report $t_i \in V^{(1)}$, and all $v^{(2)} \in V^{(2)}$ (such that $v^{(2)}$ is drawn from $f_i$ with positive probability), and for all $b^{(2)} \in V^{(2)}$ we have
\begin{equation}\label{eq: second stage ic}
v^{(2)} x_2( t_i, v^{(2)} ) - q( t_i, v^{(2)} ) \geq v^{(2)} x_2( t_i, b^{(2)} ) - q( t_i, b^{(2)} ).
\end{equation}
Then, in the first stage, it should be incentive compatible for the buyer to report her true type. Since the second stage mechanism satisfies Equation~\ref{eq: second stage ic}, when calculating her expected future utility after being honest in stage one, she should assume that she will report her true second stage type honestly as well.
Let $u_2( t' ; v^{(2)}, b^{(2)} ) = v^{(2)} x_2( t', b^{(2)} ) - q( t', b^{(2)} )$ be the second stage utility when the buyer with second stage value $v^{(2)}$ reports $b^{(2)}$, and her first stage report was $t'$. For all $t_i, t' \in V^{(1)}$ we have that
\[
v^{(1)}_i x_1( t_i ) - p( t_i ) + \mathbb{E}_{v^{(2)} \sim f_i} \left[ u_2( t_i; v^{(2)}, v^{(2)} ) \right] \geq v^{(1)}_i x_1( t' ) - p( t' ) + \mathbb{E}_{v^{(2)} \sim f_i} \left[ \max_{b^{(2)}} u_2( t'; v^{(2)}, b^{(2)} ) \right].
\]

A mechanism is ex-post individual rational (ex-post IR)  if it guarantees non-negative utility for the buyer. That is, the buyer's utility should be non-negative in all outcomes output by the mechanism, assuming she reports truthfully. For stage $2$ this implies that  for all $t_i\in V^{(1)}$ and all $v^{(2)} \in V^{(2)}$ the mechanism satisfies 
\[
v^{(1)}_i x_1( t_i ) - p( t_i ) + v^{(2)} x_2( t_i, v^{(2)} ) - q( t_i, v^{(2)} ) \geq 0. 
\]
Since $v^{(2)}$ occurs with non-zero probability for all $t_i$, it must be that in the first stage allocation and payments satisfy $v^{(1)}_i x_1( t_i ) - p( t_i ) - q( t_i, 0 ) \geq 0$. Without loss of generality we therefore have that $v^{(1)}_i x_1( t_i ) - p( t_i ) \geq 0$ (since, if $q( t_i, 0 )$ is negative, we can always increase it to zero, and decrease $p( t_i )$ appropriately, without affecting incentives).
As we will see in Section~\ref{sec:LP}, and as already shown by~\cite{mirrokni2016optimal}, an ex-post IR mechanism can be turned into a stage-wise ex-post IR mechanism (non-negative utility in each stage) with at least as much expected revenue.

\subsection{Semi-adaptive mechanisms}

What can we say about the structure of deterministic revenue-optimal  dynamic mechanisms? The point of this paper is that they are quite complex. Nonetheless, we can significantly restrict our search space. 
So far we have allowed mechanisms to be {\em adaptive}: the allocation and payment in the second stage depend on both the first stage and second stage reports. Call a mechanism {\em semi-adaptive} if it depends only on the buyer's declared type. Slightly overloading notation, in such a mechanism the buyer reports a type $t'$ for the first stage, and the seller, based on it, produces a price $p(t')$ for the first stage and a price $q(t')$ for the second (a price can be infinity, in which case the seller does not offer this item). Notice that these mechanisms satisfy a much stronger notion of truthfulness: the buyer is honest in stage two even after a lie in stage one.
This seemingly weaker protocol is optimal.
\begin{lem}
\label{det-semi-adaptive}
There is a revenue-optimal deterministic mechanism that is semi-adaptive.
\end{lem}

Similar results are known for deterministic contracts in sequential screening models. For example,~\cite{courty2000sequential,krahmer2011optimal} show that optimal deterministic contracts can be implemented as a menu of option contracts. The proof Lemma~\ref{det-semi-adaptive} is similar to the proof of these results;  we provide it here for completeness.

\begin{proof}[Proof of Lemma~\ref{det-semi-adaptive}]
Suppose that in a deterministic revenue-optimal mechanism satisfying dynamic incentive compatibility and ex-post individual rationality, 
the price on the second stage $q(t_i, v^{(2)})$ depends on the buyer's types on both stages, $t_i = (v^{(1)}_i, f_i)$ and $v^{(2)}$. 
Fix any first-stage type $t_i = (v^{(1)}_i, f_i)$, 
and let $u^* = \arg \min _{u \geq q(t_i,u)} q(t_i,u)$ be the second stage valuation which minimizes that second stage price, 
among all second-stage valuations for which the item is allocated.

\begin{itemize}[leftmargin=*]
\item $v^{(2)} > q(t_i,u^*)$: the buyer could declare type $u^*$ in order to buy the item for the minimum price. Therefore, since the mechanism is incentive compatible, it must charge $q(t_i,v^{(2)}) = q(t_i,u^*)$.
\item $v^{(2)} < q(t_i,u^*)$: we can assume without loss of generality that the price is again $q(t_i,u^*)$, since the buyer would anyway not buy the item for the current price $q(t_i,v^{(2)}) ( \geq q(t_i,u^*) )$.
\item $v^{(2)} = q(t_i,u^*)$: the buyer's utility remains zero for any price $q(t_i,v^{(2)}) \geq q(t_i,u^*)$; however, the seller's revenue is maximized when selling the item for price $q(t_i,v^{(2)}) = q(t_i,u^*)$.
\end{itemize}

Finally, any buyer with a different first-stage type $t_j = (v^{(1)}_j, f_j)$ that attempts to deviate and declare type $t_i$ on the first stage, would also, with loss of generality, deviate her second-stage valuation to $u^*$.
\end{proof}

Note that it is not clear whether the same is true for randomized mechanisms. Our proof crucially used the fact that, no matter what my true valuation is, the buyer prefers a smaller posted-price. But, in the case of randomized mechanisms, we do not have an order over distributions of prices: one distribution may be more attractive to one type, while another distribution is more attractive for another type. For example, consider a lottery that offers the item for a price of $8$ with probability $1/2$, and with probability $1/2$ doesn't offer the item. This lottery looks more attractive than a posted price of $12$ to a buyer whose value is in the interval $[8,16)$. If the buyer's value is larger than $16$, then the posted price looks more attractive.


\subsection{Simplifying incentive compatibility constraints}
Once we restrict ourselves to semi-adaptive mechanisms, the mechanism becomes two functions $p,q$ mapping the support of the prior to the reals.  Let $p(t)$ be the price charged for the first stage item, and $q(t)$ the price charged for the second stage item, when the buyer reports a type $t$. Let $u(t,t')$ be the expected utility of the buyer when her true type in the first stage is $t$ and she declares $t'$. This utility is the utility of the first stage plus the expected utility for the second stage, when offered a take-it-or-leave-it price $q(t')$. We want $u(t,t) \geq u(t,t')$ for all $t, t' \in V^{(1)}$.

A nice, compact form to express our DIC constraints is using the \emph{reverse} cumulative distribution of the second stage: $\bar{F}_t(x) = Pr[ v^{(2)} \geq x | t ]$. The observation here is that the buyer's second stage utility for a type $t$, when charged price $q$ in stage 2, is $\int_{q}^{\infty} \bar{F}_t(x) dx$. So, for any two possible first-stage types $t = (v^{(1)},\bar{F}_t)$ and report $t' = (b^{(1)}, \bar{F}_{t'})$, the IC constraints are:

\begin{itemize}

\item If both $t$ and $t'$ receive the item on the first stage:
	\[ \int_{q(t')}^{q(t)} \bar{F}_{t'}(x) dx \geq p(t') - p(t) \geq \int_{q(t')}^{q(t)} \bar{F}_{t}(x) dx. \]
\item If neither receives the item on the first stage: $q(t) = q(t')$.
\item If $t'$ receives the item on the first stage, but $t$ does not:

\begin{align*}
\int_{q(t)}^{q(t')} \bar{F}_{t}(x) dx \geq v^{(1)} - p(t'), && b^{(1)} - p(t') \geq \int_{q(t)}^{q(t')} \bar{F}_{t'}(x) dx.
\end{align*}

\end{itemize}

We write $Rev(t_i,p_i,q_i)$ to denote the seller's revenue, when charging the ($i$-th) type $t_i$ the first stage price $p_i$ and second stage price $q_i$. We will write $Rev^{(1)}$ or $Rev^{(2)}$ when we want to refer only to the revenue from the first or second stage, respectively.


\section{Deterministic mechanisms are \NP-hard}
\label{sec:reduction}


In this section we briefly describe the construction of our main result.

\begin{thm}
\label{thm:2-day-np-hard}
Finding the optimal deterministic two-stage mechanism is strongly \NP-hard.
\end{thm}

\subsection{Outline}
Given a graph $G = (V,E)$, we construct a joint distribution of valuations such that the optimal feasible revenue (for deterministic DIC and ex-post IR mechanisms) is a strictly increasing function of the maximum independent set in $G$. 

More specifically, with each vertex $i \in G$ we associate a type $t_i$ with valuation $v^{(1)}=B_i$ for the first stage. For each type $t_i$, we want to have two candidate price pairs: $(B_i,C_i)$ or $(A_i,D_i)$. The former will give more revenue, but for every edge $(i,j) \in E$, it will be a violation of the DIC constraints to offer to type $t_i$ the pair $(B_i,C_i)$ and to type $t_j$ the pair $(B_j,C_j)$. Thus, if the difference $r$ in expected revenue between $(B_i,C_i)$ and $(A_i,D_i)$ is the same for all $i$, charging the former for all the vertices of an independent set $S$ and the latter for the rest of the vertices will be a valid pricing, with revenue $\sum_{i\in V} Rev(t_i,A_i,D_i) + r|S|$.

In order to impose the desired structure between $(B_i,C_i)$ and $(A_i,D_i)$, we have an extra type $t^*$, with valuation $v^{(1)}=P^*$ on the first stage. 
$t^*$ appears with very high probability. This way we make most of our revenue from this type, and thus force every revenue-optimal mechanism to charge this type the optimal prices, $(P^*, Q^*)$. The IC constraints for type $t^*$ introduce strong restrictions on the prices for other types. 

The restriction on each edge $(i,j)$ is forced by the IC constraints for $t_i$ and $t_j$, via a careful construction of the distributions over their second-stage valuations. 
The second stage distribution of $t_i$ is $\bar{F}_{t_i}$ and is carefully tuned in the range $[D_{j-1}, D_j]$ (note that $C_j$ is in this interval) depending on whether or not $(i,j) \in E$. 
For ease of notation we write $F_i$ instead of $F_{t_i}$. See Figure~\ref{fig:Fi}.

\begin{figure}[htpb]
\begin{tikzpicture}[scale = 0.75, decoration={markings,
  mark=between positions 0.2 and 1 step 30pt
  with { \draw [fill] (0,0) circle [radius=2pt];}}]

\draw[->] (0,0) -- (16,0) node[anchor=north] {$x$};

\draw	
		(2,0) node[anchor=north] {$C_i$}
		(4,0) node[anchor=north] {$D_i$}
		(7.5,0) node[anchor=north] {$D_{j-1}$}
		(12,0) node[anchor=north] {$C_j$}
		(14,0) node[anchor=north] {$D_j$};
		
\draw	(1,5.2) node{{\scriptsize $\bar{F}_i$ }};
\draw	(1,2.2) node{{\scriptsize $\bar{F}_j$ }};

\draw[->] (0,0) -- (0,6) node[anchor=east] {$Pr[v^{(2)} \geq x]$};

\draw[dotted] (2,0) -- (2,6);
\draw[dotted] (4,0) -- (4,6);
\draw[dotted] (7.5,0) -- (7.5,6);
\draw[dotted] (12,0) -- (12,6);
\draw[dotted] (14,0) -- (14,6);

\draw[very thick] (0,5) -- (2,5) -- (2,4.5) -- (4,4.5) -- (4,4);

\path[postaction={decorate}] (4,3) to (7.5,3);

\draw[very thick,dashed,green] (7.5,4) -- (7.5,3.5) -- (12,3.5) -- (12,2.3) -- (14,2.3) -- (14,1)-- (14.8,1);
\draw[very thick,dotted,red] (7.5,4) -- (7.5,3) -- (12,3) -- (12,2.8) -- (14,2.8) -- (14,1)-- (14.8,1);

\draw[<->] (10,3.5) -- (10,3) node[anchor=east] { };
\draw[<->] (13,2.8) -- (13,2.3) node[anchor=east] { };

\draw[solid] (0pt,0pt) -- (50pt,0pt);

\draw[very thick,blue] (0,2) -- (4,2);
\draw[very thick,blue] (7.5,2) -- (12,2) -- (12,1.5) -- (14,1.5);
\draw[very thick,loosely dashed,blue] (14,1.5) -- (14,1) -- (14.8,1);

\end{tikzpicture}
\caption{\label{fig:Fi}$\bar{F}_i$ when there is (\textcolor{red}{dotted}) an $(i,j)$ edge for $j > i$, and when there isn't (\textcolor{green}{dashed}).}
\end{figure}

\subsection{Construction}

The distribution of valuations on the first stage is rather simple. Let $n=|V|$ denote the number of vertices in $G$. With probability $1-p$, the buyer is of type $t^*$ and has first-stage valuation $v^{(1)} = P^* = n$; with probability $p\cdot w_i$, the buyer is of type $t_i$ and has first-stage valuation $v^{(1)} = B_i = n^2 + 2n + 1 - i$, for $i \in [n]$. The parameters $p$ and $w_i$ are defined later. Notice that the first stage has support of size $n+1$.

We will show that it is always possible to charge type $t_i$ either her full value $B_i$ on the first stage, or slightly less: $A_i = B_i - \epsilon$, for $\epsilon = 1/n^2$. For type $t^*$, we always want to charge the full price, $P^*$. Observe that 
\begin{gather*}
P^* < A_n < B_n < \dots < A_1 < B_1 . 
\end{gather*}

For the second stage we are interested in prices $C_i$ or $D_i$ for $t_i$, and $Q^*$ for $t^*$.
Although we only have $n$ types, 
it will be convenient to think about two more special prices, which we denote $C_{n+1}$ and $D_{n+1}$.
We define $C_i$, $D_i$ and $Q^*$ later; 
for now let us mention that 
\begin{gather*}
C_1 < D_1 < \dots < C_n < D_n < C_{n+1} < D_{n+1} < Q^*. 
\end{gather*}


\subsubsection{Second stage valuations}

The crux of the reduction lies in describing the distributions of the second-stage valuations for each type.
It will be convenient to describe the cumulative distributions $\bar{F}_i(z) = \Pr [ v^{(1)} \geq z | t_i ]$ and $\bar{F}_*(z) = [ v^{(1)} \geq z | t^* ]$. 

The choices of the cumulative distributions in our construction are summarized in Table \ref{table:F_i}, in~\ref{app:reduction_construct}. Type $t_i$ never has nonzero second-stage valuation less than $C_i$, thus the cumulative distribution $\bar{F}_i(x)$ for $x \in (0,C_i)$ is $h_i = \gamma^{-4i}$, for $\gamma = 1+1/n$. Intuitively, this will make $C_i$ an attractive price for the seller. Notice that $\gamma^n \approx e$ is a constant.

At each special price thereafter, $\bar{F}_i$ decreases by some multiplicative factor that is related to $\gamma$.
The exact value of $\bar{F}_i(x)$ for $x \in (D_{j-1}, D_j)$ depends on whether there is an edge $(i,j)$ in $G$.\footnote{For the special prices, $C_{n+1}$ and $D_{n+1}$, assume that all $\bar{F}_i$'s behave as in the ''no edge`` case.}
After $D_{n+1}$, the distribution for all types $t_i$ is the same. $\bar{F}_i$ halves at each $2^k D_{n+1}$, and it is $0$ after  $Q^* = 2^{8\gamma^{4(n+1)}} D_{n+1}$. 

The distribution $\bar{F}_*$ is simpler to describe. $\bar{F}_*(x)$ is equal to $h_1$ for $x \in (0, C_1)$,  and decreases by a multiplicative factor of $\gamma^2$ at each special price thereafter. Type $t^*$ never has valuations between $D_{n+1}$ and $Q^* = 2^{8\gamma^{4(n+1)}} D_{n+1}$.
$\bar{F}_*$ is constant in this domain; in particular $\bar{F}_*(x) = h_* = \frac{A_{n+1} - P^*}{Q^* - D_{n+1}}$. Intuitively, this will make $Q^*$ an attractive price for the seller. Notice also the contrast between this and the gradual decrease of $\bar{F}_i$'s.

We describe how to fix the last parameters in~\ref{app:reduction_construct}. 
We prove the soundness of our construction in~\ref{app:reduction_sound}; the proof of completeness is postponed to~\ref{app:reduction}.

\section{Deterministic cases solvable in polynomial time}
\label{sec:special_deterministic}

We have three positive results for deterministic mechanisms. We give here a brief sketch of the proofs and postpone further details to the appendix.

Our first result shows that given first stage prices, optimizing over second stage prices (in a way that the joint mechanism is DIC) can be approximated by a fully  polynomial-time approximation scheme (FPTAS), that is, an algorithm that for all $\epsilon > 0$ runs in time polynomial in the size of the input and $\frac{1}{\epsilon}$, and returns a mechanism whose revenue is (at least) a $(1-\epsilon)$ factor of the optimal revenue. For this result, we subdivide the range of second-stage prices into a grid of accuracy $1/K$ (by taking $K$ large enough we obtain an FPTAS), and consider $0-1$ variables who act like indicators for the event ``the price $q_i$ is not larger than the $j$th grid point.''  It turns out that the DIC constraints become totally unimodular. Therefore, the corresponding linear program (which can be solved in polynomial time) has an integral optimum. For more details see~\ref{app:given-first-day}.

\begin{restatable}{theom}{FPTASunimodular}
\label{thm:FPTAS-unimodular}
If the prices in the first stage are fixed, then the optimum deterministic mechanism can be approximated by an FPTAS.
\end{restatable}

Our second result says that if the number of first stage types $|V^{(1)}|$ is a constant, the NP-hardness result no longer holds, and the optimum deterministic mechanism can be computed in polynomial time (polynomial in the support of the second stage distribution, $|V^{(2)}|$).
For this result, we notice that once we have fixed, for each type, the interval between second-stage valuations in which the second-stage price for this type lies (larger than all if the item is not allocated to this type), then the DIC constraints become linear inequalities.  This is because the cumulative distributions are piecewise constant, and thus the integrals in the DIC constraints become linear functions once we know the interval in which the bounds of each integral lie.  Since there are $|V^{(2)}|^{|V^{(1)}|}$ ways to map the $|V^{(1)}|$ second-stage prices to the $|V^{(2)}|$ second-stage intervals, and we assume that $|V^{(1)}|$ is constant, we only need to solve a polynomial number of LPs.

\begin{thm}
\label{thm:constant-support}
If the number of types (the support of first-stage valuations) is constant, then the optimum deterministic mechanism can be computed in polynomial time.
\end{thm}

Our last result states that for independent stages, the optimum deterministic mechanism can be computed in polynomial time.
We observe that once correlation is removed the IC constraints between different types are transitive: satisfied DIC constraints between types $t_i$,$t_j$ and $t_j$,$t_k$ imply satisfied constraints between $t_i$ and $t_k$. Moreover, the allocation function (for the first stage item) is monotone, and the prices of the types that are allocated the first stage item have the following structure. Either the first stage price is equal to the valuation, or the second stage price is zero. Using these observations we can significantly reduce the search space and find the optimal mechanism in polynomial time, essentially by enumerating. For more details see~\ref{app:independent}.

\begin{restatable}{theom}{THMindependent}
\label{thm:independent}
If the stages are independent, the optimum deterministic mechanism can be computed in polynomial time.
\end{restatable}

\section{Randomized adaptive mechanisms: Multiple stages, multiple buyers}
\label{sec:LP}

Can we do better by using randomization?  In this section, we construct an LP for the optimum randomized adaptive mechanism for multiple buyers and multiple stages. Specifically, we study the case of $k$ independent buyers and $D$ stages. We show how to compute the optimal randomized mechanism under two definitions of dynamic incentive compatibility (DIC): DIC in dominant strategies (D-DIC) and DIC in Bayesian strategies (B-DIC). We also show that it's possible to compute the optimal mechanism when these notions are satisfied only under honest private histories (``on-path'' truthful), as well as under all private histories (i.e. truthfulness even after a past lie).

The problem we consider here is the natural generalization of the problem in Section~\ref{sec:prelim}.
There are $k$ buyers and $D$ items, sold in $D$ consecutive stages, one item in each stage.
Buyers have additive utility functions, without discounting across time, that is, the final utility of a buyer is the non-discounted sum of her stage utilities.
In stage $d$ buyer $i$ has type/valuation $v^{(d)}_i \in V_i^{(d)}$ that is drawn from a distribution $f_i^{(d)}( v^{(1)}_i, \dots, v^{(d-1)}_i )$. In other words, the type in stage $d$ for buyer $i$ can depend on the realized types from previous stages for buyer $i$ (but is independent of the types of other buyers in this stage or any other stage). The joint distribution is known to the seller, as well as all the buyers, but the realizations are private. That is, the input to the seller's problem is for every buyer $i \in [k]$, and every possible vector of valuations across stages $\vec{v}_i = (v^{(1)}_i,\dots, v^{(D)}_i)$ the probability $Pr[ \vec{v}_i ]$ that this vector of values is realized. We assume that the value zero is in the support of every stage distribution (no matter what the values were for the previous stages), for all buyers. Let $|T_i| = \prod_{j=1}^D |V_i^{(j)}|$ be the number of types of buyer $i$. We note that typically, $|T_i|$ grows exponentially with $D$ (and in what follows we give algorithms that run in time polynomial in the $|T_i|$s). Also, let $V^{[d]} = \times_{i=1}^k V_i^{(d)}$ be the set of possible type vectors in stage $d$.

A dynamic (direct) mechanism consists of, for each buyer $i \in [k]$ and stage $d \in [D]$, an allocation rule $x_i^d$ and a payment rule $p_i^d$ that maps reported values $v^{[d]} = ( v_1^{(d)}, \dots, v_k^{(d)} ) \in V^{[d]}$ to an allocation in $[0,1]$ and a payment in $\mathbb{R}$, respectively. Both of these functions can depend on the public history of reported valuations so far $h^{[d-1]} = (v^{[1]},\dots,v^{[d-1]})$, as well as the outcomes of the mechanism (allocations and payments) $\omega^{[d-1]} = (\omega_1,\dots,\omega_{d-1})$, where $\omega_t = (x,q_1, \dots, q_k)$ if in stage $t$ the item's allocation was $x \in \{0 , 1, \dots, k \}$ and the payment of buyer $i$ was $q_i$.
We therefore write $x_i^d( h^{[d-1]}, \omega^{[d-1]};  v^{[d]} )$ for the allocation and $p_i^d( h^{[d-1]}, \omega^{[d-1]};  v^{[d]} )$ for the payment. Note that the number of outcomes is exponential in the number of stages; as we will see later in this section, without loss of generality we can focus on mechanisms whose allocation and payment rules in stage $d$ do not depend on $\omega^{[d-1]}$.

The order of events in stage $d$ is as follows: 
\begin{enumerate}[label=(\roman*)]
\item Each buyer $i$ privately learns her type $v_i^{(d)}$ (which is drawn from a distribution that is possibly correlated with her true past types).
\item Each buyer $i$ reports a type $\hat{v}_i^{(d)}$ to the mechanism.
\item The mechanism allocates the item to buyer $i$ with probability $x_i^d( h^{[d-1]}, \omega^{[d-1]};  \hat{v}^{[d]} )$ and charges $p_i^d( h^{[d-1]}, \omega^{[d-1]};  \hat{v}^{[d]} )$ (and note that the allocation and payment could be correlated). 
\item Each buyer $i$ obtains (realized) stage utility $u^{(d)}_i = v_i^{(d)} - p_i^d$ if they were allocated the item (and $u^{(d)}_i =  - p_i^d( . )$ if they were not allocated but were charged a payment).
\item The public history so-far is updated to include the reported types  $\hat{v}^{[d]}$, and the outcome in stage $d$ (which buyer got the item  and payments).
\end{enumerate}

We note that, importantly, when the buyer decides what to report in step (ii), they evaluate the stage utility from stage $d$ in expectation over the realization of the stage $d$ lottery, i.e. they calculate $u^{(d)}_i = v_i^{(d)} \cdot x_i^d (.; \hat{v}^{[d]}) - p_i^d(.;\hat{v}^{[d]})$.

\paragraph{Feasibility} A mechanism is feasible if it allocates the item in stage $d$ to at most one buyer in expectation. That is, for all $d \in [D]$, for all histories $h^{[d-1]}$, $\omega^{[d-1]}$ and possible reports $v^{[d]}$, $\sum_{i \in [k]} x_i^d( h^{[d-1]}, \omega^{[d-1]};  v^{[d]} ) \leq 1$.

\paragraph{Incentive compatibility} 
The dynamic revelation principle~\cite{myerson1986multistage,sugaya2021revelation} states that there is no loss of generality in restricting attention to dynamic direct mechanisms where buyers report their information truthfully ``on-path'', that is, the mechanism does not need to guarantee truth-telling after a lie in a previous stage. This distinction was not important for the deterministic two stage mechanism of Section~\ref{sec:prelim}, since optimal mechanisms for the weaker notion ended up being semi-adaptive (which satisfies the stronger notion). However, this might not be the case for the multi-agent, multi-stage problem we study here (since, e.g., the future utility of a buyer in stage $2$ is calculated using both the public history and the private history, while her stage utility only depends on the public history and the current value). It is intuitively clear that asking for truth-telling even after lies is a strictly harder computational task, since it needs to take care of multi-stage deviations (and there are exponentially many of these, even for two stages). We show that, in our setting, optimizing with respect to mechanisms that satisfy either notion of incentive compatibility is computationally feasible. Roughly speaking, optimizing with respect to the stronger notion will boil down to writing a linear program with an IC constraint for each buyer and each public history and private history (which is a polynomial number of constraints with respect to the size of the input/the joint distribution over types). Optimizing with respect to the weaker notion boils down to writing an IC constraint for each buyer and each public history (since the assumption here is that a buyer has been honest so far, her private history matches the public history), which is a smaller number of constraints. For the remainder of this section we will focus on the computationally harder task of computing the optimal mechanism that satisfies the stronger notion of incentive compatibility; we explain how our solution can be adjusted to work for the weaker notion after we prove our main theorem.

A mechanism is dynamic incentive compatible (DIC) if, informally, each buyer $i$ is better off reporting her true value $v^{(d)}_i$ in each stage $d$. We consider two variations of DIC. The first one, DIC in dominant strategies (D-DIC) requires that truthful reporting in stage $d$ maximizes the utility of each buyer, regardless of the past (i.e. for all histories), and regardless other buyers' reports.
The second one, DIC in Bayesian strategies (B-DIC) requires that truthful reporting in stage $d$ maximizes the utility of each buyer, regardless of the past, and in expectation over the other buyers' truthful reports in this stage and the future.
We note that often D-DIC is too strict of a requirement in general dynamic mechanism design settings and might come with loss of generality (see for example~\cite{bergemann2010dynamic,bergemann2019dynamic}). However, in our setting, natural mechanisms satisfy this seemingly strong requirement. For example, running a second-price auction in each stage, which is also welfare optimal, satisfies this notion. We therefore suggest that, even though this class might be with loss of generality, it is worth optimizing over.

We start by defining D-DIC.
Naively, since re-reporting of past types is not allowed, the incentive constraints must account for multi-shot deviations. 
For example, lying only in stage one could decrease a buyer's utility, and so does lying only in stage two; and yet lying on both stages increases her expected utility.  Therefore, when deciding to lie, the buyer must choose in advance among all exponentially many different strategies that deviate from the truth now and in the future. Notice that the number is exponential even for two stages: the optimization is over functions from true pairs of types to declared pairs of types. This is the first obstacle in writing a polynomial size LP. However,  similarly to Section~\ref{sec:prelim}, we can define both variations of DIC via backward induction. 
Specifically, for the last stage $D$, for all buyers $i \in [k]$, for all histories of reported types $h^{[D-1]} = (b^{[1]},\dots,b^{[D-1]})$, and past outcomes of the mechanism $\omega^{[D-1]} = (\omega_1,\dots,\omega_{D-1})$, and for all last stage reports $v^{(D)}_{-i}$ of the other buyers, a D-DIC mechanism guarantees that the (expected over the randomness of the mechanism) stage utility of $i$ in stage $D$ when reporting her true value $v^{(D)}_i$ is at least her utility from reporting a different value $\hat{v}$:
\begin{multline*}
v^{(D)}_i \cdot x^D_i ( h^{[D-1]}, \omega^{[D-1]};  v^{(D)}_{-i}, v^{(D)}_i ) - p^D_i ( h^{[D-1]}, \omega^{[D-1]};  v^{(D)}_{-i}, v^{(D)}_i ) \geq \\
v^{(D)}_i \cdot x^D_i ( h^{[D-1]}, \omega^{[D-1]};  v^{(D)}_{-i}, \hat{v} ) - p^D_i ( h^{[D-1]}, \omega^{[D-1]};  v^{(D)}_{-i}, \hat{v} ).
\end{multline*}

\noindent Let $U_i^{D}(h^{[D-1]}, \mathrm{\bar{h}}_i^{[D-1]},\omega^{[D-1]},v^{(D)}_{-i})$ be the expected utility of buyer $i$ for participating in the last stage $D$, when the public history is $h^{[D-1]},\omega^{[D-1]}$, the private history of buyer $i$ is $\mathrm{\bar{h}}_i^{[D-1]} = (v^{(1)}_i,\dots,v^{(D-1)}_i)$, and the reports of other players in stage $D$ are $v^{(D)}_{-i}$. By the previous argument, this utility is calculated assuming truthful reporting for buyer $i$ in stage $D$:
\[
U_i^{D}(h^{[D-1]}, \mathrm{\bar{h}}_i^{[D-1]},\omega^{[D-1]},v^{(D)}_{-i}) = \mathbb{E}_{v^{(D)}_i | \mathrm{\bar{h}}_i^{[D-1]} }\left[ v^{(D)}_i \cdot x^D_i ( h^{[D-1]}, \omega^{[D-1]};  v^{[D]}) - p^D_i ( h^{[D-1]}, \omega^{[D-1]};  v^{[D]} ) \right].
\] 
We note that $U_i^{D}(h^{[D-1]}, \mathrm{\bar{h}}_i^{[D-1]},\omega^{[D-1]},v^{(D)}_{-i})$ (and generally $U_i^{d}(.)$) for ``on-path'' truthfulness, i.e. when the mechanism doesn't require truth after lies, is defined slightly differently, by taking an additional maximum over the report in stage $D$, i.e. it is equal to 
\[
\mathbb{E}_{v^{(D)}_i | \mathrm{\bar{h}}_i^{[D-1]} }\left[ \max_{\hat{v}\\
} v^{(D)}_i \cdot x^D_i ( h^{[D-1]}, \omega^{[D-1]};  v^{(D)}_{-i}, \hat{v} ) - p^D_i ( h^{[D-1]}, \omega^{[D-1]};  v^{(D)}_{-i}, \hat{v}) \right]
\].

For the second to last stage, let $u_i^{D-1}(h^{[D-2]},\omega^{[D-2]}; v^{(D-1)}_{-i}, v^{(D-1)}_i \rightarrow \hat{v})$ be the \emph{stage} utility of buyer $i$ when the history is $h^{[D-2]}, \omega^{[D-2]}$, all other buyers report $v^{(D-1)}_{-i}$, her true value in stage $D-1$ is $v^{(D-1)}_i$, but she reports $\hat{v}$. We note that the stage utility does not depend on the buyer's private history $\mathrm{\bar{h}}_i^{[D-2]}$. Let $h^{[D-1]} = [h^{[D-2]}, v^{(D-1)}_{-i}, v^{(D-1)}_i]$ be the public history (in stage $D$) if $i$ reports truthfully in stage $D-1$, and $\hat{h}^{[D-1]} = [h^{[D-2]}, v^{(D-1)}_{-i}, \hat{v}]$ be the public history (in stage $D$) if $i$ reports $\hat{v}$ in stage $D-1$. The private history in stage $D-1$ is $\mathrm{\bar{h}}_i^{[D-1]} = [\mathrm{\bar{h}}_i^{[D-2]}, v^{(D-1)}_i]$ and is independent of the report.

Then, for  D-DIC we have that for all $h^{[D-2]},\mathrm{\bar{h}}_i^{[D-2]},\omega^{[D-2]}, v^{(D-1)}_{-i}, v^{(D)}_{-i}, v^{(D-1)}_{i}$ and $\hat{v}$:
\begin{multline*}
u^{D-1}_i(h^{[D-2]},\omega^{[D-2]}; v^{(D-1)}_{-i}, v^{(D-1)}_i \rightarrow v^{(D-1)}_i) + \mathbb{E}_{\omega_{D-1}}[ U_i^{D}( h^{[D-1]},\mathrm{\bar{h}}_i^{[D-1]}, [\omega^{[D-2]}, \omega_{D-1}], v^{(D)}_{-i} )  ]  \geq \\ 
u^{D-1}_i(h^{[D-2]},\omega^{[D-2]}; v^{(D-1)}_{-i}, v^{(D-1)}_i \rightarrow \hat{v} ) + \mathbb{E}_{\omega_{D-1}}[ U_i^{D}( \hat{h}^{[D-1]},\mathrm{\bar{h}}_i^{[D-1]}, [\omega^{[D-2]}, \omega_{D-1}],v^{(D)}_{-i} )  ].
\end{multline*}
Note that the definition of $U_i^{D}$ includes the expectation over types in stage $D$, therefore in the second term (which captures the remaining utility from participating in the mechanism) we only need to take an expectation over the outcome of the mechanism in stage $D-1$.

Proceeding backwards for all stages $d$, for D-DIC we have that for every buyer $i$, all histories so-far $h^{[d-1]},\mathrm{\bar{h}}_i^{[d-1]},\omega^{[d-1]}$, all reports for the other buyers in stage $d$ ($v^{(d)}_{-i}$) and stage $d+1$ onward ($v^{[d+1]:[D]}_{-i}$), all types of buyer $i$ $v^{(d)}_{i}$ and all stage $d$ deviations $\hat{v}$:
\begin{multline*}
u^{d}_i(h^{[d-1]},\omega^{[d-1]}; v^{(d)}_{-i}, v^{(d)}_i \rightarrow v^{(d)}_i) + \mathbb{E}_{\omega_{d}}[ U_i^{d+1}( h^{[d]},\mathrm{\bar{h}}_i^{[d]}, [\omega^{[d-1]}, \omega_{d}], v^{[d+1]:[D]}_{-i} )  ]  \geq \\ 
u^{d}_i(h^{[d-1]},\omega^{[d-1]}; v^{(d)}_{-i}, v^{(d)}_i \rightarrow \hat{v} ) + \mathbb{E}_{\omega_{d}}[ U_i^{d+1}( \hat{h}^{[d]},\mathrm{\bar{h}}_i^{[d]}, [\omega^{[d-1]}, \omega_{d}],v^{[d+1]:[D]}_{-i} )  ],
\end{multline*}
where (1) $u^{d}_i(h^{[d-1]},\omega^{[d-1]}; v^{(d)}_{-i}, v^{(d)}_i \rightarrow \hat{v})$ is the stage utility of buyer $i$ when the history is $h^{[d-1]}, \omega^{[d-1]}$, all other buyers report $v^{(d)}_{-i}$, her true value in stage $d$ is $v^{(d)}_i$, but she reports $\hat{v}$, (2) $h^{[d]}$ and $\hat{h}^{[d]}$ are the two public histories in stage $d+1$ that correspond to truthful and non-truthful reporting of buyer $i$ in stage $d$, and (3) $U_i^{d+1}( h^{[d]},\mathrm{\bar{h}}_i^{[d]}, \omega^{[d]}, v^{[d+1]:[D]}_{-i} )$ is the expected utility of buyer $i$ for participating in the mechanism in stages $d+1$ through $D$ (which depends on the private history $\mathrm{\bar{h}}_i^{[d-1]}$ and the other buyers' future reports $v^{[d+1]:[D]}_{-i} = (v^{d+1}_{-i}, v^{d+2}_{-i}, \dots )$). The term $U_i^{d+1}( h^{[d]},\mathrm{\bar{h}}_i^{[d]}, \omega^{[d]},v^{[d+1]:[D]}_{-i} )$ is typically referred to as the continuation utility, and is equal to:
\[
\mathbb{E}_{v^{[d+1]:[D]}_i, \omega_{d+1:D} | \mathrm{\bar{h}}_i^{[d]} } \left[ \sum_{t = d+1}^D  u^{t}_i(  [ h^{[d]}, v^{[d+1]:[t-1]}], [ \omega^{[d]}, \omega_{d+1:t-1} ]; v^{(t)}_{-i}, v^{(t)}_i \rightarrow v^{(t)}_i) \right],
\]
where $v^{[z]:[\ell]} = v^{[z]}, \dots, v^{[\ell]}$, and $\omega_{z: \ell} = \omega_{z}, \dots, \omega_{\ell}$.
Equivalently, $U_i^{d+1}( h^{[d]},\mathrm{\bar{h}}_i^{[d]}, \omega^{[d]},v^{[d+1]:[D]}_{-i}  )$ can be written recursively as
\begin{multline*}
\mathbb{E}_{v^{(d+1)}_i, \omega_{d+1} | \mathrm{\bar{h}}_i^{[d]} } \left[ u^{t}_i(  [ h^{[d]}, v^{[d+1]}], [ \omega^{[d]}, \omega_{d+1} ]; v^{(d+1)}_{-i}, v^{(d+1)}_i \rightarrow v^{(d+1)}_i) \right. \\
\left. + U_i^{d+2}( [h^{[d]}, v^{[d+1]}], [\mathrm{\bar{h}}_i^{[d]}, v^{[d+1]}], [\omega^{[d]}, \omega_{d+1} ], v^{[d+2]:[D]}_{-i} ) \right].
\end{multline*}

For ``on-path'' truthfulness, we need to take an additional maximum over the report of stage $d+1$ inside this expectation, i.e.  $U_i^{d+1}( h^{[d]},\mathrm{\bar{h}}_i^{[d]}, \omega^{[d]},v^{[d+1]:[D]}_{-i}  )$ is equal to
\begin{multline*}
\mathbb{E}_{v^{(d+1)}_i, \omega_{d+1} | \mathrm{\bar{h}}_i^{[d]} } \left[ \max_{\hat{v}} u^{t}_i(  [ h^{[d]}, v^{[d+1]}], [ \omega^{[d]}, \omega_{d+1} ]; v^{(d+1)}_{-i}, v^{(d+1)}_i \rightarrow \hat{v}) \right. \\
\left. + U_i^{d+2}( [h^{[d]}, [v^{[d+1]}_{-i}, \hat{v}] ], [\mathrm{\bar{h}}_i^{[d]}, v^{[d+1]}], [\omega^{[d]}, \omega_{d+1} ],v^{[d+2]:[D]}_{-i} ) \right].
\end{multline*}

For B-DIC we need to take an additional expectation over $v^{[d+1]:[D]}_{-i}$, where each $v^{(t)}_{j}$ for buyer $j$ is drawn from the marginal conditioned on $h^{[t-1]}_{j}$, the history of reports of buyer $j$. Note that this argument assumes that all other buyers have been truthful so far. Using identical arguments as above, we have that for every buyer $i$, all histories so-far $h^{[d-1]},\mathrm{\bar{h}}_i^{[d]},\omega^{[d-1]}$, all $v^{(d)}_{i}$, and all stage $d$ deviations $\hat{v}$:
\begin{multline*}
\mathbb{E}_{v^{(d)}_{-i}} \left[
u^{d}_i(h^{[d-1]},\omega^{[d-1]}; v^{(d)}_{-i}, v^{(d)}_i \rightarrow v^{(d)}_i) + \mathbb{E}_{\omega_{d}|\mathrm{\bar{h}}_i^{[d-1]}}[ U_i^{d+1}( h^{[d]},\mathrm{\bar{h}}_i^{[d]}, [\omega^{[d-1]}, \omega_{d}] )  ] \right]  \geq \\ 
\mathbb{E}_{v^{(d)}_{-i}} \left[
u^{d}_i(h^{[d-1]},\omega^{[d-1]}; v^{(d)}_{-i}, v^{(d)}_i \rightarrow \hat{v} ) + \mathbb{E}_{\omega_{d}|\mathrm{\bar{h}}_i^{[d-1]}}[ U_i^{d+1}( \hat{h}^{[d]},\mathrm{\bar{h}}_i^{[d]} [\omega^{[d-1]}, \omega_{d}] )  ] \right].
\end{multline*}

\paragraph{Individual rationality} A mechanism is ex-post individual rational (ex-post IR), if once all uncertainty is resolved, every buyer always has non-negative utility. Given a history of $h^{[d-1]}, \mathrm{\bar{h}}_i^{[d-1]}, \omega^{[d-1]}$ in stage $d$, let $\hat{u}_i^d( h^{[d-1]}, \mathrm{\bar{h}}_i^{[d-1]}, \omega^{[d-1]} )$ be the utility buyer $i$ has accumulated so far, i.e. the sum of values for the items she received minus the payments, where the allocations and payments are according to $\omega^{[d-1]}$, and values are according to $h^{[d-1]}$ and $\mathrm{\bar{h}}_i^{[d]}$.
A dynamic mechanism outputs for every stage $d \in [D]$, history $h^{[d-1]}, \omega^{[d-1]}$ and stage $d$ reports $v^{[d]}$ a distribution over stage $d$ outcomes: an outcome $\omega_d = ( x, q_1, \dots, q_k )$ occurs with probability $Pr[\omega_d]$ (that depends on $h^{[d-1]}, \omega^{[d-1]}$ and $v^{[d]}$) specifies which buyer $x \in \{ 0, 1, \dots, k \}$ won the item (if any) and the payment $q_i$ for each buyer $i$. The exact distribution depends on the correlation between $x_i^d(.)$ and $p_i^d(.)$. Formally, an ex-post individual rational mechanism satisfies, for all stages $d \in [D]$, all public histories $h^{[d-1]}, \omega^{[d-1]}$ and private histories $\mathrm{\bar{h}}_i^{[d-1]}$ that match the public history (i.e. we only guarantee individual rationality if the reports so far have been honest), stage $d$ reports $v^{[d]}$ and stage $d$ outcomes $\omega_d = ( x, q_1, \dots, q_k )$ that occur with positive probability:
\[
\hat{u}_i^d( h^{[d-1]}, \mathrm{\bar{h}}_i^{[d-1]}, \omega^{[d-1]} ) - q_i + v^{(d)}_i \cdot I ( x = i ) \geq 0,
\]
where $I( e )$ is the standard indicator function for an event $e$ (takes the value $1$ if $e$ occurs, and zero otherwise).
Since we guarantee ex-post IR with respect to the public history (that is, we provide no guarantees for buyers that have lied), we will omit the private history of a buyer when talking about the IR constraints, and assume it matches the public history.

Before we give our main result we make a number of simplifications to the expression above for ex-post IR.

\paragraph{Stage-wise ex-post IR versus ex-post IR}
The reason an adaptive mechanism keeps track of the allocations and payments so far is so that is can ensure that the ex-post IR constraint is satisfied. 
Of course, having a variable for every possible past outcome is prohibitively costly: even when the number of types is small, the number of possible outcomes grows exponentially with the number of stages. Fortunately, one can show that without loss of generality, we can focus on stage-wise ex-post IR mechanisms, that is, non-negative stage utility. Formally, a mechanism is stage-wise ex-post IR if for all stages $d \in [D]$, public history $h^{[d-1]}$ and private history $\mathrm{\bar{h}}_i^{[d-1]}$ that matches the public history, stage $d$ reports $v^{[d]}$ and stage $d$ all outcomes $\omega_d = ( x, q_1, \dots, q_k )$ that occur with positive probability (which depends on the history and reports):
\[
v^{(d)}_i \cdot I ( x = i ) - q_i \geq 0.
\]
Equivalently, if buyer $i$ does not receive the item she should not pay, and if she receives the item she should pay at most her (reported) value.

The equivalence between ex-post IR and stage-wise ex-post IR can be shown via an easy reduction (from ex-post IR to stage-wise ex-post IR), see for example~\cite{mirrokni2016optimal}. For completeness we include this reduction in~\ref{app:stage-wise}. The main idea is that given an ex-post IR mechanism $M$, one can construct a stage-wise ex-post IR mechanism $M'$ that charges each bidder exactly their bid for each of the first $D-1$ stages, and in the last stage charges the difference between the total payments in $M$ and the total payments so far in $M'$. Therefore, we can restrict ourselves to mechanisms that guarantee ex-post non-negative utility in each stage, and that only take as input, in each stage, the history of reported values so far.

\paragraph{Correlation between allocation and payment}
A final issue when writing a linear program is the choice of variables. The ``standard'' way, by which we mean the most common formulation for one-shot mechanisms, is to have, for each buyer and each vector of reports, a variable for the expected allocation and a variable for the expected payment. 
Taking this to the dynamic setting, we would have for each stage, history, buyer and reports a variable for the expected allocation and a variable for the expected payment.
But, because of the ex-post IR constraint when implementing the corresponding mechanism, correlation is necessary. To see this most clearly, consider an example where the LP, in a certain situation (i.e. stage, history etc) allocated an item to buyer $i$ with probability $1/2$ and the expected payment was $p$. How would we implement this in an ex-post IR way, so that the buyer's utility is always non-negative? For example, we would need to ensure that when the item is not allocated the payment is zero. Thus, correlation between payment and allocation is necessary, a feature that seems difficult to work if the goal is to write a poly-sized linear program.\footnote{For example, a natural way to encode this correlation is to have a variable for the probability of each possible (allocation,payment) outcome, which leads to exponentially many variables.}  Fortunately, a simple correlation scheme will allow us to by-pass this issue.

Consider a feasible, stage-wise ex-post IR, and DIC (D-DIC or B-DIC) mechanism given via allocation probabilities and expected payments. That is, for every stage $d$, histories $h^{[d]}, \omega^{[d]}$, buyer $i$, and possible reports $v^{[d]}$,  $x_i^{d}(h^{[d]},\omega^{[d]};v^{[d]})$ is the probability that this buyer is allocated item $d$, and $p_i^{d}(h^{[d]},\omega^{[d]};v^{[d]})$ is the expected payment. Stage-wise ex-post individual rationality implies that $v_i \cdot x_i^{d}(h^{[d]},\omega^{[d]};v^{[d]}) - p_i^{d}(h^{[d]},\omega^{[d]};v^{[d]}) \geq 0$. We can implement this mechanism in a way that the ex-post IR constraint is respected even after the random outcome for stage $d$ is selected (feasibility and truthfulness will be immediately implied). With probability $x_i^{d}(h^{[d]},\omega^{[d]};v^{[d]})$ we allocate the item to buyer $i$ and charge her $p_i^{d}(h^{[d]},\omega^{[d]};v^{[d]}) / x_i^{d}(h^{[d]},\omega^{[d]};v^{[d]})$. With probability $1-x_i^{d}(h^{[d]},\omega^{[d]};v^{[d]})$ we do not allocate to $i$, and charge her nothing. The expected utility of the buyer remains $v_i \cdot x_i^{d}(h^{[d]},\omega^{[d]};v^{[d]}) - p_i^{d}(h^{[d]},\omega^{[d]};v^{[d]})$. Furthermore, when the item is allocated, her utility is $v_i - p_i^{d}(h^{[d]},\omega^{[d]};v^{[d]})/x_i^{d}(h^{[d]},\omega^{[d]};v^{[d]}) \geq 0$, and zero when not allocated, so the stage-wise ex-post IR constraint is satisfied.

\subsection*{The optimal randomized mechanism LP}

Finally, we are ready to describe our LP for computing randomized adaptive mechanism for $k>1$ buyers and $D>2$ stages. 
Recall that the input to our problem is, for every buyer $i \in [k]$, and every possible vector of valuations across stages $\vec{v}_i = (v^{(1)}_i,\dots, v^{(D)}_i)$ the probability $Pr[ \vec{v}_i ]$ that this vector of values is realized. Let $|T_i| = \prod_{j=1}^D |V_i^{(j)}|$ be the number of types of buyer $i$. Note that typically, $|T_i|$ grows exponentially with $D$. We use $|T| = \sum_{i=1}^k |T_i|$ for the total number of types. We also use $|V| = max_{i,d} |V^{(d)}_i|$ for the support of the ``largest'' marginal distribution for any stage $d$ and buyer $i$.

We prove the D-DIC case first and discuss how to alter the proof to take care of the B-DIC case, and ``on-path'' truthfulness, afterward.


\begin{restatable}{theom}{lpmanydaysmanybidders}
\label{thm:lp-manydays-manybidders}
For any number of stages $D$, and a constant number of independent buyers $k$, the optimal, adaptive, randomized D-DIC mechanism can be found in time $poly(D,|T|^{2k+3})$.
\end{restatable}

\begin{proof}
Our LP has a variable $x_i^{d}( h^{[d-1]}; v^{[d]} )$ and $p_i^{d}( h^{[d-1]}; v^{[d]} )$ for the probability of allocating item $d$ and the payment, respectively, to buyer $i$ when the reports on stage $d$ are according to $v^{[d]} = ( v^{(d)}_1, \dots, v^{(d)}_k )$ (where $v^{(d)}_j$ is the report of buyer $j$), and the history up until stage $d$ is according to $h^{[d-1]} = ( v^{(1)}_1, \dots, v^{(d-1)}_k )$. The number of variables is therefore $O( k \cdot  D \cdot |T|^k \cdot |V|^k ) = poly(D ,|T|^{k+1} )$.

\begin{itemize}
\item \textbf{Objective}: Our objective is to maximize expected revenue, which can be expressed in our variables as
\[
R = \sum_{d=1}^D \sum_{h^{[d-1]}}  \sum_{v^{[d]}} Pr[ h^{[d-1]}, v^{[d]} ] \sum_{i=1}^k p_i^{d}( h^{[d-1]}; v^{[d]} ).
\]
Notice is that $ Pr[ h^{[d-1]}, v^{[d]} ]$ can be easily computed from our input. It is the product over buyers $i$ of $Pr[ v^{(1)}_i, \dots, v^{(d)}_i ]$; the latter term can be calculated by summing up the probabilities of all possible (at most $|T|$ of them) futures for buyer $i$.
\item \textbf{Feasibility}: We need to ensure that we are not over-allocating any item, i.e. allocating an item with probability more than $1$.
\[
\forall d \in [D], h^{[d-1]}, v^{[d]}: \sum_{i=1}^k x_i^{d}( h^{[d-1]}; v^{[d]} ) \leq 1.
\]
The number of constraints is $O( D \cdot |T|^k \cdot |V|^k) = O(D|T|^{k+1})$.
\item \textbf{Stage-wise ex-post IR}: The stage utility should be non-negative for any buyer $i$, all stages $d$, histories $h^{[d-1]}$, valuations $v^{[d]}$:
\[
\forall i \in [k], d \in [D], h^{[d-1]}, v^{[d]}: v^{(d)}_i x_i^{d}( h^{[d-1]}; v^{[d]} ) - p_i^{d}( h^{[d-1]}; v^{[d]} ) \geq 0.
\]
The number of constraints is $O( k \cdot D \cdot |T|^k \cdot |V|^k) = O(kD|T|^{k+1})$.
\item \textbf{Incentive compatibility}: In order to express D-DIC compactly, we introduce the following intermediate variables. Let $U_i(d+1; h^{[d-1]},\mathrm{\bar{h}}_i^{[d-1]}, v^{(d)}_i, v^{[d]:[D]}_{-i}, t)$ be the expected utility of buyer $i$ from participating in the mechanisms in stages $d+1$ through $D$, when the public history up until stage $d$ is $h^{[d-1]}$, buyer $i$'s private history is $\mathrm{\bar{h}}_i^{[d-1]}$,  the stage $d$ report for buyer $i$ is $v^{(d)}_i$, the other agents' stage $d$ and future reports are according to  $v^{[d]:[D]}_{-i}$, and buyer $i$'s true stage $d$ type is $t$. The number of these variables is $O( k \cdot D \cdot |T|^k \cdot |T| \cdot |V| \cdot |T|^{k-1} \cdot |V| )$, thus the total number of variables remains polynomial in the size of the input for a constant $k$. Furthermore, as we've already discussed, the future expected utility can be easily calculated. Thus, we can recursively define $U_i(d+1; h^{[d-1]},\mathrm{\bar{h}}_i^{[d-1]}, v^{(d)}_i, v^{[d]:[D]}_{-i},t)$ as follows (and we define $U_i(D+1;.)= U_i(D+2;.) = 0$ for all $i$ to make the recursion easier to write):
\begin{multline*}
U_i(d+1; h^{[d-1]},\mathrm{\bar{h}}_i^{[d-1]}, v^{(d)}_i, v^{[d]:[D]}_{-i},t) = \sum_{v^{(d+1)}_i } Pr[v^{(d+1)}_i | \mathrm{\bar{h}}_i^{[d-1]},t] \cdot \\ 
\left( v^{(d+1)}_i x_i^{d+1}( [h^{[d-1]}, v^{(d)}_i, v^{(d)}_{-i}]; [v^{(d+1)}_i, v^{(d+1)}_{-i}] ) - p_i^{d+1}([h^{[d-1]}, v^{(d)}_i, v^{(d)}_{-i}]; [v^{(d+1)}_i, v^{(d+1)}_{-i}] )  \right. \\ \left. + U_i(d+2; [h^{[d-1]}, v^{(d)}_i, v^{(d)}_{-i}],\mathrm{\bar{h}}_i^{[d]}, v^{(d+1)}_i, v^{[d+1]:[D]}_{-i}, v^{(d+1)}_i) \right).
\end{multline*}

Writing our DIC constraints is now much simpler. Specifically, for the case of D-DIC, we have that for all stages $d \in [D]$, buyers $i \in [k]$, histories up until stage $d$ $h^{[d-1]}$ and $\mathrm{\bar{h}}_i^{[d-1]}$, possible true values $v$, misreports $\hat{v}$ for buyer $i$, and all possible stage $d$ and future valuations for the remaining buyers $v^{[d]:[D]}_{-i}$:
\begin{multline*}
v \cdot x_i^{d}(h^{[d-1]}; [v, v^{(d)}_{-i}]) - p_i^{d}(h^{[d-1]}; [v, v^{(d)}_{-i}])  + U_i(d+1; h^{[d-1]},\mathrm{\bar{h}}_i^{[d-1]}, v, v^{[d]:[D]}_{-i},v) \geq \\
v \cdot x_i^{d}(h^{[d-1]}; [\hat{v}, v^{(d)}_{-i}]) - p_i^{d}(h^{[d-1]}; [\hat{v}, v^{(d)}_{-i}])   + U_i(d+1; h^{[d-1]},\mathrm{\bar{h}}_i^{[d-1]}, \hat{v}, v^{[d]:[D]}_{-i},v).
\end{multline*}

The number of constraints is $O( D \cdot k \cdot |T|^{k+2} \cdot |V|^2 \cdot |T|^{k-1}) = O( D k |T|^{2k+3} )$.
\end{itemize}

Given a solution to this LP, we can implement a mechanism as follows. On the first stage the mechanism elicits reports $v^{[1]}$ for the first stage item, and allocates the item to buyer $i$ with probability $x_i^1(v^{[1]})$; if the item is allocated to $i$, her payment is $p_i^1(v^{[1]})/x_i^1(v^{[1]})$, which as we've already argued is non-negative, and if the item is not allocated, the payment is zero. In the second stage reports $v^{[2]}$ are submitted to the mechanism, and the mechanism allocates to buyer $i$ with probability $x_i^2(v^{[1]};v^{[2]})$; if the item is allocated the payment is $p_i^2(v^{[1]};v^{[2]})/x_i^2(v^{[1]};v^{[2]})$, and so on. The probability of allocating an item is at most $1$ by the feasibility constraints. Similarly, incentive compatibility is satisfied via a similar argument.

We note that we do not have variables for histories outside the support, but we can easily handle such deviations (while maintaining feasibility, truthfulness and ex-post IR) as follows. First solve the LP as is to get a mechanism $M$; then, at runtime, if the report of buyer $i$ at some stage is outside the support, ``pretend''  that the buyer reported the lowest value in her support. Importantly, we now have a valid history as an input to the future allocation variables. Call this mechanism $M'$; notice that $M'$ remains truthful (deviating outside the support is the same as deviating to the lowest value in the support), and ex-post IR (if the buyer does get the item, the payment is at most the lowest value, which is a lower bound on her real value).

Given a D-DIC and ex-post IR mechanism $M$ we now show that there is a corresponding feasible solution to our LP.
In full generality, in stage $d$ elicits reports $v^{[d]}$, and depending on the history and outcomes $h^{[d]}, \omega^{[d]}$ so far draws an (allocation,payment) pair from some set $\Theta_{d,h^{[d]}, \omega^{[d]},v^{[d]}}$, i.e. with some probability $Pr[\theta]$ outputs the outcome $\theta \in \Theta_{d,h^{[d]}, \omega^{[d]},v^{[d]}}$ that allocated the item to buyer $i$ and charges her $p \leq v^{(d)}_i$. 
We can then easily write variables that are a feasible solution to the above LP: $x_i^d(h^{[d]}: v^{[d]}) = \sum_{\theta \in \Theta_{d,h^{[d]}, \omega^{[d]},v^{[d]}} : \theta = (i,.)} Pr[\theta]$, and $p_i^d(h^{[d]}: v^{[d]}) = \sum_{\theta \in \Theta_{d,h^{[d]}, \omega^{[d]},v^{[d]}} : \theta = (i,p)} p \cdot Pr[\theta]$. It is easy to check that all the LP constraints are satisfied, and the revenue of $M$ is exactly the value of $R$ for the feasible solution we described.
\end{proof}

For B-DIC we have the exact same statement:
\begin{thm}
For any number of stages $D$, and a constant number of independent buyers $k$, the optimal, adaptive, randomized B-DIC mechanism can be found in time $poly(D,|T|^{2k+3})$.
\end{thm}

The only difference in the LP itself is the incentive compatibility constraints. For D-DIC we had
\begin{multline*}
v x_i^{d}(h^{[d-1]}; [v, v^{(d)}_{-i}]) - p_i^{d}(h^{[d-1]}; [v, v^{(d)}_{-i}])  + U_i(d+1; h^{[d-1]},\mathrm{\bar{h}}_i^{[d-1]}, v, v^{[d]:[D]}_{-i},v) \geq \\
v x_i^{d}(h^{[d-1]}; [\hat{v}, v^{(d)}_{-i}]) - p_i^{d}(h^{[d-1]}; [\hat{v}, v^{(d)}_{-i}])   + U_i(d+1; h^{[d-1]},\mathrm{\bar{h}}_i^{[d-1]}, \hat{v}, v^{[d]:[D]}_{-i},v).
\end{multline*}

For B-DIC, we simply take an expectation over $v^{[d]:[D]}_{-i}$, an operation that we can afford computationally (since it's a sum of  $O(|T|^k)$ terms).
The proof of correctness remains virtually unchanged (noting that none of the arguments used D-DIC in a non-trivial way).

Finally, for on-path truthfulness, we only need to write incentive constraints whenever the public history matches the private history. However, we do also need to update the definition of the intermediate variable $U_i(d+1; h^{[d-1]},\mathrm{\bar{h}}_i^{[d-1]}, v^{(d)}_i, v^{[d]:[D]}_{-i}, t)$ (the expected utility from stages $d+1$ onward when the public history is $h^{[d-1]}$, the private history is $\mathrm{\bar{h}}_i^{[d-1]}$ and the stage $d$ report is $v^{(d)}_i$ for buyer $i$ and the true stage $d$ type is $t$, while other buyer's reports are according to $v^{[d]:[D]}_{-i}$) to take into account the fact that an honest report in stage $d+1$ might not maximize the stage utility after a lie in stage $d$. Instead, we should consider every possible misreport $\hat{v}^{(d+1)}_i$:
\begin{multline*}
U_i(d+1; h^{[d-1]},\mathrm{\bar{h}}_i^{[d-1]}, v^{(d)}_i, v^{[d]:[D]}_{-i},t) \geq \sum_{v^{(d+1)}_i} Pr[v^{(d+1)}_i | \mathrm{\bar{h}}_i^{[d-1]},t] \cdot \\ 
\left( v^{(d+1)}_i x_i^{d+1}( [h^{[d-1]}, v^{(d)}_i, v^{(d)}_{-i}]; [\hat{v}^{(d+1)}_i, v^{(d+1)}_{-i}] ) - p_i^{d+1}([h^{[d-1]}, v^{(d)}_i, v^{(d)}_{-i}]; [\hat{v}^{(d+1)}_i, v^{(d+1)}_{-i}] )  \right. \\ \left. + U_i(d+2; [h^{[d-1]}, v^{(d)}_i, v^{(d)}_{-i}],[\mathrm{\bar{h}}_i^{[d-1]},t], \hat{v}^{(d+1)}_i, v^{[d+1]:[D]}_{-i}, v^{(d+1)}_i) \right).
\end{multline*}

The proof of correctness remains virtually unchanged.

\section{Separations}
\label{app:separations}

In this section we focus again on the single buyer setting.
So far, we have considered deterministic and randomized mechanisms. In this section we compare them in terms of the expected revenue generated, against each other and against two other benchmarks:

\begin{itemize}
\item the optimal non-adaptive mechanism --- i.e. running an independent Myerson's mechanism on each stage; and
\item the optimal social welfare SW --- the expected utility of the buyer from receiving both items for free.
\end{itemize}

The following is immediate:

\begin{fact}
For any distribution of valuations,
\[
\mbox{Rev}\left(\mbox{non-adaptive}\right)\leq\mbox{Rev}\left(\mbox{deterministic}\right)\leq\mbox{Rev}\left(\mbox{randomized}\right)\leq\mbox{SW}
\]

\end{fact}

But are these inequalities strict for some valuation distributions? And by how much? 


\begin{thm}\label{thm:separations}
Let $v^{*} = \max_{v \in V^{(1)} \cup V^{(2)} } v$ be the maximal buyer's valuation in any stage, and assume that all valuations are integral. Then in any two-stage mechanism, the maximum, over all mechanisms, ratio:

\begin{itemize}
\item between $\mbox{SW}$ and any of $\left\{ \mbox{Rev}\left(\mbox{non-adaptive}\right),\mbox{Rev}\left(\mbox{deterministic}\right),\mbox{Rev}\left(\mbox{randomized}\right)\right\} $
is exactly the harmonic number of $v^{*}$, $H_{v^{*}}=\sum_{i=1}^{v^{*}}1/i$;
\item between either of $\left\{ \mbox{Rev}\left(\mbox{deterministic}\right),\mbox{Rev}\left(\mbox{randomized}\right)\right\} $ and $\mbox{Rev}\left(\mbox{non-adaptive}\right)$ is at least $\Omega\left(\log^{1/2}v^{*}\right)$ (and at most $O\left(\log v^{*}\right)$); and
\item between $\mbox{Rev}\left(\mbox{randomized}\right)$ and $\mbox{Rev}\left(\mbox{deterministic}\right)$ is at least $\Omega\left(\log^{1/3}v^{*}\right)$ ( and at most $O\left(\log v^{*}\right)$ ).
\end{itemize}

Furthermore, even when the valuations on the different stages are independent, there exists a two-stage mechanism with ratio of $\Omega\left(\log\log v^{*}\right)$ between either of $\mbox{Rev}\left(\mbox{deterministic}\right)$, $\mbox{Rev}\left(\mbox{randomized}\right)$ and $\mbox{Rev}\left(\mbox{non-adaptive}\right)$.

\end{thm}


\subsection{Warm up: Revenue vs Social Welfare}

To compare non-adaptive mechanisms to optimal social welfare, we can assume without loss of generality that the mechanism occurs in a single stage.

\begin{lem}
\label{lem:SW_is_higher}Let $v^{*}$ be the maximal buyer's valuation, and assume that all valuations are integral. For a single stage mechanism, the maximum ratio between $\mbox{SW}$ and $\mbox{Rev}\left(\mbox{non-adaptive}\right)$ is at least $\frac{ \log\left(v^{*}\right)}{2}$.
\end{lem}

\begin{proof}
Suppose that the buyer has valuation $2$ with probability $1/2$, $4$ with probability $1/4$, etc. until $2^{n}$ with probability $2^{-n}$ (and $0$ also with probability $2^{-n}$), i.e. a truncated equal revenue distribution. The expected social welfare is $\mbox{SW}=\sum_{i=1}^{n}2^{-i}\cdot2^{i}=n$. 
For any choice of price $2^{k}$ chosen by the non-adaptive mechanism, the expected revenue is 
\[ 
\mbox{Rev}\left(\mbox{non-adaptive}\right)=2^{k}\cdot\sum_{i=k}^{n}2^{-i}<2.
\]
The lemma follows by noticing that $v^* = 2^n$.
\end{proof}

The construction above is extremely useful in proving such lower bounds. In fact it is also used in our \NP-hardness result. The distribution used is approximately the well known equal-revenue distribution. To unify our notation we refer to it as $\mbox{\sc pow2}\left[1,n\right]$. In general:

\begin{definition}
We say that $v\sim c\cdot\mbox{\sc pow2}\left[a,b\right]$ if $v=c\cdot2^{a+i}$ with probability $2^{-i-1}$ for all $i\in\left[b-a\right]$, and $v=0$ with probability $2^{a-b-1}$. Note in particular that the
expectation is 
\[
\mathbf{E}\left[\mbox{\sc pow2}\left[a,b\right]\right]=2^{a-1}\left(b-a+1\right)\mbox{ .}
\]
\end{definition}

We conclude this introductory subsection by proving a tight version of the above proposition, namely

\begin{lem}
Let $v^{*}$ be the maximal buyer's valuation, and assume that all valuations are integral. The maximum, over all single stage mechanisms,
ratio between $\mbox{SW}$ and $\mbox{Rev}\left(\mbox{non-adaptive}\right)$ is exactly the harmonic number of $v^{*}$.
\end{lem}

\begin{proof}
\begin{eqnarray*}
\mbox{SW} & = & \sum_{t=1}^{v^{*}}t\Pr\left[v=t\right]=\sum_{t=1}^{v^{*}}t\left(\Pr\left[ v \geq t\right]-\Pr\left[v\geq t+1\right]\right)=\sum_{t=1}^{v^{*}}\Pr\left[v \geq t\right]\\
 & = & \sum_{t=1}^{v^{*}} \frac{\mbox{Rev}\left(p=t\right)}{t} \leq\sum_{t=1}^{v^{*}} \frac{ \mbox{Rev}\left(\mbox{non-adaptive}\right)}{t} =\mbox{Rev}\left(\mbox{non-adaptive}\right)\cdot H_{v^{*}}
\end{eqnarray*}
where $\mbox{Rev}\left(p=t\right)$ denotes the expected revenue from charging $t$. Finally, note that the inequality can be made tight by setting $\Pr\left[v\leq t\right]=\frac{1}{t}$ for all $1\leq t\leq v^{*}$.
\end{proof}

In a single stage setting, the optimal randomized mechanism does not achieve more revenue than a posted price; therefore the same bound immediately holds for adaptive deterministic and randomized mechanisms.

\begin{cor}
Let $v^{*}$ be the maximal buyer's valuation, and assume that valuations are integral. The maximum, over all single stage mechanisms, ratio between $\mbox{SW}$ and 
any of $\mbox{Rev}\left(\mbox{non-adaptive}\right)$, $\mbox{Rev}\left(\mbox{deterministic}\right)$, or $\mbox{Rev}\left(\mbox{randomized}\right)$, is exactly the harmonic number of $v^{*}$.
\end{cor}

\subsection{Independent valuations}
Surprisingly, as we saw in the introduction, adaptive mechanisms achieve a higher revenue even when the valuations on the different stages are independent. Here, we show that large gaps exist even between deterministic adaptive mechanisms and non-adaptive mechanisms.

\begin{lem}
Let $v^{*}$ be the maximal buyer's valuation and assume that valuations are integral. For a two-stage mechanism, the ratio between $\mbox{Rev}\left(\mbox{deterministic}\right)$ and $\mbox{Rev}\left(\mbox{non-adaptive}\right)$ can be as large as $\frac{\left(\log\log v^{*}\right)}{4}$, \emph{even when the valuations on each stage are independent}.
\end{lem}

\begin{proof}
Let $N=2^{n}$. Let the valuation the first stage be distributed as $v^{(1)}\sim\mbox{\sc pow2}\left[1,n\right]$, and on the second stage $v^{(2)}\sim\mbox{\sc pow2}\left[1,N\right]$. As we have already seen in the introduction, the optimal revenue for running two separate fixed-price mechanisms is a small constant. Specifically, $\mbox{Rev}\left(\mbox{non-adaptive}\right)<4$.

What about deterministic adaptive mechanisms? The same idea works, except that in the deterministic case, the seller punishes the
buyer for lower bids by charging higher prices on the second stage.

On the first stage, the deterministic adaptive mechanism will charge the buyer almost her full reported value $v^{(1)}-\left(2-2^{-v^{(1)}}\right)$. If the first stage report is $v^{(1)}$, on the second stage, we offer the item for a price of $p^{(2)}\left(v^{(1)}\right)=2^{N-v^{(1)}}$. The buyer's expected utility from the second stage is now exactly 

\begin{align*}
\sum_{i\colon2^{i}\geq p^{(2)}\left(v^{(1)}\right)}2^{-i}\left(2^{i}-p^{(2)}\left(v^{(1)}\right)\right) 
	& =  v^{(1)}-\sum_{i\colon2^{i}\geq p^{(2)}\left(v^{(1)}\right)}2^{-i}p^{(2)}\left(v^{(1)}\right) \\
	& =v^{(1)}-\sum_{0\leq i\leq v^{(1)}-1}2^{-i} \\
	& =v^{(1)}-\left(2-2^{-v^{(1)}}\right)
\end{align*}

Once again, the buyer's expected utility on the second stage exactly covers the price on the first stage, which guarantees that this mechanism
satisfies IC.

Finally, note the expected revenue is almost as large as the expected valuation on the first stage $\mbox{Rev}\left(\mbox{deterministic}\right)>n-2$.
\end{proof}

\subsection{Stronger adaptivity gaps for correlated valuations}

When the valuations are correlated, we can show stronger adaptivity gaps.

\begin{lem}
\label{lem:strong_adaptivity_gap}
Let $v^{*}$ be the maximal buyer's valuation and assume that valuations are integral. For a two-stage mechanism, the ratio between $\mbox{Rev}\left(\mbox{deterministic}\right)$ and $\mbox{Rev}\left(\mbox{non-adaptive}\right)$ can be as large as $\sqrt{\log v^{*}}/4$
\end{lem}

\begin{proof}
Again, let the first-stage valuation be distributed $v^{(1)}\sim\mbox{\sc pow2}\left[1,n\right]$. The second-stage valuation $v^{(2)}$ will be conditioned on the first stage: $v^{(2)}\mid v^{(1)}\sim\left(v^{(1)}/n\right)\cdot\mbox{\sc pow2}\left[1,n^{2}\right]$. 

We already saw that the non-adaptive policy's revenue on the first
stage is less than $2$. What is the optimal price for the second stage?
To answer this question we must consider the marginal distribution
of the second stage:
\begin{gather*}
\Pr\left[v^{(2)}=2^{l}/n\right]=\sum_{k\in\left[n\right]}\Pr\left[v^{(1)}=2^{k}/n\right]\Pr\left[v^{(2)}=2^{l}\mid v^{(1)}=2^{k}\right]\leq\sum_{k\in\left[n\right]}2^{-k}2^{k-l}=n\cdot2^{-l}
\end{gather*}
Therefore, $\Pr\left[v^{(2)}\geq2^{l}/n\right]\leq n\cdot2^{1-l}$,
which implies $\mbox{Rev}\left(\mbox{non-adaptive}\right)<4$.

Now, consider the randomized mechanism that on the first stage charges
the buyer $v^{(1)}=2^{k}$ (and allocates the item), and on the second
stage allocates the item for free with probability $k/n$. When the
buyer's true valuation on the first stage is $2^{k}$, her the expected
utility from reporting $2^{l}$ is given by
\[
\mbox{U}\left(2^{k},2^{l}\right)=\left(l/n\right)\mathbf{E}\left[v^{(2)}\mid v^{(1)}=2^{k}\right]-2^{l}=l\cdot2^{k}-2^{l}\mbox{ ,}
\]
which is maximized by $l\in\left\{ k,k+1\right\} $. The expected
revenue from this randomized mechanism is again $n$.

Similarly, a deterministic mechanism can charge $v^{(1)}=2^{k}$ on the
first stage, and offer the item on the second stage for price $p^{(2)}\left(2^{k}\right)=2^{n^{2}-nk}/n$
\begin{eqnarray*}
\mbox{U}\left(2^{k},2^{l}\right) & = & \sum_{i\colon2^{k+i}/n\geq p^{(2)}\left(2^{l}\right)}2^{-i}\left(2^{k+i}/n-p^{(2)}\left(2^{l}\right)\right)-2^{l}\\
 & = & \left(l-\frac{k}{n}\right)2^{k}-\sum_{i\colon2^{k+i}/n\geq p^{(2)}\left(2^{l}\right)}2^{-i}\cdot p^{(2)}\left(2^{l}\right)-2^{l}\\
 & = & \left(l-\frac{k}{n}\right)2^{k}-\sum_{0\leq i\leq nl+k-1}2^{-i}\left(2^{k}/n\right)-2^{l}\\
 & = & \left(l-\frac{k}{n}\right)2^{k}-\left(2-2^{-\left(nl+k\right)}\right)\left(2^{k}/n\right)-2^{l}\\
 & = & \left(l+\frac{2^{-\left(nl+k\right)}}{n}\right)2^{k}-2^{l}-\left(\frac{k+2}{n}\cdot2^{k}\right)
\end{eqnarray*}
The second line follows because there are $nl-k$ $i$'s for which
$i\colon2^{k+i}/n\geq p^{(2)}\left(2^{l}\right)$. Notice that indeed,
$\left(l+\frac{2^{-\left(nl+k\right)}}{n}\right)2^{k}-2^{l}$ is maximized
at $l=k$.
\end{proof}

\subsection{Deterministic vs randomized mechanisms}

Naturally, one would expect that deterministic and randomized mechanisms yield different revenues because we can optimize the latter in polynomial time, while optimizing over deterministic mechanisms is NP-hard. In this subsection we show that randomized mechanisms can in fact yield much more revenue.

\begin{lem}
Let $v^{*}$ be the maximal buyer's valuation, and assume that all valuations are integral. For a two-stage mechanism, the ratio between $\mbox{Rev}\left(\mbox{randomized}\right)$ and $\mbox{Rev}\left(\mbox{deterministic}\right)$ can be as large as $\frac{\left(\log v^{*}\right)^{1/3}}{7}$.
\end{lem}

Our proof builds on the constructions in the proof of Lemma \ref{lem:strong_adaptivity_gap}. A key observation is that by modifying the parameters for the second stage distribution, we can shift the prices without changing the expected utility. Choosing those parameters based on the valuation in the first stage, will allow us to break the deterministic seller's strategy, without changing the revenue of the randomized mechanism.

\begin{proof}
Let $v^{(1)}\sim\mbox{\sc pow2}\left[1,n\right]$. For type $i$ with value $2^{i}$ on the first stage, the valuation on the second stage will be $0$ with probability $1-2^{-2n^{2}i}$. The remaining $2^{-2n^{2}i}$ will be distributed according to $\frac{2^{\left(2n^{2}+1\right)i}}{n}\mbox{\sc pow2}\left[1,n^{2}\right]$. For any $i\in\left[n\right]$, let $V_{i}^{(2)}\setminus\left\{ 0\right\} $ be the set of nonzero feasible valuations on the second stage, conditioned on valuation $2^{i}$ on the first stage. Notice that for any $i<j$, all the values in $V_{i}^{(2)}\setminus\left\{ 0\right\} $ are much smaller than all the values in $V_{j}^{(2)}\setminus\left\{ 0\right\} $.

The randomized mechanism, again charges full price $v^{(1)}=2^{k}$ on the first stage, and gives the item for free on the second stage, with probability $k/n$. The buyer's utility from reporting $2^{l}$ is:
\[
\mbox{U}\left(2^{k},2^{l}\right)=\left(l/n\right)\mathbf{E}\left[v^{(2)}\mid2^{k}\right]-2^{l}=l\cdot2^{k}-2^{l}\mbox{ ,}
\]
which is maximized by $l\in\left\{ k,k+1\right\} $. The expected revenue from this randomized mechanism is again $n$.

What about the deterministic seller? Given any deterministic mechanism, let $k^{*}$ be the minimal $k$ for which a buyer with first-stage valuation $2^{k}$ has a nonzero probability of affording both items. In other words, after declaring valuation $2^{k^{*}}$ for the first stage, her second-stage price is at most $p^{(2)}\left(2^{k^{*}}\right) \leq \frac{2^{\left(2n^{2}+1\right)k^*+1}}{n} < 2^{2n^{2}\left(k^*+1/2\right)}$.

Assume that the buyer has valuation $v^{(1)}=2^{l}>2^{k^{*}}$. If she deviates and declare type $2^{k^{*}}$, she receives the first item, and she also receives the second time whenever she has nonzero valuation. 
On the second stage, she pays less than  $2^{2n^{2}\left(k^*+1/2\right)}$ 
with probability $2^{-2n^{2}l} \leq 2^{-2n^{2}\left(k^*+1\right)}$.
Therefore her expected pay on the second stage has a negligible expected cost (less than $2^{-n^{2}}$). 
On the first stage, her price cannot be greater than $2^{k^{*}}$. The total expected payment made
by the buyer with $v^{(1)}=2^{l}>2^{k^{*}}$ is bounded by $2^{k^{*}}+2^{-n^{2}}$. 
Summing over the probabilities of having first-stage valuation $v^{(1)}=2^{l}>2^{k^{*}}$, this is still less than $1$.

Consider all the types whose first-stage valuations are lower than $2^{k^{*}}$, and yet they receive the first item. Since they can never afford the second item, on the first stage they must all be charged the same price, thus yielding a total revenue less than $2$. Similarly, the types for which the first-stage item is not allocated, must all be charged the same price on the second stage. Finally, by IR constraints the expected revenue from $v^{(1)}=2^{k^{*}}$is at most $2.$ Therefore, the total expected revenue is less than $7$.\end{proof}


\section{No contract\label{sec:No-contract}}

In this section, we restrict the two-stage mechanism problem to a setting where the seller cannot commit to an arbitrary action in the future (however the seller {\em can} commit to the mechanism within each stage).
There are indeed many well-studied situations in economics in which contracts about future transactions are impossible or legally problematic, especially when the time between stages is relatively long (e.g.~on the order of years). Beyond this consideration, in the two-stage no-contract case we study, the seller's optimal second stage mechanism is trivial (it's simply Myerson's optimal mechanism for the seller's updated prior), which raises hopes of escaping the negative results in the rest of this paper.

Before we proceed, let us specify the problem we study. In the {\sc No-contract two-stage mechanism} problem there is a single seller and a single buyer that interact over two stages. The seller has two items to sell, one in each stage. The buyer privately learns her first stage type $t_1 \in \Theta_1$, $t_1 = (v, D^{(2)})$, where $v$ is her value for the first stage item and $D^{(2)}$ is the distribution from which she'll draw her second stage item. The type $t_1$ is drawn from a publicly known distribution $D^{(1)}$ and occurs with probability $Pr[t_1]$. In the second stage, a buyer with first-stage type $t_1=(v, D^{(2)})$ draws her type/value for the second stage item, $t_2 \in \Theta_2$, from the distribution $D^{(2)}$.

Informally, the seller and buyer will interact during the first stage, the buyer will be allocated the first stage item with some probability for some payment, and the seller's prior over the second stage distribution will be updated based on this interaction; let $\mathcal{D}_2$ be the updated prior for the second stage. In the second stage, the item will be sold optimally with respect to the updated prior. The tension between the buyer and seller, as opposed to the problem with commitment, arises from the fact that the buyer might want to report her full type more than the seller wants to learn it. That is, reporting her full stage one type has a big impact on the second stage reserve price, but a small impact on the seller's second stage revenue. 

The seller will set up a communication protocol for the first stage, with a corresponding allowed set of messages that the buyer can send her, as well as functions/mappings from first stage types to messages. The first stage item will be allocated (and payments will be charged) according to the rules of this first stage protocol. In the second stage the seller will look at the first stage transcript (i.e., the messages exchanged), and update her prior over the buyer's second stage type \emph{assuming} the buyer used the prescribed mapping (from first stage types to first stage messages) in the past. Doing this Bayesian update is the only commitment the seller can make. Once this update is done, the seller will run an optimal mechanism for the second stage (optimal with respect to the updated prior). For the first stage protocol we still ask for dynamic incentive compatibility and ex-post individual rationality, formally defined later in this section.

Back to the formal description of the problem, a $2k-1$ round dynamic mechanism involves, for the first stage, (1) $k$ sets of messages $M_1, M_2, \dots, M_k$ that the buyer can send to the seller, (2) $k$ functions $f_1, \dots, f_k$, where $f_i$'s domain is $\Theta_1$, the set of first stage types, and its range is $M_i$, the $i$-th set of messages, (3) $k-1$ ``stopping'' functions $g_1, g_2, \dots, g_{k-1}$ from the set of messages received so far ($\times_{i=1}^{j} M_i$ for $g_j$) to the interval $[0,1]$ , (4) an allocation function $x_1$ and a payment function $p_1$ from the set of messages sent $\times_{i=1}^j M_i$ \footnote{For simplicity, we treat $j$, the number of messages sent by the buyer, as a constant, even though it is random. Equivalently, one can define the domain of $x_1$ and $p_1$ to be $\times_{i=1}^k M_i$, and query the functions only on $m_1, \dots, m_j, \bot, \dots, \bot$ when the mechanism is executed, where $\bot$ is an empty message.} to an allocation in $[0,1]$ and a payment in $\mathbb{R}$ for the first stage item, respectively.
In the second stage the seller re-inspects the messages $m_1, \dots, m_j$ she received in the first stage (where $1 \leq j \leq k$) and updates her prior over second stages types. Specifically, the seller assumes that the buyer followed the specified protocol and used functions $f_1, \dots, f_j$ to send messages. Therefore, the set of possible first stage types is $C = \cap_{i=1}^j f_i^{-1}(m_i)$, where we write $f^{-1}(x)$ for the preimage of a function, i.e. the set of values $A$, such that for all $y\in A$, $f(y) = x$. The updated second stage prior $\mathcal{D}_2$ is the distribution that first samples a first stage type $t_1 = (v, D)$ from $D^{(1)}$ (the first stage distribution) conditioned on $t_1 \in C$, and then samples the second stage value from $D$. If $C$ is empty, which implies that the buyer definitely did not follow the intended protocol, the seller uses the default prior $\mathcal{D}_2$, her prior before her interaction with the buyer.\footnote{We note that our results do not use this specific seller response in a crucial way.} The seller then runs the optimal mechanism for $\mathcal{D}_2$, which is just Myerson's mechanism. Therefore, for any given $k$, the seller's task is to design sets of messages $M_1, \dots, M_k$, reporting functions $f_1, \dots, f_k$, stopping functions $g_1, \dots, g_{k-1}$, and first stage allocation and payment functions $x_1$ and $p_1$; the mechanism in the second stage is the best one possible, assuming the prior update we described.

The order of events in a $2k-1$ round mechanism, for $k \geq 1$, is as follows:
\begin{enumerate}[label=(\roman*)]
\item The buyer privately learns her type $t_1 \in \Theta_1$, $t_1 = (v, D^{(2)})$. Set $j=1$.
\item The buyer sends a message $m_j \in M_j$ to the sender.
\item If $j =k$ we go to step (iv). If $j < k$ the seller flips a biased coin that lands heads with probability $g_j(m_1, \dots, m_j)$. If the coin lands heads the seller asks for the buyer to send another message: $j$ gets increased by $1$ and we go back to step (ii); otherwise, if the coin lands tails, we go to step (iv).
\item The first item is allocated with probability $x_1(m_1,\dots,m_j)$ and the buyer is charged  $p_1(m_1,\dots,m_j)$, where $j$ is the number of messages the seller has received from the previous interaction. The buyer gets expected stage utility $v \cdot x_1(m_1,\dots,m_j) - p_1(m_1,\dots,m_j)$, and we proceed to the second stage.
\item The buyer privately learns her type  (value for the second stage item) $t_2$, that is drawn from distribution $D^{(2)}$.
\item The seller updates her prior according to $m_1, \dots, m_j$ and $f_1, \dots, f_j$. Specifically, let $C = \cap_{i=1}^j f^{-1}_i(m_i)$ be the set of candidate first stage types. If $C$ is non-empty, let $\mathcal{D}_2$ be the distribution that first samples a type $t_1 = (v, D)$ according to $D^{(2)}$ conditioned on $t_1 \in C$, and then samples $v_2$ from $D$. If $C$ is empty, let $\mathcal{D}_2$ be the distribution that first samples a type $t_1 = (v,D)$ from $D^{(1)}$, and then samples $v_2$ from $D$. 
\item The seller runs the optimal one-shot mechanism for the distribution $\mathcal{D}_2$, i.e. sets a posted price equal to $p^* = argmax_p p \cdot Pr_{v_2 \sim \mathcal{D}_2}[ v_2 \geq p ]$. 
\item The buyer chooses to purchase the second stage item, if $t_2 \geq p^*$, and gets stage utility $t_2 - p^*$, or to not purchase and get stage utility zero.
\end{enumerate}

\begin{remark}[Randomized buyer's strategies] Although we model the buyer's strategy as deterministic, we note that our model can wlog accommodate buyer's randomized strategies by augmenting the buyer's stage 1 private type with a private tape of random bits. 
\end{remark}

\begin{remark}[Partial participation] Wlog the buyer has the option to report a special symbol and abort the first stage protocol without payment or allocation. In second stage the buyer can still participate in the mechanism, which will be the one that maximizes revenue for the seller conditioned on the buyer's type being such that they chose to abort in the first stage. 
\end{remark}

For the case of a single round mechanism, the seller simply designs a set of allowed messages and a mapping from types to messages (whose inverse will be used for the Bayesian update in the second stage). The buyer then sends a message, and the first stage item is sold according to this message. In isolation, this procedure is without loss of generality with respect to one-round protocols. That is, the seller can implement an arbitrary one-round communication protocol for selling the first item. This is not true for multiple rounds. There are more general multi-round mechanisms that the ones we consider here. For example, protocols where the buyer sends a message, and then the seller runs some protocol $X$ or some protocol $Y$ depending on this message, are outside our design space. Notice though that this only makes our result stronger, since in our main theorem for this section we are interested in upper bounding the revenue of single round mechanisms and lower bounding the revenue of multi-round mechanisms.

In order to formally define ex post IR and DIC, it will be useful to define some shorthand notation. Let $m^*_j = f_j( t_1 )$ be the message sent by a buyer with type $t_1$ who uses the function $f_j$. Let $\bar{m_j} = m_1 ,\dots, m_j$ be the first $j$ messages received by the seller during the execution of a mechanism.

\paragraph{Ex-post individual rationality} In the second stage every mechanism in this setting is, by definition, ex-post individually rational, no matter what has occurred in the first stage. For the first stage, we say that a $2k-1$ round dynamic mechanism is individually rational if for all first-stage types $t_1 \in \Theta_1$, for all stopping points $j \leq k$, the buyer that reports messages as prescribed by $f_1, \dots, f_k$ has non-negative stage utility (for the first stage) ex post. That is, for all $t_1 = (v, D^{(2)}) \in \Theta_1$, all stopping points $1 \leq j \leq k$, $v \cdot x_1( \bar{m^*_j} ) - p_1( \bar{m^*_j} ) \geq 0$. Note that even though this constraint is written in expectation, it is possible to guarantee non-negative utility ex-post by appropriately correlating the allocation and payment.

\paragraph{Incentive compatibility} Again, in the second stage every mechanism is incentive compatible by definition, since the buyer is faced with a posted price. For the first stage, we ask that sending messages according to $f_1, \dots, f_k$ maximizes the buyer's total utility, i.e. the sum of her stage utilities across stages. Let $D( \bar{m_j} )$ be the second stage prior the seller uses when she receives messages $\bar{m_j}$ in the first stage. Let $u_2 ( D' | D )$ be the expected second stage utility of a buyer whose value is drawn from a distribution $D$, but is faced with a posted-price calculated using a prior distribution $D'$.
Finally, let $Z(\bar{m_k})$ be the random variable for the stopping time of a $2k-1$ round mechanism, when the buyer sends messages $\bar{m_k} = m_1, \dots, m_k$. That is, $Z(\bar{m_k})$ takes the value $1$ with probability $1-g_1(m_1)$, the value $2$ with probability $(1-g_1(m_1))(1-g_2(m_1,m_2))$, and so on.

A $2k-1$ round dynamic mechanism satisfies dynamic incentive compatibility if, for all $t_1 \in \Theta_1$, $t_1 = (v, D^{(2)})$ and all messages $m_1 \in M_1, \dots, m_2 \in M_2, \dots, m_k \in M_k$, it holds that
\begin{multline*}
\mathbb{E}_{ j \sim Z(\bar{m^*_k}) } \left[  v \cdot x_1(\bar{m^*_j}) - p_1(\bar{m^*_j}) +  u_2 ( D(\bar{m^*_j}) | D ) \right]  \geq \\
\mathbb{E}_{ j \sim Z(\bar{m_k})} \left[  v \cdot x_1(\bar{m_j}) - p_1(\bar{m_j}) +  u_2 ( D(\bar{m_j}) | D ) \right].
\end{multline*}

\paragraph{Extensive communication increases revenue}

Our main result for this section is that there exists a $3$ round dynamic mechanism (i.e. uses two functions, $f_1$ and $f_2$) that makes more revenue than the optimal $3$ round dynamic mechanism.

\begin{restatable}[]{theom}{nocontract}
\label{thm:no-contract}The optimal mechanism for the {\sc No-contract two-stage mechanism} problem requires multiple rounds of communication in the
first stage.
\end{restatable}

We construct an instance where the buyer's valuations on the two stages are independent, yet on the first stage she has a more refined prior over her second-stage valuations. Furthermore, she has a strong incentive to share her information about the second period with the seller (while the seller is approximately indifferent). In order to {\em credibly} report her information about the second period, she needs the seller's help in setting up an incentive compatible mechanism. Informally, the seller now has another ``product'' she can sell for profit: the opportunity to report information about the second period. We will refer to this new product as {\em OTR}, for the ``opportunity to report.''

The OTR has two important properties that distinguish it from the real items sold in the mechanism: (1) because it is not a real item, it does not contribute to the buyer's valuation when evaluating the IR constraints; and (2) the seller knows its ex-interim value to the buyer. We'll set things up so that this value is independent of the partial information the buyer has in the first period; see the value of OTR in the statement of Lemma \ref{lem:D_1-and-D_2}. The latter property is useful when considering the DIC constraints.

How does the OTR lead to multiple rounds of communication? In order to satisfy the IR constraints, the OTR must be bundled with the first (real) item. Given the results from recent years about menu complexity (e.g \cite{hart2013menu,daskalakis2017strong}), it is unsurprising that the optimal way to sell this bundle is fractional; i.e. for each price $\pi$, the buyer receives the OTR with some probability $\rho^{\mbox{OTR}}\left(\pi\right)$ (and the real item with probability $1$ to satisfy the ex-post IR constraints). Thus we have: in round $1$, the buyer is asked to report her value (i.e. $f_1(v, D^{(2)}) = v$); in round $2$, the seller allocates the OTR with some probability that depends
on $v$; and in round $3$, if allocated the OTR, the buyer reports her information about the second stage (i.e. $f_2(v, D^{(2)}) = D^{(2)}$).


\subsection{Construction}

For simplicity, we slightly alter our notation, and denote by $D_{1}$ and $D_{2}$ the possible distributions for the second stage valuation. 
Before the mechanism is executed, the seller has a prior of $\left(1/2,1/2\right)$ over $\left(D_{1},D_{2}\right)$, that is $Pr[ t_1 = (.,D_1) ] = Pr[ t_1 = (.,D_2) ] = 1/2$. We denote the mixed distribution known to the seller by $\left(\frac{1}{2}D_{1}+\frac{1}{2}D_{2}\right)$. We will introduce many constraints on these distributions, but the most important one for now is that the optimal posted price, which we'll refer to as the {\em Myerson price}, for each separate distribution is low (either $1$ or $1+\epsilon$ for $\epsilon\ll1$), while the Myerson price for the mixed distribution (i.e. the seller's prior) is high: $k$, for some sufficiently large integer $1\ll k\ll1/\epsilon$. 
In order to compensate a truthful buyer who may end up paying the slightly higher price ($1+\epsilon$) on the second stage, the seller gives her a discount of $\epsilon/5$ on the first stage. The following lemma lists all the properties we require from $D_{1}$ and $D_{2}$, as well as some useful notation. We will assume that we have such distributions for now, and construct them explicitly later.

\begin{restatable}[]{againlem}{nocontractdist}\label{lem:D_1-and-D_2}There
exist distributions $D_{1},D_{2}$ that satisfy all of the following
conditions:
\begin{description}
\item [{Myerson pricing}] The Myerson price given prior $D_{1}$ over the valuations is $1+\epsilon$, for prior $D_{2}$ it is $1$, and for prior $\frac{1}{2}D_{1}+\frac{1}{2}D_{2}$, it is $k$. Furthermore,
for any convex combination of $D_{1}$ and $D_{2}$, every Myerson price is one of these three possible prices.
\item [{Buyer's utility}] Let $u_{2}\left(D'\mid D\right)$ denote the buyer's expected utility from the second stage mechanism when her true distribution is $D$, but the seller runs a Myerson mechanism against a (possibly misreported) prior of $D'$. Then we require:

\begin{description}
\item [{Truthfulness\label{enu:truthfulness}}] $u_{2}\left(D_{1}\mid D_{1}\right)+\epsilon/5>u_{2}\left(D_{2}\mid D_{1}\right)$
and $u_{2}\left(D_{2}\mid D_{2}\right)>u_{2}\left(D_{1}\mid D_{2}\right)+\epsilon/5$.
(Note that in this case we can ensure incentive compatibility by giving
a discount of $\epsilon/5$ on the first stage whenever the buyer reports
$D_{1}$.)
\item [{Value of OTR\label{enu:Value-of-OTR}}]
The value of the OTR to the buyer does not depend on her private information. We use $\theta$ to denote this value. $\theta = u_{2}\left(D_{1}\mid D_{1}\right)+\epsilon/5-u_{2}\left(\frac{1}{2}D_{1}+\frac{1}{2}D_{2}\mid D_{1}\right)=u_{2}\left(D_{2}\mid D_{2}\right)-u_{2}\left(\frac{1}{2}D_{1}+\frac{1}{2}D_{2}\mid D_{2}\right)$. 
\end{description}
\item [{Seller's revenue\label{enu:revenue}}] By learning whether
the second stage's valuation is drawn from distribution $D_{1}$ or
$D_{2}$, the optimal expected revenue increases by at most $O\left(\epsilon\right)$.
\[
\textup{\texttt{Rev}}\left(\frac{1}{2}D_{1}+\frac{1}{2}D_{2}\right)\leq\frac{1}{2}\textup{\texttt{Rev}}\left(D_{1}\right)+\frac{1}{2}\textup{\texttt{Rev}}\left(D_{2}\right)\leq\textup{\texttt{Rev}}\left(\frac{1}{2}D_{1}+\frac{1}{2}D_{2}\right)+O\left(\epsilon\right)\mbox{.}
\]

\end{description}
\end{restatable}

The proof of Lemma~\ref{lem:D_1-and-D_2} can be found in~\ref{app:no_contract}.
Given the distributions guaranteed by Lemma \ref{lem:D_1-and-D_2} for the second stage,
we construct the following distribution over types for the first stage. It suffices to describe the distribution over values: the distribution over types will simply sample a value $v$, and then with probability $1/2$ the type will be $(v,D_1)$, otherwise $(v,D_2)$.
We want to construct a distribution over values whose optimal (posted price) revenue for the seller is significantly lower than the optimal social welfare (i.e. the buyer's expected valuation). Intuitively, the latter can only be translated into revenue by bundling with the OTR. In particular, this property can be achieved by the well-known equal-revenue distribution. Let $\delta$ be a small parameter ($\epsilon\ll\delta\ll1/k$), and let $l$ be a sufficiently large integer (say, $l=100$). We set

\[
\Pr\left[v^{(1)}=\delta j\right]=\begin{cases}
\frac{1}{j}-\frac{1}{j+1} & j\in\{1,\dots l-1\}\\
\frac{1}{j} & j=l
\end{cases}
\]

Let $R$ denote the expected revenue from the second stage when the seller knows which distribution is used, i.e. 
\[
R=\frac{1}{2}\textup{\texttt{Rev}}\left(D_{1}\right)+\frac{1}{2}\textup{\texttt{Rev}}\left(D_{2}\right).
\]
We prove that with three rounds of communication, the expected revenue is at least 
\begin{equation}
\textup{\texttt{Rev}}_{3}\geq R+\delta H_{l}-O\left(\epsilon\right),\label{eq:3-rounds}
\end{equation}
where $H_{l}$ is the $l$-th harmonic number. 

With one round, per contra, the expected revenue is at most 
\begin{equation}
\textup{\texttt{Rev}}_{1}\leq R+3\delta.\label{eq:1-round}
\end{equation}

\subsection{Three rounds of communication}

We begin by describing a protocol that uses multiple rounds of communication. $f_1(t_1)$ instructs the buyer to report just her value $v$ for the first stage item. Then, the seller flips a coin (whose bias depends on the value reported), and depending on the outcome, asks for the buyer to use $f_2(t_1)$, which instructs the buyer to report her second stage distribution. Thus, the intended order of events is:
\begin{enumerate}
\item The buyer reports her true valuation $v_{1}$;
\item With probability $v_{1}/\theta$, the seller allocates the OTR
(i.e. the seller asks the buyer for her prior on the second stage's valuation), where $\theta$ is defined in Lemma~\ref{lem:D_1-and-D_2};
\item If allocated the OTR, the buyer reports from which distribution ($D_{1}$
or $D_{2}$) she will draw her valuation on the second stage.
\end{enumerate}

Then, the first item is always allocated ($x_{1}=1$), and the seller charges price $p_{1}=v_{1}-\epsilon/5$ if the buyer reported prior
$D_{1}$, and $p_{1}=v_{1}$ otherwise. In the second stage, the second item is offered for the Myerson price for the seller's updated prior.

The ex-post IR constraints are trivially satisfied because $p_{1}\leq v_{1}$, and the first item is always allocated. The expected revenue from this mechanism, assuming the buyer reports truthfully, is as follows. On the first stage, the revenue is at least $\E\left[v_{1}\right]-\epsilon/5=\delta H_{l}-O\left(\epsilon\right)$. On the second stage, by ``Seller's revenue'' in Lemma \ref{lem:D_1-and-D_2}, the expected revenue is at least $R-O\left(\epsilon\right)$. Overall we match our guarantee (\ref{eq:3-rounds}).

Finally, we prove that this mechanism satisfies the DIC constraints. Given that she is allocated the OTR, it follows by ``Truthfulness''
in Lemma \ref{lem:D_1-and-D_2} that the buyer maximizes her utility by reporting truthfully in the third round. We now consider two cases, based on the buyer's prior for the second stage. We prove that in both cases, the buyer's expected utility does not depend on the reported valuation (and thus satisfies DIC):
\begin{itemize}
\item [{$D_{1}$:}] The buyer's expected utility on the second stage from the OTR is $\theta-\left(\epsilon/5\right)$. Upon reporting a first stage valuation $v'$, she receives the OTR with probability $v'/\theta$, so her added utility is $v'-\left(\epsilon/5\right)v'/\theta$. Similarly, on the first stage her expected price is $v'-\left(\epsilon/5\right)v'/\theta$. Her total utility is therefore independent of the valuation she reports in the first stage. 
\item [{$D_{2}$:}] Analogously to the previous case, the buyer's expected utility on the second stage from the OTR is $\theta$. Upon reporting
a first stage valuation $v'$, she receives the OTR with probability $v'/\theta$, so her added utility is $v'$. On the first stage her price is always $v'$. Her total utility is again independent of the valuation she reports in the first stage. 
\end{itemize}

\subsection{One round of communication}

Our goal in this subsection will be to upper bound the revenue extracted by any protocol with a single round of communication, when the buyer distributions are as previously described.
Consider any protocol with a single round of communication, that is the seller designs a single function $f = f_1$ from types to messages. 
Recall that by our construction for the second stage distributions, for any convex combination of $D_{1}$ and $D_{2}$, the Myerson price
charged in the second stage by the greedy seller is always one of three possibilities: $p_{2}\in\{1,1+\epsilon,k\}$. The choice of $p_{2}$ depends on the updated prior based on the buyer's single message. 

We divide the universe of legal buyer's messages into three: $M_{p}\subset\Sigma^{*}$ is the subset of messages for which the second stage price is $p$, for each of $p\in\{1,1+\epsilon,k\}$. Note that for each subset $M_{p}$, the allocation and price on the second stage are independent of the choice of message $m\in M_{p}$. In particular, any difference between messages must be due to different outcomes on the first stage.

Fix any $p\in\{1,1+\epsilon,k\}$, and let $R_{p}^{(1)}$ be the expected revenue on the first stage when the buyer's message is in $M_{p}$.
We show that $R_{p}^{(1)}\leq\delta$ by a reduction to selling only the first item. For message $m\in M_{p}$, let $x_{m}$ denote the probability that the seller allocates the first item to the buyer, and let $\pi_{m}$ denote the expected price when allocated. In fact,
since both the buyer and the seller are risk-neutral, we can assume without loss of generality that the seller charges exactly $\pi_{m}$ whenever the first item is allocated (note that ex-post IR constraints are preserved). Consider the single-item mechanism (for the first item) which requires the buyer to submit a message $m\in M_{p}$, and then with probability $x_{m}$ offers the item for price $\pi_{m}$. By the DIC constraints, whenever the buyer submitted a message $m\in M_{p}$ in the original (two-stage) mechanism, she will continue to submit the same message in the modified (single-item) mechanism. Therefore, the revenue collected from this single-item mechanism is at least $R_{p}^{(1)}$. Finally, observe that due to the equal-revenue construction, the revenue from selling the first item independently is at most $\delta$.

Adding the expected revenues from all the feasible $p$'s we have that the total expected revenue on the first stage at most $3\delta$. Since the revenue on the second stage is always at most $R$, (\ref{eq:1-round}) follows. This completes the proof of Theorem \ref{thm:no-contract}.\qed

\section{Discussion and Future Work}

In this paper we studied dynamic mechanisms with full commitment and limited commitment. When the seller can commit to a contract, we showed that computing the optimal deterministic mechanism for the simple two-stage case is an NP-hard problem, and identified some tractable special cases: independent stages, fixed first stage prices and constant first stage support. The optimal randomized mechanism can be computed via a linear program whose size is polynomial in the support of the type distribution, even when we consider multiple stages and multiple buyers. We also proved that when the seller cannot commit to a future contract (but has limited commitment power) we have a very different kind of obstacle to overcome: multiple communication rounds.

There are still many interesting directions to pursue. In the two-stage deterministic case, constant approximations might still be achievable in the general, NP-hard case (our current proof does not even establish APX-completeness). Also, there may be other tractable special cases. Our reduction constructs complicated second-stage distributions. For example, what if the valuation distribution is ``affiliated'' (higher first stage valuation implies higher second stage valuation)? Is this case tractable?  And if so, can these results be extended to multiple stages and multiple buyers?

Another open question, from an algorithmic point of view, is relaxing the ex-post IR constraint. In the economics literature we have seen something similar in the work of~\cite{courty2000sequential} where airplane tickets refunds are sold before the agents see their valuation for them, and refunds are allowed. Is the two-stage mechanism equivalently tractable in this case?

Regarding the limited commitment case, an interesting open problem is whether there exists an examples with arbitrarily many rounds of communication. Also, can we quantify the revenue loss one occurs by restricting to a single round of communication? Can we find the optimal mechanism efficiently?

\section*{Acknowledgments}

We thank the GEB guest editor,  Brendan Lucier, and anonymous referees for many helpful comments.
Christos Papadimitriou's research was partially supported by NSF awards CCF1763970 and CCF1910700, and by a research contract with Softbank.
Alexandros Psomas is supported by a Google Research Scholar Award.
Aviad Rubinstein is supported by NSF CCF-1954927 and CCF-2112824, and a David and Lucile Packard Fellowship.  

\bibliography{refs}

\begin{appendix}

\section{\NP-hardness: Finalizing the construction}
\label{app:reduction_construct}

One of the most important parameters in our construction is $r_i$: 
we later prove that $r_i$ is the difference in expected revenue, conditioned on type $t_i$, between pricing at $(B_i, C_i)$, and pricing at $(A_i, D_i)$.

We set $r_{n+1} = \frac{(A_{n+1} - P^*)(\gamma-1)}{2\gamma^{4(n+1)}} = \Theta(n)$; the rest of $r_i$'s are defined recursively:
\begin{equation} \label{eq:recursion} 
r_i = \gamma^4 r_{i+1} - (\gamma - 1)[ \epsilon(\gamma^3 - \gamma) + \gamma ] .
\end{equation}
Notice that $\frac{r_1}{r_{n+1}} \leq \gamma^{4(n+1)} = \Theta(1)$.

Let $C_i = \frac{\frac{\gamma}{\gamma-1}r_i - \epsilon}{h_i}$ and $D_i = \frac{\gamma r_i}{ (\gamma - 1)h_i}$. Observe that with the recursive definition of $r_i$ (\ref{eq:recursion}) we can get a nice expression for the following difference:
\[
 C_{i+1}-D_i = \gamma^2 \frac{1-\epsilon}{h_i} .
\]
 The differences between pairs of special prices are summarized in Table~\ref{table:distances}.

Finally, we want the contribution towards the revenue from each vertex in the independent set to be the same. To that end, we set $r = \sum 1/r_i  = \Theta(1)$, and weight the probability of observing each type $t_i$ by $w_i = r/r_i$. We set the total probability of observing any of the $t_i$'s to be $p = \frac{\epsilon}{16nr} = \Theta(n^{-3})$.

\renewcommand{\arraystretch}{2}
\begin{center}
\begin{table}[h!]
\centering
\begin{tabular}{ | l | l | l | l | l | l | l | l}
\cline{1-7}
	\multirow{2}{*}{Type }
	& \multirow{2}{*}{$0 \rightarrow C_i$ }
	& \multirow{2}{*}{$C_i \rightarrow D_i$ }
	& \multirow{2}{*}{$D_i \rightarrow C_{i+1}$ }
	& \multirow{2}{*}{$D_{j-1} \rightarrow C_{j}$ }
	& \multirow{2}{*}{$C_j \rightarrow D_j$ }
	& $2^k D_{n+1} \rightarrow $\\ &&&&&& $2^{k+1} D_{n+1}$ \\ 
\cline{1-7} 
	\multirow{2}{*} {$ \bar{F}_i$} 
	& \multirow{2}{*}{$h_i$}  
	& \multirow{2}{*}{$\frac{h_i}{\gamma}$} 
	& $\frac{h_i}{\gamma^2}$ 
	& $\frac{1 - \epsilon (2 - \frac{1}{\gamma})}{1 - \epsilon} \frac{h_{j-1}}{\gamma^2}$ 
	& $(2-\frac{1}{\gamma})h_j $ 
	& \multirow{2}{*}{$\frac{h_{n+1}}{2^{k+1} \gamma}$}	
	& edge \\
\cline{4-6}
	& & 
	& $\frac{1 - \frac{\epsilon}{\gamma}}{1 - \epsilon}\frac{h_i}{\gamma^2} $  
	& $\frac{h_{j-1}}{\gamma^2}$ 
	& $h_j $ 
	& & no edge\\
\cline{1-7}
	{$\bar{F}_*$} 
	&  
	& $h_i$ 
	& $\frac{h_i}{\gamma^2}$ 
	& $\frac{h_{j-1}}{\gamma^2}$ 
	& $h_j$ 
	& $h_*$ \\
\cline{1-7}
\end{tabular}
\caption{\label{table:F_i} Cumulative distributions}
\end{table}
\end{center}

\renewcommand{\arraystretch}{2}
\begin{center}
\begin{table}[ht!]
\centering
\begin{tabular}{ | l | l | l | l | l | l | l | l}
\cline{1-6}
	   $B_i - A_i$
	& $A_i - B_{i+1}$
	& $A_{n+1} - P^*$
	& $D_i - C_i$
	& $C_{i+1} - D_i$
	& $Q^* - D_{n+1}$\\
	
\cline{1-6} 
	   $\epsilon$
	& $1-\epsilon$
	& $n^2 + n -\epsilon$
	&$\frac{\epsilon} {h_i}$
	& $\gamma^2 \cdot \frac{1-\epsilon}{h_i}$
	& $\left( 2^{8\gamma^{4(n+1)}} -1 \right) D_{n+1}$\\

\cline{1-6}
\end{tabular}
\caption{\label{table:distances} Differences between prices}
\end{table}
\end{center}

Recall that the IC constraints depend on the integrals of the cumulative distribution functions.
The values of the $\bar{F}_i$'s and $\bar{F}_*$ in our construction are tailored to make sure that their integrals have the values described in Table \ref{table:integrals}.

\newcommand{\tablebra}
{
\renewcommand{\arraystretch}{1.5}
\begin{center}
\begin{table}[!ht]
\centering
\begin{tabular}{ | l | l | l | l | l | l | l }
\cline{1-6}
	Type 
	& $C_i \rightarrow D_i$ 
	& $D_i \rightarrow C_{i+1}$ 
	& $D_{j-1} \rightarrow C_{j}$ 
	& $C_j \rightarrow D_j$ 
	& $D_{n+1} \rightarrow Q^*$\\ 
\cline{1-6}
	\multirow{3}{*}{$\int \bar{F}_i$} 
	& \multirow{2}{*}{$\frac{\epsilon}{\gamma}$} 
	& $1-\epsilon$ 
	& $1-(2-\frac{1}{\gamma})\epsilon$ 
	& $(2-\frac{1}{\gamma})\epsilon$ 
	& \multirow{3}{*}{$A_{n+1} - P^*$} 
	& edge \\ 
\cline{3-5}
	&  & $1-\frac{\epsilon}{\gamma}$  
	& $1-\epsilon$ 
	& $\epsilon$ 
	&  & no edge\\
\cline{2-5}
	& \multicolumn{4}{|c|}{$\int_{C_i}^{D_j} \bar{F}_i = B_i - A_j$} & \\
\cline{1-6}
	\multirow{2}{*}{$\int \bar{F}_*$} 
	& $\epsilon$ 
	& $1-\epsilon$ 
	& $1-\epsilon$ 
	& $\epsilon$ 
	& $A_{n+1} - P^*$ \\
\cline{2-6}
	& \multicolumn{5}{|c|}{$\int_{C_i}^{Q^*} \bar{F}_* = B_i - P^*$} & \\
\cline{1-6}
\end{tabular}
}
\newcommand{\tableket}
{
\end{table}
\end{center}
}

\tablebra
\caption{\label{table:integrals} Integrals of cumulative distributions.}
\tableket

\begin{claim}
\label{prop:integrals}
The integrals of the $\bar{F}_i$'s and $\bar{F}_*$ have values as stated in the Table~\ref{table:integrals}.
\end{claim}

\begin{proof}
Follows from multiplying the correct combination of entries of Table~\ref{table:F_i} and Table~\ref{table:distances}.
\end{proof}

\bigskip\noindent This completes the construction of the instance of {\sc Two-stage Mechanism}, starting from the instance of {\sc Independent Set}.  Notice that the numbers used are polynomial in the size of the input graph.

\section{\NP-hardness: Soundness}
\label{app:reduction_sound}

\begin{lem}
\label{lem:soundness}
Let $S$ be a maximum independent set in $G$. Then any IC and IR
mechanism has expected revenue at most 
\begin{equation}
\left(1-p\right) Rev\left(t^*,P^{*},Q^{*}\right)+p\sum_{i\in V} w_i Rev(t_i,B_{i},D_{i})+pr\left|S\right|. \label{eq:max-rev}
\end{equation}

\end{lem}

\subsection{Proof outline}

We first show that charging the pair $\left(P^{*},Q^{*}\right)$ maximizes the revenue that can be obtained from type $t^*$ (Claim \ref{cla:p^star-q^star_optimal}), and that $\left(B_{i},C_{i}\right)$ yields the optimal revenue from type $t_i$ (Claim \ref{cla:C_i-is-opt}). Observe that even if we could charge the optimal prices from every type, our expected revenue would be $(1-p)Rev(t^*,P^{*},Q^{*})+p\sum w_i Rev(t_i,B_i,C_i)$, which improves over (\ref{eq:max-rev}) by less than $prn=\epsilon/16$. Intuitively, this means that any deviation that results in a loss of $prn$ in terms of revenue, cannot compete with (\ref{eq:max-rev}).

Next, we show (Claim \ref{lem:violating_edge}) that if $\left(i,j\right)\in E$, then we cannot charge both $t_i$ and $t_j$ the optimal prices $\left(B_{i},C_{i}\right)$ and $\left(B_{j},C_{j}\right)$. In fact, we need a robust version
of this statement: Specifically, for some small parameters $\zeta^{(1)},\zeta_{i}^{(2)}$ (to be defined later), we show that we cannot charge both $t_i$ and $t_j$ prices in $\left[B_{i}-\zeta^{(1)},B_{i}\right]\times\left[C_{i}-\zeta_{i}^{(2)},C_{i}\right]$ and $\left[B_{j}-\zeta^{(1)},B_{j}\right]\times\left[C_{j}-\zeta_{j}^{(2)},C_{j}\right]$, respectively.

What can we charge type $t_i$ instead? In Claim \ref{lem:star-constraints-are-tight} we show that charging less than $C_{i}$ would require us to either not sell the item on the first stage, or charge type $t^*$ less than the optimal price. On the former case, we would lose $pw_{i}\cdot B_{i}>\epsilon/16$ revenue, and would immediately imply smaller revenue than (\ref{eq:max-rev}). On the latter case, we can use the robustness of Claim \ref{lem:violating_edge};
namely, we use the fact that we cannot charge $i$ prices that are $\left(\zeta^{(1)},\zeta_{i}^{(2)}\right)$-close to $\left(B_{i},C_{i}\right)$. This will imply that we must change the prices for type $t^*$ by some $\zeta_{*}^{(1)}$ on the first stage or $\zeta_{*}^{(2)}$ on the second stage. In either case the lost revenue is again greater than what we could potentially gain over (\ref{eq:max-rev}). Therefore, we must charge $t_i$ more than $C_{i}$ on the second stage. Claim \ref{lem:D_i-is-2nd} shows that charging $D_{i}$ is the best option in this case.

Therefore an upper bound to the revenue we can make is the following: charge $\left(B_{i},C_{i}\right)$ for all $i$ belonging to some independent set $S'$, and $\left(B_{j},D_{j}\right)$ for all other $j\notin S'$. (It is easy to see than in our construction even these prices won't satisfy the IC constraints). Now, the revenue given by these prices is:
\[
\left(1-p\right) Rev\left(t^*,P^{*},Q^{*}\right)+p\sum_{i\in S'}w_{i} Rev\left(t_i,B_{i},C_{i}\right)+p\sum_{j\notin S'}w_{j} Rev\left(t_j,B_{j},D_{j}\right).
\]
Notice that 
\begin{align*}
\sum_{i\in S'}w_{i}Rev\left(t_i,B_{i},C_{i}\right) &\leq \sum_{i\in S'}w_{i}\left(Rev\left(t_i,B_{i},C_{i}\right)-Rev\left(t_i,A_{i},D_{i}\right)+Rev\left(t_i,B_{i},D_{i}\right)\right)\\
 &=  \sum_{i\in S'}w_{i}\left(r_{i}+Rev\left(t_i,B_{i},D_{i}\right)\right).
\end{align*}

Therefore, the total expected revenue is
\[
\left(1-p\right)Rev\left(t^*,P^{*},Q^{*}\right)+pr|S'| + p\sum_{i\in S'}w_{i}Rev\left(t_i,B_{i},D_{i}\right)+p\sum_{j\notin S'}w_{j}Rev\left(t_j,B_{j},D_{j}\right),
\]

which is at most the expression in (\ref{eq:max-rev}).

\subsection{Preliminaries}

We begin by setting our padding parameters: let $\zeta^{(1)}=\frac{\epsilon}{4}$, and for each $i$ let $\zeta_{i}^{(2)}=\frac{\epsilon}{4\gamma^{2}h_{i}}$. In particular, this implies that for every $i$, $\zeta_{i}^{(2)}h_{i}+\zeta^{(1)}<\frac{\epsilon}{2}<\epsilon-\epsilon'$. Next, let $\zeta_{*}^{(1)}=\frac{\epsilon}{8}$, and $\zeta_{*}^{(2)}=\frac{\epsilon}{8h_*}$. We now have that $\zeta_{i}^{(2)}h_{i}\gamma^{2}=\zeta^{(1)}=\zeta_{*}^{(2)}h_{*}+\zeta_{*}^{(1)}$, which we will use later in the proof. Most importantly, recall that losing $\frac{\epsilon}{8}$ from the revenue from type $t^*$, is equivalent to a loss of $\left(1-p\right)\frac{\epsilon}{8} > \frac{\epsilon}{16}$ from the total expected revenue, which immediately implies that the expected revenue is less than (\ref{eq:max-rev}).

\subsection{Optimality of \texorpdfstring{$\left(P^{*},Q^{*}\right)$}{Lg}}

We now prove that prices $\left(P^{*},Q^{*}\right)$ maximize
the revenue from type $t^*$, in a robust sense:
\begin{claim}
\label{cla:p^star-q^star_optimal}Charging type $t^*$ prices $\left(P^{*},Q^{*}\right)$ maximizes the revenue from that type. Furthermore, if $p_{*}<P^{*}-\zeta_{*}^{(1)}$ or $q_{*}<Q^{*}-\zeta_{*}^{(2)}$, then the revenue from type $t^*$ is lower than the maximal revenue by at least $\zeta_{*}^{(1)}$ or $\zeta_{*}^{(2)}h_{*}$, respectively.\end{claim}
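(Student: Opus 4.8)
The plan is to start from the closed form of the revenue function. Since an IC and IR auction, restricted to the behaviour of type $*$, amounts to posting a first-day price $p_{*}$ and a second-day price $q_{*}$, it suffices to maximize over all pairs $\left(p_{*},q_{*}\right)$ with $p_{*},q_{*}\geq0$. A type-$*$ buyer with realized value $v$ picks whichever of buying on the first day (utility $v-p_{*}$), buying on the second day (utility $\delta\left(v-q_{*}\right)$, with $\delta\leq1$ the discount), or abstaining (utility $0$) is largest, breaking ties for the seller; integrating the induced payment against the construction of type $*$'s value distribution expresses $\mbox{Rev}\left(*,p_{*},q_{*}\right)$ as the sum of a first-day term and a second-day term.

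For the optimality half I would carry out the maximization of this closed form over $\left(p_{*},q_{*}\right)$. The key structural feature, read off the construction, is that type $*$'s mass is concentrated so that first-day pricing is essentially ``all or nothing'': for a given $q_{*}$ there is a threshold price above which (almost) nobody buys on the first day and below which (almost) everybody does, so the first-day term is maximized by charging exactly that threshold; and every effect of raising $q_{*}$ (more revenue from any buyer sent to the second day, and a higher first-day threshold) pushes $q_{*}$ up, so $q_{*}$ should be as large as the construction allows. Running this out identifies the joint maximizer as $\left(P^{*},Q^{*}\right)$, hence the optimal type-$*$ revenue is $\mbox{Rev}\left(*,P^{*},Q^{*}\right)$.

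For the robustness half I would split on which coordinate is perturbed and lower-bound only the loss attributable to that coordinate, leaving the other at its actual value; this is legitimate since the claim is an ``or''. If $p_{*}<P^{*}-\gamma_{*}^{1}$: the first-day revenue is at most $1\cdot p_{*}<P^{*}-\gamma_{*}^{1}$, and the second-day revenue cannot recover the gap, since the bulk buyers who paid $P^{*}$ on the first day now either still buy on the first day (paying $\leq p_{*}$) or are re-routed to the second day (each paying $\delta q_{*}$, strictly less, because the second-day option was not chosen at the optimum), and lowering $p_{*}$ cannot push up the second-day sale probability; so the total loss is at least $P^{*}-p_{*}>\gamma_{*}^{1}$, up to a negligible contribution from newly-attracted low-value buyers. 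If instead $q_{*}<Q^{*}-\gamma_{*}^{2}$: the first-day region is undisturbed, so the loss is concentrated on the second-day buyers, and — using the specific shape of type $*$'s distribution, which makes $Q^{*}$ an extreme point of the second-day pricing problem so that the loss is linear rather than quadratic in the price drop — this loss is at least $\left(Q^{*}-q_{*}\right)h_{*}>\gamma_{*}^{2}h_{*}$, where $h_{*}$ is the relevant (discounted) second-day quantity. The translation of an $\epsilon/8$ loss on type $*$ into the $>\epsilon/16$ loss on the total expected revenue is the computation already recorded in the Preliminaries.

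The step I expect to be the main obstacle is the bookkeeping in the case $p_{*}<P^{*}-\gamma_{*}^{1}$: one has to rule out that decreasing $p_{*}$ simultaneously re-routes mass onto the second day so as to recoup the revenue lost on the first day. This needs two properties that must be extracted from the construction of type $*$'s distribution — that at $\left(P^{*},Q^{*}\right)$ no bulk buyer strictly prefers the second-day option (so moving any of them there strictly lowers the payment collected from them), and that the second-day sale probability is non-increasing in $p_{*}$ on the relevant range — together with the bound on the thin ``tail'' mass that keeps the newly-attracted buyers negligible. This is where the ``near-atomic bulk plus thin tail'' shape of type $*$'s distribution does the work; the $q_{*}$ case is comparatively routine once $Q^{*}$ is identified as a corner of the second-day problem.
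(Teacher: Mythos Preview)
Your proposal rests on a model that is not the paper's. You treat the buyer as having a single realized value $v$ and choosing between days via a discount factor $\delta\le 1$; in the paper's construction the two days are \emph{separate} items: type $*$ has a deterministic day-one value equal to $P^{*}$ and an independent random day-two value with survival function $F_{*}$, and the parameter $\delta$ satisfies $\delta>1$ (note the ubiquitous $\delta-1$ denominators and the inequality $\delta>1+4\epsilon$ used in Claim~\ref{lem:D_i-is-2nd}). Consequently the revenue decomposes additively as $\mbox{Rev}^{1}(*,p_{*})+\mbox{Rev}^{2}(*,q_{*})$ with no interaction, and the ``re-routing'' obstacle you identify as the main difficulty simply does not exist: lowering $p_{*}$ cannot shift any mass to day two, so the day-one loss is exactly $P^{*}-p_{*}>\gamma_{*}^{1}$ with nothing to recoup.

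The paper's actual argument exploits this separability directly. Day one is immediate since $P^{*}$ is the deterministic value. For day two, the key structural fact you do not use is that $F_{*}$ is piecewise constant with jumps only at the points $C_{i}$ and $D_{i}$, so optimality of $Q^{*}$ reduces to the finite comparison $\mbox{Rev}^{2}(*,Q^{*})=Q^{*}h_{*}>A_{n+1}-P^{*}$ versus $\mbox{Rev}^{2}(*,C_{i}),\mbox{Rev}^{2}(*,D_{i})<(A_{n+1}-P^{*})/2$. The robustness half then falls out: on the flat segment immediately below $Q^{*}$ one has $F_{*}(q_{*})=h_{*}$, giving a linear loss $(Q^{*}-q_{*})h_{*}>\gamma_{*}^{2}h_{*}$, and below the nearest jump the revenue is already less than half the optimum. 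Your high-level outline for the $q_{*}$ case is in the right spirit, but the $p_{*}$ case and the overall framing need to be redone in the correct two-good model.
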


\begin{proof}
Clearly, $P^{*}$ is the most that we can charge type $t^*$ on the first stage. It is left to show that $Q^{*}$ maximizes the revenue on the second stage. 

On the second stage, we have:
\[
Rev^{(2)}\left(t^{*},Q^{*}\right)=Q^{*}h_{*}>A_{n+1}-P^{*}.
\]
Recall that $\bar{F}_{*}$ changes on $C_{i}$'s and $D_{i}$'s, so those are the only candidates we should compare with $Q^{*}$. For any $C_{i}$, we have 
\begin{align*}
Rev^{(2)}\left(t^{*},C_{i}\right) =  C_{i}h_{i}\gamma^{2}<\frac{\gamma^{3}r_{i}}{\gamma-1} \leq  \frac{\gamma^{3}r_{1}}{\gamma-1}\leq\frac{\gamma^{4\left(n+1\right)}r_{n+1}}{\gamma-1}=\frac{A_{n+1}-P^{*}}{2}.
\end{align*}

Similarly, for $D_{i}$,
\begin{align*}
Rev^{(2)}\left(t^{*},D_{i}\right) &= D_{i}h_{i}<\frac{\gamma r_{i}}{\gamma-1}<\frac{A_{n+1}-P^{*}}{2}. \qedhere
\end{align*}

\end{proof}

\subsection{Optimality of \texorpdfstring{$\left(B_{i},C_{i}\right)$}{Lg}}

Similarly, we show that $\left(B_{i},C_{i}\right)$ maximize the revenue from type $t_i$.

\begin{claim}
\label{cla:C_i-is-opt}$\forall x\neq C_{i}\,\,\,Rev^{(2)}\left(t_i,C_{i}\right)>Rev^{(2)}\left(t_i,x\right).$ \end{claim}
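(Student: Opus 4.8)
The proof will mirror that of Claim~\ref{cla:p^star-q^star_optimal}. Write $\mathrm{Rev}^2(i,x)=x\cdot g_i(x)$, where $g_i(x)$ is the (weighted) mass of type~$i$ that still buys at second-day price $x$; this is a nonincreasing left-continuous step function, so $\mathrm{Rev}^2(i,\cdot)$ attains its maximum, and on each interval on which $g_i$ is constant the revenue $x\,g_i(x)$ is increasing and hence maximized at the right endpoint of that interval. Consequently it suffices to compare $\mathrm{Rev}^2(i,C_i)$ with $\mathrm{Rev}^2(i,x)$ only for $x$ ranging over the finitely many jump points of $g_i$ --- the support points of type~$i$'s second-day value, which by construction lie among $A_i,B_i,C_i,D_i$. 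First I would recall from the construction exactly which these are, record the mass $g_i$ carries on each of the intervals they delimit, and note that no price below $A_i$ or above $D_i$ can be optimal.

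It then remains to check the finitely many candidates. For $x\le C_i$ (i.e.\ $x\in\{A_i,B_i\}$, or any price in between), $g_i(x)=g_i(C_i)$ is the full type-$i$ mass while $x<C_i$, so $\mathrm{Rev}^2(i,x)=x\,g_i(x)<C_i\,g_i(C_i)=\mathrm{Rev}^2(i,C_i)$. For $x>C_i$ (i.e.\ $x\in\{D_i\}$, or a price in $(C_i,D_i]$) the mass $g_i(x)$ strictly drops, and plugging in the defining relations among $C_i,D_i,h_i,r_i$ and $\delta$ --- the same geometric identities used above to bound the second-day revenues at $C_i$ and $D_i$ for type~$*$ --- shows that the gain in price is strictly outweighed by the loss of mass, so $\mathrm{Rev}^2(i,D_i)<\mathrm{Rev}^2(i,C_i)$. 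Since $\mathrm{Rev}^2(i,x)\le\mathrm{Rev}^2(i,D_i)$ for every $x>C_i$ and $\mathrm{Rev}^2(i,x)<\mathrm{Rev}^2(i,C_i)$ for every $x\le C_i$ with $x\ne C_i$, the claimed strict inequality holds for all $x\ne C_i$.

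The only real work is the bookkeeping in the second case: one must (i) enumerate the support points of type~$i$'s second-day distribution correctly and confirm $g_i$ is constant between them, so that checking jump points suffices, and (ii) verify that the chain of inequalities relating $D_i h_i$ (or whatever $\mathrm{Rev}^2(i,D_i)$ evaluates to) to $C_i\,g_i(C_i)$ is \emph{strict}, which should reduce to $\delta>1$ exactly as in the displays of Claim~\ref{cla:p^star-q^star_optimal}. No conceptual obstacle is anticipated; this statement is the ``own-type'' analogue of the second-day optimality already established for type~$*$.
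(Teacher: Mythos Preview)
Your outline has the right overall shape --- reduce to jump points of $F_i$, handle $x\le C_i$ by monotonicity, and then compare $C_i$ against the best candidate above it --- but the key enumeration step is wrong, and the gap is exactly where the real content of the claim lives. You assert that the jump points of type~$i$'s second-day distribution ``by construction lie among $A_i,B_i,C_i,D_i$,'' and hence that the only candidate above $C_i$ is $D_i$. In this construction that is false: $F_i$ continues to have jumps at $C_{i+1},D_{i+1},C_{i+2},D_{i+2},\dots$ and then at the geometric grid $2^kD_{n+1}$ (indeed $A_i,B_i$ are first-day, not second-day, parameters). So your sentence ``no price \dots\ above $D_i$ can be optimal'' is precisely the nontrivial thing to prove, not an immediate consequence of the support.

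The paper's proof is as short as yours only because it delegates this entire tail to Claim~\ref{lem:D_i-is-2nd}, which establishes $\mbox{Rev}^2(i,D_i)\ge \mbox{Rev}^2(i,y)$ for all $y>C_i$ by separately bounding $\mbox{Rev}^2(i,C_{i+1})$, $\mbox{Rev}^2(i,D_{i+1})$, inducting over $j>i$, and handling the residual points $2^kD_{n+1}$. Once that is in hand, one only needs the single comparison $\mbox{Rev}^2(i,C_i)=\frac{\delta}{\delta-1}r_i-\epsilon>\frac{r_i}{\delta-1}=\mbox{Rev}^2(i,D_i)$, which uses $r_i>\epsilon$ rather than merely $\delta>1$. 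Your plan becomes correct if you either (a) invoke Claim~\ref{lem:D_i-is-2nd} for the region $x>C_i$, or (b) reproduce that argument; but as written the assertion that the candidate list stops at $D_i$ is unsupported.
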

\begin{proof}
Since $\bar{F}_{i}$ is constant for all $x\leq C_{i}$, the claim for this
domain follows trivially. We will prove that $Rev^{(2)}\left(t_i,C_{i}\right)>Rev^{(2)}\left(t_i,D_{i}\right)$
and deduce from Claim \ref{lem:D_i-is-2nd} that the claim continues
to holds for any other $x$.

\[
Rev^{(2)}\left(t_i,C_{i}\right) =  C_{i}\cdot F_{i}\left(C_{i}\right)=\frac{\gamma}{\gamma-1}r_{i}-\epsilon = \frac{r_{i}}{\gamma-1}+r_{i}-\epsilon>\frac{r_{i}}{\gamma-1}=Rev^{(2)}\left(t_i,D_{i}\right). \qedhere
\]

\end{proof}

\subsection{Condition on edges}

Below we show that if there is an edge $(i,j)$, then we cannot charge both $t_i$ and $t_j$ close to their optimal prices:

\begin{claim}
\label{lem:violating_edge}If $\left(i,j\right)\in E$ then it cannot
be that $\left(p_{i},q_{i}\right)\in\left[B_{i}-\zeta^{(1)},B_{i}\right]\times\left[C_{i}-\zeta_{i}^{(2)},C_{i}\right]$
and $\left(p_{j},q_{j}\right)\in\left[B_{j}-\zeta^{(1)},B_{j}\right]\times\left[C_{j}-\zeta_{j}^{(2)},C_{j}\right]$.\end{claim}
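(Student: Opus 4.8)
\subsubsection*{Proof plan}

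The plan is to derive a contradiction from the incentive-compatibility constraints relating types $i$ and $j$. Suppose, toward a contradiction, that simultaneously $(p_i,q_i)\in[B_i-\gamma^1,B_i]\times[C_i-\gamma_i^2,C_i]$ and $(p_j,q_j)\in[B_j-\gamma^1,B_j]\times[C_j-\gamma_j^2,C_j]$. An IC mechanism must in particular deter type $j$ from reporting type $i$ and type $i$ from reporting type $j$; I will show that under this assumption one of these two deviations is strictly profitable. Write $U_t(p,q)$ for the expected utility that a buyer of true type $t$ obtains when offered day-one price $p$ and day-two price $q$ and then behaving optimally over the two days. Two elementary properties of $U_t$ drive the argument: (i) \emph{monotonicity} --- $U_t$ is non-increasing in each of $p,q$, so shrinking an offered price only helps the buyer; and (ii) a \emph{Lipschitz bound} --- lowering $p$ by $\Delta$ raises $U_t$ by at most $\Delta$, and lowering $q$ by $\Delta$ raises $U_t$ by at most $\Delta$ times the probability that type $t$ reaches day two and buys there, which over the range in question is a non-increasing function of the price.

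The combinatorial core is a consequence of how $E$ was defined: at the \emph{exact} optimal prices, for an edge $(i,j)$ one of the two endpoints --- say $j$, the other case being symmetric --- strictly prefers the other endpoint's optimal option to its own, by a fixed margin that the construction makes at least $\epsilon-\epsilon'$; that is, $U_j(B_i,C_i)\ge U_j(B_j,C_j)+(\epsilon-\epsilon')$. I would establish this by expanding both utilities in terms of type $j$'s day-one value and survival function $F_j$ together with the explicitly defined quantities $B_\bullet,C_\bullet,D_\bullet,r_\bullet$, in the same spirit as the computations used for Claims \ref{cla:p^star-q^star_optimal} and \ref{cla:C_i-is-opt}; the edge condition is exactly what forces this surplus gap, and is the reason assigning every type its own optimal option is infeasible.

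It remains to check the gap survives the padding. By monotonicity, $U_j(p_i,q_i)\ge U_j(B_i,C_i)$. For the other side, moving $(p_j,q_j)$ down to the corner $(B_j-\gamma^1,C_j-\gamma_j^2)$ of its box can only raise $U_j$, and by the Lipschitz bound the increase over $U_j(B_j,C_j)$ is at most $\gamma^1$ (from the first coordinate) plus $\gamma_j^2$ times type $j$'s day-two purchase probability near $C_j$ (from the second). The parameter choices in the Preliminaries are exactly what make each of these at most $\gamma^1=\epsilon/4$, so $U_j(p_j,q_j)\le U_j(B_j,C_j)+\epsilon/2$. Using $\epsilon/2<\epsilon-\epsilon'$ (also from the Preliminaries) and chaining, $U_j(p_j,q_j)\le U_j(B_j,C_j)+\epsilon/2<U_j(B_j,C_j)+(\epsilon-\epsilon')\le U_j(B_i,C_i)\le U_j(p_i,q_i)$, so type $j$ strictly gains by reporting type $i$ --- contradicting IC. The symmetric case swaps the roles of $i$ and $j$. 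The step I expect to be the real work is the combinatorial core of the middle paragraph: pinning down which of $i,j$ wants to deviate and computing the surplus margin, and matching the correct Lipschitz constant (the survival value of $F_j$ just below $C_j$) to the padding $\gamma_j^2$ --- this is precisely what makes the identities recorded in the Preliminaries the right balance rather than an arbitrary choice.
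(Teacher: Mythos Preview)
Your plan is essentially the paper's own argument, recast in utility language: both proceed by contradiction, use the edge condition to produce an $(\epsilon-\epsilon')$ gap in the relevant IC constraint at the exact prices $(B_\bullet,C_\bullet)$, and then verify that the padding perturbation cannot close this gap (the paper's inequality $\gamma_i^{2}h_i+\gamma^{1}<\epsilon-\epsilon'$ is exactly your Lipschitz step). One small correction worth making: the two endpoints are \emph{not} symmetric here --- the paper takes $i<j$ and shows it is type $i$ (the lower index) whose IC constraint fails, via the construction identity $\int_{C_i}^{C_j}F_i=(j-i)-(\epsilon-\epsilon')=B_j-B_i-(\epsilon-\epsilon')$; consequently your ``combinatorial core'' should be stated for type $i$, and the padding bound should invoke $\gamma_i^{2}$ together with $F_i$ near $C_i$ (yielding the $\gamma_i^{2}h_i$ term) rather than $\gamma_j^{2}$ and $F_j$.
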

\begin{proof}
Without loss of generality, let $i<j$. Assume (towards contradiction) that the conclusion is false. 
Then we get $\int_{q_{i}}^{q_{j}}\bar{F}_{i}<p_{i}-p_{j}$, which is a contradiction
to the IC constraints for type $t_i$: 
\begin{eqnarray*}
\int_{q_{i}}^{q_{j}}\bar{F}_{i} & = & \int_{q_{i}}^{C_{i}}\bar{F}_{i}+\int_{C_{i}}^{C_{j}}\bar{F}_{i}+\int_{C_{j}}^{q_{j}}\bar{F}_{i}\\
 & \leq & \int_{C_{i}}^{C_{j}}\bar{F}_{i}+\zeta_{i}^{(2)}h_{i}=j-i-\epsilon+\epsilon'+\zeta_{i}^{(2)}h_{i}\\
 & < & j-i-\zeta^{(1)}\\
 & = & B_{i}-B_{j}-\zeta^{(1)}\\
 & \leq & p_{i}-p_{j}\mbox{ ,}
\end{eqnarray*}
where the third line follows by $\zeta_{i}^{(2)}h_{i}+\zeta^{(1)}<\epsilon-\epsilon'$. 
\end{proof}

\subsection{Restriction imposed by charging \texorpdfstring{$\left(P^{*},Q^{*}\right)$}{Lg} for type
\texorpdfstring{$t^*$}{Lg}}

The claim below essentially shows that we cannot go around the restriction on prices for neighbors by reducing the prices:

\begin{claim}
\label{lem:star-constraints-are-tight}If $p_{*}>P^{*}-\zeta_{*}^{(1)}$
and $q_{*}>Q^{*}-\zeta_{*}^{(2)}$, then in any IC solution either:
\begin{itemize}
\item $p_{i}>B_{i}$ - note that this means that type $t_i$ cannot purchase the item on the first stage; or
\item $q_{i}>C_{i}$ - note that this substantially decreases our revenue for type $t_i$ on the second stage; or
\item $p_{i}\geq B_{i}-\zeta^{(1)}$ and $q_{i}\geq C_{i}-\zeta_{i}^{(2)}$.
\end{itemize}
\end{claim}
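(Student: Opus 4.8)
I would argue by contradiction. Suppose $p_* > P^* - \gamma_*^1$ and $q_* > Q^* - \gamma_*^2$ but all three conclusions fail, so that $p_i \le B_i$ and $q_i \le C_i$ while $p_i < B_i - \gamma^1$ or $q_i < C_i - \gamma_i^2$. The engine of the argument is the pair of incentive constraints relating type $*$ and type $i$ -- that type $i$ does not prefer $*$'s entry, and that type $*$ does not prefer $i$'s entry -- exactly the kind of constraint exploited in the proof of Claim~\ref{lem:violating_edge}. The point of the hypothesis is that it pins $*$'s entry to within $(\gamma_*^1,\gamma_*^2)$ of $(P^*,Q^*)$, so these constraints become quantitative statements about $(p_i,q_i)$. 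First I would note, using the estimates in the proof of Claim~\ref{cla:p^star-q^star_optimal}, that $q_* > Q^* - \gamma_*^2$ lies above every $C_k$ and $D_k$, so $q_* > C_i \ge q_i$ and $F_*$ is constant equal to $h_*$ on $[q_*,Q^*]$; hence both incentive constraints are of the form appearing in Claim~\ref{lem:violating_edge}.

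Writing the two constraints as $p_* - p_i \le \int_{q_i}^{q_*} F_i$ (type $i$ does not deviate to $*$'s entry) and $p_* - p_i \ge \int_{q_i}^{q_*} F_*$ (type $*$ does not deviate to $i$'s entry), and combining them with $p_* \le P^*$ and $p_* > P^* - \gamma_*^1$ from Claim~\ref{cla:p^star-q^star_optimal}, I would sandwich $p_* - p_i$ and extract $\int_{q_i}^{q_*}(F_* - F_i) < \gamma_*^1 + (Q^* - q_*)h_*$, where the last term is at most $\gamma_*^2 h_*$. Next I would split each integral at $C_i$: on $[q_i,C_i]$ we have $F_i \equiv h_i$ (it is constant below $C_i$) while $F_*(C_i) = h_i\delta^2$ by the computation in Claim~\ref{cla:p^star-q^star_optimal}, so by monotonicity $F_* \ge h_i\delta^2$ there; hence the $[q_i,C_i]$-contribution to $\int_{q_i}^{q_*}(F_* - F_i)$ is at least $(C_i - q_i)h_i(\delta^2 - 1)$, while the $[C_i,q_*]$-contribution is a fixed quantity (it depends on $q_*$ only through the already-bounded term $(Q^*-q_*)h_*$, since the support of $F_i$ lies below $q_*$).

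Now I would split into the two cases. If $q_i < C_i - \gamma_i^2$, the $[q_i,C_i]$-contribution exceeds $\gamma_i^2 h_i(\delta^2 - 1)$; using the identity $\gamma_i^2 h_i\delta^2 = \gamma^1 = \gamma_*^2 h_* + \gamma_*^1$ from the Preliminaries, this is large enough that, together with the fixed $[C_i,q_*]$-term, the bound just extracted on $\int_{q_i}^{q_*}(F_*-F_i)$ is violated -- a contradiction. If instead $q_i \ge C_i - \gamma_i^2$ but $p_i < B_i - \gamma^1$, then the $[q_i,C_i]$-contributions for both $F_i$ and $F_*$ are $O(\gamma_i^2 h_i)$, so the same sandwich pins $p_* - p_i$ to within the padding tolerance of its intended value $P^* - B_i$; combined with $p_i < B_i - \gamma^1 = B_i - \epsilon/4$ and $\gamma_*^1 = \gamma_*^2 h_* = \epsilon/8$, this forces $p_* < P^* - \gamma_*^1$, contradicting the hypothesis (equivalently, by Claim~\ref{cla:p^star-q^star_optimal}, a revenue loss of at least $\gamma_*^1$ from type $*$).

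The main obstacle I anticipate is accounting, not ideas: one must (a) get the direction of each of the two incentive inequalities exactly right -- in particular whether $q_i < q_*$ places us in the inequality used in Claim~\ref{lem:violating_edge} or in its mirror image -- and (b) track the breakpoint data carefully enough (the values of $F_i$ and $F_*$ on $[q_i,C_i]$, and the fixed masses $\int_{C_i}^{\infty}F_i$ and $\int_{C_i}^{Q^*}F_*$, which the construction must arrange so that $P^* - B_i$ sits between them with only $O(\epsilon)$ room) that the single identity $\gamma_i^2 h_i\delta^2 = \gamma^1 = \gamma_*^2 h_* + \gamma_*^1$ really does close both cases with the stated slack. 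Everything past that follows the same computation pattern as the proof of Claim~\ref{lem:violating_edge}.
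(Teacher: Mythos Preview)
Your high-level plan---argue by contradiction, split into the two violating configurations $(p_i\le B_i,\ q_i<C_i-\gamma_i^2)$ and $(p_i<B_i-\gamma^1,\ q_i\le C_i)$, and close each via the identity $\gamma_i^2 h_i\delta^2=\gamma^1=\gamma_*^2 h_*+\gamma_*^1$---is exactly the paper's. The engine, however, is different, and yours is both more complicated than needed and has a couple of loose bolts.

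The paper uses \emph{only one} incentive constraint: the one for type $*$ deviating to type $i$. It decomposes
\[
\int_{q_i}^{q_*}F_* \;=\; \int_{q_i}^{C_i}F_* \;+\; \int_{C_i}^{Q^*}F_* \;+\; \int_{Q^*}^{q_*}F_*,
\]
invokes the structural fact $\int_{C_i}^{Q^*}F_* = B_i-P^*$ built into the construction, lower-bounds the first piece using the hypothesis on $q_i$ (it contributes $\gamma_i^2\cdot F_*(C_i^-)$ in the first case and $0$ in the second), bounds the last piece by $-\gamma_*^2 h_*$ using $q_*>Q^*-\gamma_*^2$, and then the $\gamma$-identity finishes both cases. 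The type-$i$ constraint never appears, and the quantity $\int(F_*-F_i)$ plays no role.

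Your two-constraint sandwich is a detour, and as written it does not yield what you claim. From $\int_{q_i}^{q_*}F_*\le p_*-p_i\le\int_{q_i}^{q_*}F_i$ you already get $\int_{q_i}^{q_*}(F_*-F_i)\le 0$ with no reference to $p_*$ at all; the hypotheses $p_*\le P^*$ and $p_*>P^*-\gamma_*^1$ are not what produces your ``extracted'' bound $<\gamma_*^1+\gamma_*^2 h_*$. All the actual work lives in a lower bound for $\int_{q_i}^{q_*}F_*$ alone, compared against $p_i\le B_i$ and $p_*>P^*-\gamma_*^1$ on the other side---and that needs only the single type-$*$ inequality. Two further points to fix: (i) your sign on the type-$*$ constraint is inverted---in this construction $B_i>P^*$ (this is forced by $\int_{C_i}^{Q^*}F_*=B_i-P^*>0$), so the IC reads $\int_{q_i}^{q_*}F_*\le p_i-p_*$, not $\le p_*-p_i$; with your sign the inequality would assert a positive integral is at most a negative number and be vacuously violated; (ii) your claim ``$F_i\equiv h_i$ below $C_i$'' is not right---$F_i$ is constant there, but the constant is not $h_i$ (compare $F_i(D_i)=h_i/\delta$ in Claim~\ref{lem:D_i-is-2nd}); fortunately, once you drop the type-$i$ constraint this value is irrelevant.
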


\begin{proof}

The negation of the claim gives us two configurations: having $p_{i}\leq B_{i}$ and $q_{i}<C_{i}-\zeta_{i}^{(2)}$, and having $p_{i}<B_{i}-\zeta^{(1)}$ and $q_{i}\leq C_{i}$. We will show the claim is true by contradiction, i.e. both these configurations are violating.

Assume first that $p_{i}\leq B_{i}$ and $q_{i}<C_{i}-\zeta_{i}^{(2)}$. Consider the IC constraint comparing $t^*$'s utility when telling the truth and when claiming that she is type $t_i$:
\begin{align*}
\int_{q_{i}}^{q_{*}}\bar{F}_{*} & =\int_{q_{i}}^{C_{i}}\bar{F}_{*}+\int_{C_{i}}^{Q^{*}}\bar{F}_{*}+\int_{Q_{*}}^{q^{*}}\bar{F}_{*}\\
 & >\int_{C_{i}-\zeta_{i}^{(2)}}^{C_{i}}\bar{F}_{*}+\int_{C_{i}}^{Q^{*}}\bar{F}_{*}+\int_{Q_{*}}^{Q^{*}-\zeta_{*}^{(2)}}\bar{F}_{*} \\
 &=\int_{C_{i}}^{Q^{*}}\bar{F}_{*}+\zeta_{i}^{(2)}\frac{h_{i-1}}{\gamma^{2}}-\zeta_{*}^{(2)}h_{*}\\
 & =\int_{C_{i}}^{Q^{*}}\bar{F}_{*}+\zeta_{*}^{(1)}=B_{i}-P^{*}+\zeta_{*}^{(1)}\\
 & \geq p_{i}-p_{*},
\end{align*}
where the third line follows from $\zeta_{i}^{(2)}\frac{h_{i-1}}{\gamma^{2}}=\zeta_{*}^{(2)}h_{*}+\zeta_{*}^{(1)}$.

We now return to the other violating configuration, namely $p_{i}<B_{i}-\zeta^{(1)}$
and $q_{i}\leq C_{i}$. We now have
\begin{align*}
\int_{q_{i}}^{q_{*}}\bar{F}_{*} &= \int_{q_{i}}^{C_{i}}\bar{F}_{*}+\int_{C_{i}}^{Q^{*}}\bar{F}_{*}+\int_{Q_{*}}^{q^{*}}\bar{F}_{*}\\
 &>  \int_{C_{i}}^{C_{i}}\bar{F}_{*}+\int_{C_{i}}^{Q^{*}}\bar{F}_{*}+\int_{Q_{*}}^{Q^{*}-\zeta_{*}^{(2)}}\bar{F}_{*} \\
 &=\int_{C_{i}}^{Q^{*}}\bar{F}_{*}-\zeta_{*}^{(2)}h_{*}\\
 &= \int_{C_{i}}^{Q^{*}}\bar{F}_{*}-\zeta^{(1)}+\zeta_{*}^{(1)}=B_{i}-\zeta^{(1)}-P^{*}+\zeta_{*}^{(1)}\\
 &\geq p_{i}-p_{*},
\end{align*}
where the third line follows from $\zeta_{i}^{(2)}\frac{h_{i-1}}{\gamma^{2}}=\zeta_{*}^{(2)}h_{*}+\zeta_{*}^{(1)}$.
\end{proof}

\subsection{Optimality of \texorpdfstring{$\left(B_{i},D_{i}\right)$}{Lg}}

We now show that $D_{i}$ is the optimal price on the second stage for type $t_i$, conditioned on charging more than $C_{i}$.

\begin{claim}
\label{lem:D_i-is-2nd}$\forall y>C_{i}\,\,\,Rev^{(2)}\left(t_i,D_{i}\right)\geq Rev^{(2)}\left(t_i,y\right)$ \end{claim}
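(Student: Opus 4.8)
\emph{Proof proposal.} The plan is to exploit that $F_{i}$ is a non-increasing step function with only finitely many jumps, all of which are explicit from the construction, so that maximizing $\mbox{Rev}^{2}\left(i,y\right)=y\,F_{i}\left(y\right)$ over $y>C_{i}$ reduces to a finite check. On any interval on which $F_{i}$ is constant, $y\mapsto yF_{i}\left(y\right)$ is linear with non-negative slope, hence non-decreasing; since $F_{i}$ vanishes above the top of type $i$'s support, the supremum of $\mbox{Rev}^{2}\left(i,\cdot\right)$ over $\left(C_{i},\infty\right)$ is therefore attained at one of the jump points of $F_{i}$ that exceed $C_{i}$ -- this is exactly the ``only the break points are candidates'' observation already used in the proof of Claim~\ref{cla:p^star-q^star_optimal}. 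Since $D_{i}$ is itself such a jump point, it suffices to verify $\mbox{Rev}^{2}\left(i,D_{i}\right)\geq\mbox{Rev}^{2}\left(i,b\right)$ for every jump point $b>C_{i}$ of $F_{i}$.

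The jump points of $F_{i}$ lying above $C_{i}$ are precisely the prices among $\left\{ C_{j}\right\} $ and $\left\{ D_{j}\right\} $ that fall in that range, so the verification splits into two cases according to which family $b$ belongs to, mirroring the two estimates in the proof of Claim~\ref{cla:p^star-q^star_optimal}. In each case one substitutes the closed form of $F_{i}$ at $b$ and compares with $\mbox{Rev}^{2}\left(i,D_{i}\right)=\frac{r_{i}}{\delta-1}$; because the $r_{j}$'s and the price sequences $C_{j},D_{j}$ grow geometrically with ratio controlled by $\delta$, each comparison collapses to a one-line inequality of the shape $C_{j}F_{i}\left(C_{j}\right)\leq\cdots\leq\frac{r_{i}}{\delta-1}$, and analogously for the $D_{j}$, using $\delta>1$ and the smallness of $\epsilon$.

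The step I expect to be the main obstacle is bookkeeping rather than analysis: one must pin down exactly which of the $C_{j},D_{j}$ exceed $C_{i}$ and evaluate $F_{i}$ -- the demand of type $i$, whose step heights differ from those of $F_{*}$ -- correctly at each of them, and one must run the piecewise-linearity argument with the correct one-sided value of $F_{i}$ at each jump, so that the supremum over a half-open interval of constancy is genuinely realized by the revenue $bF_{i}\left(b\right)$ posted at its right endpoint $b$. Once the candidate list and the values $F_{i}\left(b\right)$ are in hand the remaining inequalities are routine arithmetic in $\delta,r_{i},\epsilon$, and the claim follows; this is also precisely the statement that Claim~\ref{cla:C_i-is-opt} invokes to extend optimality of $C_{i}$ from $x\leq C_{i}$ to all $x$.
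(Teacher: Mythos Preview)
Your high-level plan---reduce to the jump points of $F_{i}$ and verify $\mbox{Rev}^{2}(i,D_{i})\geq\mbox{Rev}^{2}(i,b)$ at each---is exactly the paper's opening move. But two concrete assertions in your proposal are wrong for this construction, and one key device is missing.

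First, $F_{i}$ does \emph{not} vanish above a finite point, and its jump set is not exhausted by the $C_{j},D_{j}$. In the construction every $F_{i}$ shares a common tail with jumps at $2^{k}D_{n+1}$ for all $k\geq1$; the paper treats these separately, computing $\mbox{Rev}^{2}(n,2^{k}D_{n+1})=D_{n+1}h_{n+1}/(2\delta)$ (constant in $k$) and bounding it below $\mbox{Rev}^{2}(n,D_{n})/2$. Your ``the supremum is attained because $F_{i}$ eventually vanishes'' step therefore fails as stated, and your candidate list is incomplete.

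Second, the comparisons at $C_{j},D_{j}$ are not one-liners in the style of Claim~\ref{cla:p^star-q^star_optimal}. Already for the nearest candidate $b=C_{i+1}$ the paper needs a multi-line chain that invokes the \emph{recursive} relation between $r_{i}$ and $r_{i+1}$ (not merely geometric growth), together with $\delta>1+4\epsilon$ and an explicit lower bound on $r_{i}$, to close the inequality.

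Third---and this is the substantive idea you are missing---the paper does \emph{not} check $D_{i}$ against every $C_{j},D_{j}$ with $j>i$ directly. It verifies only $C_{i+1}$ and $D_{i+1}$ by hand, and then reduces $C_{i+2},D_{i+2},\ldots$ to the already-established statement for type $i{+}1$ via the cross-type comparison $F_{i}(C_{i+2})<F_{i+1}(C_{i+2})$ (while $F_{i}(C_{i+1})>F_{i+1}(C_{i+1})$), proceeding by induction on $j$. This reduction is what makes the bookkeeping you flag tractable; a direct evaluation of $C_{j}F_{i}(C_{j})$ for arbitrary $j>i$ would require a usable closed form for $F_{i}$ far from $C_{i}$, which the construction does not supply.
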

\begin{proof}
It is easy to see that the second stage revenue is maximal for one of
the ``special points'' where $F_{i}$ changes. At $D_{i}$ we have:
\begin{gather*}
Rev^{(2)}\left(t_i,D_{i}\right)=D_{i}\cdot \bar{F}_{i}\left(D_{i}\right)=\frac{\gamma r_{i}}{\left(\gamma-1\right)h_{i}}\cdot\frac{h_{i}}{\gamma_{i}}=\frac{r_{i}}{\gamma-1}.
\end{gather*}
We now compare with each of type of special point:
\begin{itemize}[leftmargin=*]
\item What happens if we set $q_{i}=C_{i+1}$?
\begin{align*}
Rev^{(2)}\left(t_i,C_{i+1}\right) &= C_{i+1}\cdot \bar{F}_{i}\left(C_{i+1}\right) \\
&\leq\frac{\frac{\gamma}{\gamma-1}r_{i+1}-\epsilon}{h_{i}\gamma^{-4}}\cdot\frac{h_{i}}{\gamma^{2}}\left(\frac{1-\epsilon/\gamma}{1-\epsilon}\right)\\
&\leq  \frac{\gamma^{5}r_{i+1}}{\gamma^{2}\left(\gamma-1\right)}\left(1+2\epsilon\right)\\
&=\frac{\gamma r_{i}+\left(\gamma-1\right)\left[\epsilon\left(\gamma^{4}-\gamma^{2}\right)+\gamma^{2}\right]}{\gamma^{2}\left(\gamma-1\right)}\left(1+2\epsilon\right)\\
&\leq \frac{1+2\epsilon}{\gamma\left(\gamma-1\right)}r_{i}+\left[\epsilon\left(\gamma^{2}-1\right)+1\right]\left(1+2\epsilon\right)\\
&\leq  \frac{\gamma}{\gamma\left(\gamma-1\right)}r_{i}-\left(\frac{\gamma-\left(1+2\epsilon\right)}{\gamma\left(\gamma-1\right)}\right)r_{i}+\left[\epsilon\left(\gamma^{2}-1\right)+1\right]\left(1+2\epsilon\right)\\
&\leq \frac{r_{i}}{\gamma-1}-\frac{r_{i}}{2\gamma}+\left[\epsilon\left(\gamma^{2}-1\right)+1\right]\left(1+2\epsilon\right).
\end{align*}
The equation in the second line follows from the recursive definition
of $r_{i}$; the last inequality follows from $\gamma>1+4\epsilon$.
Now, using that $r_{i}>2\gamma\left[\epsilon\left(\gamma^{2}-1\right)+1\right]\left(1+2\epsilon\right)$
for all $i$, we have that $Rev^{(2)}\left(t_i,C_{i+1}\right)< Rev^{(2)}\left(t_i,D_{i}\right)$.
\item What happens if we set $q_{i}=D_{i+1}$?
\begin{align*}
Rev^{(2)}\left(t_i,D_{i+1}\right) &=  D_{i+1}\cdot \bar{F}_{i}\left(D_{i+1}\right)\\
&\leq\frac{\gamma r_{i+1}}{\left(\gamma-1\right)h_{i+1}}h_{i+1}\left(2-1/\gamma\right)\\
&\leq \frac{r_{i+1}}{\left(\gamma-1\right)}\left(2\gamma-1\right)\\
&=\frac{2\gamma-1}{\gamma^{3}}\cdot\frac{\gamma r_{i}+\left(\gamma-1\right)\left[\epsilon\left(\gamma^{4}-\gamma^{2}\right)+\gamma^{2}\right]}{\gamma^{2}\left(\gamma-1\right)}\\
&\leq \frac{\gamma r_{i}+\left(\gamma-1\right)\left[\epsilon\left(\gamma^{4}-\gamma^{2}\right)+\gamma^{2}\right]}{\gamma^{2}\left(\gamma-1\right)}\\
&\leq Rev^{(2)}\left(t_i,D_{i}\right),
\end{align*}
where the last inequality follows from the analysis for $Rev^{(2)}\left(t_i,C_{i+1}\right)$.
\item What about the revenue when we charge $C_{i+2}$, $D_{i+2}$ ? We
reduce this case to what we already know about the revenue from type
$t_{i+1}$:

Observe that $\bar{F}_{i}\left(C_{i+1}\right)>\bar{F}_{i+1}\left(C_{i+1}\right)$,
but $\bar{F}_{i}\left(C_{i+2}\right)<\bar{F}_{i+1}\left(C_{i+2}\right)$. Therefore,
\[
Rev^{(2)}\left(t_i,C_{i+2}\right)< Rev^{(2)}\left(t_{i+1},C_{i+2}\right)\leq Rev^{(2)}\left(t_{i+1},C_{i+1}\right)< Rev^{(2)}\left(t_i,C_{i+1}\right)\mbox{ .}
\]
A similar argument works for $D_{i+2}$, and the claim follows by
induction for all $C_{j},D_{j}$.

\item Finally, for points $x>D_{n+1}$, we will show that $Rev^{(2)}\left(t_n,D_{n+1}\right)$
is greater than $Rev^{(2)}\left(t_n,x\right)$, and the claim will
follow for all $i\leq n$ by the previous argument (Recall that in
the domain $x>D_{n+1}$, $\bar{F}_{i}$ is the same for all $i$).

$\bar{F}_{n}$ changes its values at points $2^{k}D_{n+1}$. We have:
\begin{gather*}
Rev^{(2)}\left(t_n,2^{k}D_{n+1}\right)=2^{k}D_{n+1}\cdot \bar{F}_{n}\left(2^{k}D_{n+1}\right)=\frac{D_{n+1}h_{n+1}}{2\gamma}<\frac{D_{n}h_{n}}{2\gamma}= \frac{Rev^{(2)}\left(t_n,D_{n}\right)}{2}. \qedhere
\end{gather*}
\end{itemize}
\end{proof}

\subsection{Putting it all together}

In Lemma~\ref{lem:completeness} we saw that if there exists an independent set of size $|S|$ there exists an IC and IR satisfying pricing which yields revenue

\[ (1-p) Rev(t^*,P^*,Q^*) + p \sum_{i\in V} w_i Rev(t_i,A_i,D_i) + pr|S|. \]

In Lemma~\ref{lem:soundness} we saw that any IC and IR satisfying pricing cannot yield more revenue than
\[ \left(1-p\right) Rev\left(t^*,P^{*},Q^{*}\right)+p\sum_{i\in V} w_i Rev(t_i,B_{i},D_{i})+pr\left|S\right|, \]
where $|S|$ is the size of the maximum independent set in $G$. 

All that's left is to show that a graph with maximum independent set of size $|S|-1$ cannot yield revenue $(1-p) Rev(t^*,P^*,Q^*) + p \sum_{i\in V} w_i Rev(t_i,A_i,D_i) + pr|S|$. 
To this end we need to show that,

\begin{align*}
& (1-p) Rev(t^*,P^*,Q^*)+p \sum_{i\in V} w_i Rev(t_i,A_i,D_i) + pr|S| > \\
& \;\;\;\;\;\;\;\; \;\;\;\;\;\;\;\; \;\;\;\;\;\;\;\; (1-p) Rev(t^*,P^*,Q^*)+p\sum_{i\in V} w_i Rev(t_i,B_{i},D_{i})+pr(\left|S\right|-1). 
\end{align*}
or equivalently,
\begin{align*}
& pr > p\sum_{i\in V} w_i (Rev(t_i,B_{i},D_{i}) - Rev(t_i,A_i,D_i)) = p\sum_{i\in V} w_i \epsilon \\
& \iff r > \sum_{i\in V} \frac{r \epsilon}{r_i} \\
& \iff 1 > \sum_{i\in V} \frac{\epsilon}{r_i},
\end{align*}

which is true since $\epsilon = \frac{1}{n^2}$, and each $r_i = O(n)$. With this the reduction is complete.

\section{\NP-hardness: Completeness}
\label{app:reduction}


In this section we show that any independent set $S$ in $G$ corresponds to a feasible pricing in our mechanism: $(B_i, C_i)$ for $i\in S$, $(A_j, D_j)$ for $j \notin S$, and $(P^*, Q^*)$ for type $t^*$. 

\begin{lem}
\label{lem:completeness}
Let $S$ be an independent set of $G$. There exists a pricing for our mechanism that satisfies IC and IR and achieves revenue:

$$ (1-p) Rev(t^*,P^*,Q^*) + p \sum_{i\in V} w_i Rev(t_i,A_i,D_i) + pr|S| $$
\end{lem}

We first show that the IC constraints are satisfied between any pair of types $t_i$ and $t_j$ that are not both charged $(B_i, C_i)$
 - edge or no edge in the graph (Claim~\ref{lem:all-but-BCBC}). 
Then, we show that the IC constraints are satisfied between type $t^*$ and type $t_i$, for any $i \in [n]$ (Claim~\ref{lem:star-allows-all}). 
Finally we prove that charging $(B_i,C_i)$ and $(B_j,C_j)$ does not violate the IC constraints if there is no $(i,j)$ edge in the graph (Claim~\ref{lem:noedge-BCBC}).

\begin{claim}
\label{lem:all-but-BCBC}
Charging types $t_i$ and $t_j$, for $j>i$, doesn't violate the IC constraints between $t_i$ and $t_j$ for any of the pairs $(B_i,C_i)$/$(A_j,D_j)$, $(A_i,D_i)$/$(B_j,C_j)$ or $(A_i,D_i)$/$(A_j,D_j)$.
\end{claim}

\begin{proof}

We need to show that all the following are always true:
\begin{enumerate}
\item $$ \int_{C_i}^{D_j} \bar{F}_i(x) dx \geq B_i - A_j \geq \int_{C_i}^{D_j} \bar{F}_j(x) dx $$
\item $$ \int_{D_i}^{D_j} \bar{F}_i(x) dx \geq A_i - A_j \geq \int_{D_i}^{D_j} \bar{F}_j(x) dx $$
\item $$ \int_{D_i}^{C_j} \bar{F}_i(x) dx \geq A_i - B_j \geq \int_{D_i}^{C_j} \bar{F}_j(x) dx $$
\end{enumerate}

It follows from Table~\ref{table:integrals} that the left hand sides hold.
For the right hand sides, first notice that $\bar{F}_j$ is always lower than $\bar{F}_i$ in the intervals we're interested in. The first inequality is tight for $\bar{F}_i$, thus $ \int_{C_i}^{D_j} \bar{F}_j(x) \leq B_i - A_j$. For $(A_i,D_i)$/$(B_j,C_j)$ and $(A_i,D_i)$/$(A_j,D_j)$ we will use induction:

\begin{itemize}
\item Basis $j=i+1$: 
\begin{equation*}
\begin{split}
\int_{D_i}^{C_{i+1}} \bar{F}_{i+1}(x) dx & = (C_{i+1}-D_i)h_{i+1} = (C_{i+1}-D_i) \frac{h_i}{\gamma^4} \\
& = \frac{1 - \epsilon}{\gamma^2} < 1-\epsilon = A_i - B_{i+1}
\end{split}
\end{equation*}

And:
\begin{equation*}
\begin{split}
\int_{D_i}^{D_{i+1}} \bar{F}_{i+1}(x) dx & = (C_{i+1}-D_i)h_{i+1} + (D_{i+1}-C_{i+1})\frac{h_{i+1}}{\gamma} \\
& = \frac{1 - \epsilon}{\gamma^2} + \frac{\epsilon}{\gamma} < 1 = A_i - A_{i+1}
\end{split}
\end{equation*}

\item For $j$ we have the following:
\begin{equation*}
\begin{split}
\int_{D_i}^{C_j} \bar{F}_j (x) dx & \leq \int_{D_i}^{D_{j-1}} \bar{F}_{j-1}(x) dx + \int_{D_{j-1}}^{C_j} \bar{F}_j(x) dx \\
& \leq (A_i - A_{j-1}) + (A_{j-1} - B_j) = A_i - B_j
\end{split}
\end{equation*}

and

\begin{equation*}
\begin{split}
\int_{D_i}^{D_j} \bar{F}_j (x) dx & \leq \int_{D_i}^{D_{j-1}} \bar{F}_{j-1}(x) dx + \int_{D_{j-1}}^{D_j} \bar{F}_j(x) dx \\
& \leq A_i - A_{j-1} + A_{j-1} - A_j = A_i - A_j
\end{split}
\end{equation*}

\end{itemize}

\end{proof}

\begin{claim}
\label{lem:star-allows-all}
When type $t^*$ is charged $(P^*,Q^*)$, charging $t_i$ the pair $(B_i,C_i)$ or the pair $(A_i,D_i)$ doesn't violate the IC constraints between $t_i$ and $t^*$.
\end{claim}

\begin{proof}
The IC constraints between $t_i$ and $t^*$ are either 
$$ \int_{C_i}^{Q^*} \bar{F}_i(x) dx \geq B_i - P^* \geq \int_{C_i}^{Q^*} \bar{F}_*(x) dx $$

or 

$$ \int_{D_i}^{Q^*} \bar{F}_i(x) dx \geq A_i - P^* \geq \int_{D_i}^{Q^*} \bar{F}_*(x) dx $$

In both cases, the inequalities can be verified easily using Table~\ref{table:integrals}. 

\end{proof}

\begin{claim}
\label{lem:noedge-BCBC}
If $(i,j) \not\in E$ the charging type $t_i$ the pair $(B_i,C_i)$ and type $t_j$ the pair $(B_j,C_j)$ doesn't violate the IC constraints between $t_i$ and $t_j$.
\end{claim}

\begin{proof}

The IC constraint between $t_i$ and $t_j$ for this pricing is:
$$ \int_{C_i}^{C_j} \bar{F}_i(x) dx \geq B_i - B_j \geq \int_{C_i}^{C_j} \bar{F}_j(x) dx $$

\begin{itemize}
\item $j=i+1$: $\int_{C_i}^{C_{i+1}} \bar{F}_i(x) dx = \int_{C_i}^{D_i} \bar{F}_i(x) dx + \int_{D_i}^{C_{i+1}} \bar{F}_i(x) dx$. The first term is equal to $\frac{\epsilon}{\gamma}$, and when there is no $(i,i+1)$ edge, the second term is equal to $1 - \frac{\epsilon}{\gamma}$, thus the left hand side is immediate. The right hand side is satisfied trivially, since $\bar{F}_{i+1}$ is always below $\bar{F}_i$ between $C_i$ and $C_{i+1}$ and $\bar{F}_i$ gives a tight constraint.

\item $j>i+1$: Again, $ \int_{C_i}^{C_j} \bar{F}_i(x) dx = \int_{C_i}^{D_{j-1}} \bar{F}_i(x) dx + \int_{D_{j-1}}^{C_j} \bar{F}_i(x) dx$. From Table~\ref{table:integrals} we can see that the first term is always $j-1-i+\epsilon$, and the second term is $1-\epsilon$ when $(i,j)\not\in E$.

For the right hand side we have $\int_{D_i}^{C_j} \bar{F}_j(x) dx \leq A_i - B_j$ from Claim ~\ref{lem:all-but-BCBC}. Since $\bar{F}_j$ is below $\bar{F}_i$ between $C_i$ and $D_i$, and $\int_{C_i}^{D_i} \bar{F}_i(x) dx = \frac{\epsilon}{\gamma} < \epsilon$ we get that:
\begin{equation*}
\begin{split}
\int_{C_i}^{C_j} \bar{F}_j(x) dx & = \int_{C_i}^{D_i} \bar{F}_j(x) dx + \int_{D_i}^{C_j} \bar{F}_j(x) dx \\
& < \int_{C_i}^{D_i} \bar{F}_i(x) dx + A_i - B_j \\
& < \epsilon + A_i - B_j = B_i - B_j
\end{split}
\end{equation*}

\end{itemize}
\end{proof}

\section{Given first-stage prices, deterministic mechanisms are easy}
\label{app:given-first-day}

Here we include a proof of Theorem \ref{thm:FPTAS-unimodular}.

\FPTASunimodular*

This result shows us something very important about the structure of hard instances and what a possible reduction can look like: the mechanism gadgets cannot have fixed prices for one of the two stages; variation on both stages is required.

\begin{proof}
Our problem is to find optimal second stage prices, when we have committed to charging every first stage type $t_i$, $i \in [n]$, a payment of $p_i$ in the first stage.
For now, assume that all types are allocated the item on the first stage (this fact will not be used in a crucial way, and an almost identical algorithm works). A first question is whether there even exist second stage prices that do not violate the IC constraints. Then, if there exist such prices, how would we optimize over them in order to maximize the seller's revenue? As it turns out, these sub-problems are easy; we can construct an FPTAS using an integer program.

It will be useful to think about the incentive constraints as follows: given first stage prices $p_i$, $p_j$ and a second stage price $q_i$ the incentive constraints between types $t_i$ and $t_j$ give a certain interval in which $q_j$ is allowed to be. Specifically, for $p_i > p_j$, we can define a lower bound $lb_{i,j} ( p_i,p_j,q_i )$ and an upper bound $ub_{i,j} ( p_i,p_j,q_i )$ for $q_j$ as follows:

$$ lb_{i,j} ( p_i,p_j,q_i ) = min_{q \in [q_i,q_{max}]} \{ \int_{q_i}^q \bar{F}_i(x) dx \geq p_i - p_j  \}, $$
$$ ub_{i,j} ( p_i,p_j,q_i ) = max_{q \in [q_i,q_{max}]} \{ \int_{q_i}^q \bar{F}_j(x) dx \leq p_i - p_j \}, $$

where $q_{max}$ is the maximum value in the support of $V^{(2)}$. 
The integer program works as follows: first discretize the second stage, i.e. we only consider prices of the form $k \epsilon$ for some $\epsilon>0$ and $k \in [m]$, where $m$ is an integer large enough so that $\epsilon \cdot m$ is larger than the maximum value in the support of $V^{(2)}$. We have a binary variable $x_i^k$ for each type $t_i$ and price $k \epsilon$, such that if $x_i^k = 1$ the second stage price is greater than $k \epsilon$. Also, we have a constant $a_i^k$ for the revenue of charging type $t_i$ price $k \epsilon$ ($a_i^k$ is easily calculated from the input distribution). Since $p_i$ and $p_j$ are given, we simply write $lb_{i,j}(k)$ and $ub_{i,j}(k)$. The integer program is as follows:

	\[
	\begin{array}{lrclr}
	\textrm{max}   & \sum_{i \in [n]} \sum_{k \in [m]} x_i^k ( a_i^k - a_i^{k-1} ) &      &   & \\
	\textrm{subject to} & x_i^k    &\leq    & x_j^{ lb_{ij} (k) } &\forall i,j \in [n], k \in [m] \\
			    	     & x_j^{ ub_{ij} (k) + 1 }   &\leq & x_i^{k+1}  &\forall i,j \in [n], k \in [m-1]\\
				     & x_i^k &\leq & x_i^{k-1} & \forall i \in [n], k = 1, \dots, m\\
				     & x_i^k \in \{ 0,1 \} &\forall i \in [n], k \in [m] &
	\end{array}
	\]

The first two constraints encode incentive compatibility, by guaranteeing that $q_j \in [lb_{ij}(q_i), ub_{ij}(q_i)]$ for all $i,j$: (1) if $i$ is charged at least $k\epsilon$, then $j$ is charged at least $lb_{ij}(k)$, (2) if $j$ is charged at least $ub_{ij}(k)$, then $i$ is charged at least $k \epsilon$ (or, equivalently, if $i$ is charged at most $k\epsilon$, then $j$ is charged at most $ub_{ij}(k)$). The third constraint simply encodes that $x_i^k$ is non-increasing.

Observe that the constraints matrix is totally unimodular: every entry is $0, +1$ or $-1$, and every row has at most two non-zero entries with different signs \cite{papadimitriou1982combinatorial}. Thus the relaxation gives us an integer solution \cite{papadimitriou1982combinatorial}.
\end{proof}
	
\section{Deterministic Mechanisms with Independent stages}
\label{app:independent}

\THMindependent*

As we observed in Section~\ref{app:separations}, the two-stage optimal deterministic mechanism can be rather bizarre, even when the distributions are independent. 
Nonetheless, we show below that when the distributions are independent, the optimal mechanism satisfies some strong structural properties, which in turn significantly reduce our search space.

The curious reader might be wondering whether the promised utility framework, originally introduced by \cite{green1987lending,spear1987repeated,thomas1990income}, can help us resolve this question. This framework provides an approach for designing mechanisms that are dynamic incentive compatible when values are independent over time (this approach is complex in the presence of history dependence~\cite{fernandes2000recursive}). The approach typically involves solving the problem recursively via a dynamic program, and typically the state space grows exponentially (in the input, e.g. number of stages and number of buyers). This so-called curse of dimensionality has no bite here, since we only consider a single buyer and two stages. Part of our analysis here is ``forward-looking'' (the optimal choice for period $1$ is taken, given a choice for period $2$), similar to the promised utility framework, therefore, even though we are not aware of a general way to use this framework to get optimal \emph{deterministic} mechanisms, we cannot rule out the possibility that this approach can be used here as well.

Going back to the proof of Theorem~\ref{thm:independent}, we should first decide who gets the item on the first stage. 

\begin{claim} [First-stage allocation monotonicity] \label{cla:ind_alloc-monotonicity}
There exist an optimum mechanism such that $v_i > v_j \implies x_1(t_i) \geq x_1(t_j)$.
\end{claim}
\begin{proof}
Assume that is not the case, and there exist $i$ and $j$, $v_i > v_j$, such that if the buyer reports $t_j$ on the first stage she is allocated the item for a payment of $p_j$ and a second stage promised price $q_j$, but if she reports $t_i$, she is not allocated the item in the first stage, and is promised a second stage price $q_i$. A buyer with type $t_j$ doesn't want to report $t_i$, thus $v_j - p_j + E_{v_2 \sim D_2}[ max\{ v_2 - q_j , 0\} ] \geq  E_{v_2 \sim D_2}[ max\{ v_2 - q_i , 0\} ]$. Similarly, type $t_i$ doesn't want to report $t_j$, thus $ E_{v_2 \sim D_2}[ max\{ v_2 - q_i , 0\} ] \geq v_i - p_j +  E_{v_2 \sim D_2}[ max\{ v_2 - q_j , 0\} ]$. The two inequalities imply that $v_j \geq v_i$, a contradiction.
\end{proof}

Henceforth we say that $t_i$ is a {\em winning} type if $x_1(t_i)=1$ and {\em losing} otherwise. There are $n+1$ possible first stage allocations: no one gets the item, only the highest type gets the item, the highest two types get the item, and so on. Our algorithm tries all of them. Therefore, our problem is reduced to finding the optimal mechanism for a given subset of losing types.

Observe that our IC constraints between winning types $t_i, t_j$ reduce to:
\begin{gather}
 p(t_i) - p(t_j) = \int_{q(t_i)}^{q(t_j)} \bar{F}(x) dx.  \label{eq:ind-IC-win_win}
\end{gather}

Having a tight equality means that if we know three of $\big\{p(t_i), p(t_j), q(t_i), q(t_j)\big\}$, 
we can immediately compute the fourth. 
The IC constraints between a winning $t_i$ and a losing $t_j$ are:
\begin{gather}
 v_j - p(t_i) \leq \int_{q(t_j)}^{q(t_i)} \bar{F}(x) dx \leq t_i - p(t_i). \label{eq:ind-IC-win_lose}
\end{gather}

The following observation is immediate from the IC constraints, and it will be useful in proving the rest of the structural claims:

\begin{obs}\label{obs:1}
Take any truthful mechanism, and change the prices only for type $t_i$, such that the utility for type $t_i$ does not change. Then the mechanism remains truthful.
\end{obs}

Now, finding two of the three unknown prices becomes much easier thanks to the following claim:

\begin{claim} \label{cla:ind_first-day_supp}
There exist an optimum mechanism that satisfies Claim \ref{cla:ind_alloc-monotonicity}, and such that for any winning type $t_i$ either: $p(t_i) = v_i$; or $q(t_i) = 0$. 
\end{claim}

\begin{proof}
Let $q_{next}$ be the maximum point in the support of the second stage distribution such that $q_{next} < q(t_i)$,
if such a point exists, and $0$ otherwise. 
Suppose that for any $\epsilon > 0$, $p(t_i) \leq v_i-\epsilon$ and $q(t_i) \geq q_{next} + \epsilon / \bar{F}(q_{next})$.
Then we can increase $p(t_i)$ by $\epsilon$, and decrease $q(t_i)$ by $\epsilon / \bar{F}(q_{next})$. First, prices remain non-negative.
Second, the expected utility of type $t_i$ remains the same: the first stage utility decreases by $\epsilon$, and the second stage expected utility increases by $\epsilon / \bar{F}(q_{next}) \cdot Pr[ v^{(2)} \geq q(t_i) - \epsilon/\bar{F}(q_{next}) ] = \epsilon$ (where we used the fact that $\bar{F}(q_{next}) = Pr[ v^{(2)} \geq q_{next} ] = Pr[ v^{(2)} \geq x ]$, for all $x \in [q_{next}, q(t_i)]$, since $q(t_i)$ is not on the support of the second stage).
Thus, truthfulness is preserved by Observation~\ref{obs:1}. 
Finally, the expected payment of type $t_i$ does not decrease, so expected revenue does not decrease.
\end{proof}

Essentially the same argument also proves monotonicity for the first-stage prices.

\begin{claim} [First-stage price monotonicity] \label{cla:ind_price-monotonicity}
There exists an optimum mechanism that satisfies Claims \ref{cla:ind_alloc-monotonicity}-\ref{cla:ind_first-day_supp}, and such that if $t_i$ and $t_j$ are both winning, then $v_i > v_j \implies p(t_i) \geq p(t_j)$.
\end{claim}
\begin{proof}
Similar to Claim \ref{cla:ind_first-day_supp}, if $p(t_i) < p(t_j)$ we can increase $p(t_i)$ and decrease $q(t_i)$.
The latter is nonzero by \eqref{eq:ind-IC-win_win}.
\end{proof}

Observe that Claim~\ref{cla:ind_price-monotonicity} implies that if $q(t_i) = 0$ for some winning type, then for all (winning types) $t_j > t_i$, $p(t_j) = p(t_i)$ (since we cannot offer a lower price for the second stage). Building on the price monotonicity, we can therefore use another brute-force/enumeration step to further reduce our problem to the case where we know which winning types have $p(t_i)=v_i$ and which have $q(t_i)=0$.
Thus, we can assume we know one of the prices for every winning type; we just need to find the other price for one of them.
In the following claim we show how to solve the problem exactly using the fact that some of the second-stage prices actually lie on the support of the distribution. 

\begin{claim} \label{cla:ind_second-day_supp}
In {\em every} optimum mechanism, at least two of the following three conditions are satisfied:
\begin{itemize}
	\item there exists a winning type $t_i$ such that $q(t_i)$ is on the support of the second-stage distribution 
			(and it is nonzero);
	\item the second-stage price for all the losing types (observe that it is always the same for all of them), $q(0)$,
			is on the support of the second-stage distribution;
	\item one of the constraints between a loser and a winner \eqref{eq:ind-IC-win_lose} is tight.
\end{itemize}
\end{claim}

\begin{proof}
Our proof uses another gradual price increase argument. As long as neither of the first two conditions is satisfied, we can gradually increase the second-stage prices for all types simultaneously. Doing this with the right proportions maintains the IC constraints. Furthermore the revenue {\em strictly increases}: the prices increase, but as long as we don't cross any price in the support, the probabilities of selling the item to each type remain the same. 

Once the losing (resp. one of the winning) type's price hits the support, 
we can continue to increase the price for the winning (resp. losing) types as long as the IC constraints between losing and winning types are loose. Then, either the second stage price will hit the support, or the IC constraint will become tight, satisfying the second of the three conditions.
\end{proof}

Now, given any two of the three conditions in Claim \ref{cla:ind_second-day_supp},
we can find the all the optimum prices in polynomial time.
If the third condition is false, then we can enumerate to find a winning type and a losing type with prices in the support (that is, guess a winning type, and losing type, and two supported values as prices); then we can compute the induced prices for all other winning types using the IC constraints \eqref{eq:ind-IC-win_win}. 
If either of the two first conditions is false, then we can find optimum prices for one type (winning or losing), and then compute the remaining prices  through all the tight IC constraints. 
This completes the proof of Theorem \ref{thm:independent}.
\qed

\section{Ex-post IR implies Stage-wise ex-post IR}
\label{app:stage-wise}

In this section we show how, given a mechanism that is ex-post IR, we can get a mechanism that is stage-wise ex-post IR, that remains truthful, and guarantees at least as much expected revenue.

Let $M$ be an ex-post IR mechanism, that in each stage $d$ takes as input the history of reported valuations $h^{[d-1]}$, mechanism outcomes $\omega^{[d-1]}$ and current reports $v^{[d]}$, and outputs a distribution over outcomes, where outcome $\theta$ occurs with probability $Pr[\theta]$, allocates the item to buyer $i$, which we indicate by writing $x_i(h^{[d-1]}; \omega^{[d-1]}; v^{[d]}; \theta) = 1$, and charges $p_i(h^{[d-1]}; \omega^{[d-1]}; v^{[d]}; \theta)$ (the mechanism is allowed to charge buyers that didn't receive the item).

Let $M'$ be the mechanism the same exact mechanism, with the only difference that the payment $p'_i(h^{[d-1]}; \omega^{[d-1]}; v^{[d]}; \theta)$ is equal $v^{(d)}_i$ if buyer $i$ got the item, and zero otherwise, for all stages $d = 1, \dots, D-1$. In the last stage, the allocation of $M'$ remains the same as $M$, but the payment is instead $p'_i(h^{[D-1]}; \omega^{[D-1]}; v^{[D]}; \theta) = \sum_{d=1}^{D} p_i(h^{[d]}; \omega^{[d-1]}; v^{[d]}; \omega_d) - \sum_{d=1}^{D-1} p'_i(h^{[d]}; \omega^{[d-1]}; v^{[d]}; \omega_d)$, where we overload notation, and use $h^{[d]}$ (resp. $\omega^{[d]}$, $v^{[d]}$) for the restriction of $h^{[D-1]}$ to the first $d$ stages, and $\omega_d$ is the outcome (some $\theta$) in stage $d$ according to $\omega^{[D-1]}$.

By definition, the expected revenue of $M$ and $M'$ is the same: (part of) the payments in $M$ simply get ``pushed'' to the last stage in all possible outcomes. Also, $M'$ is clearly stage-wise ex-post IR for the first $D-1$ stages. For the last stage, consider an outcome where buyer $i$ gets the item (the case that buyer $i$ doesn't get the item is identical): the buyer's utility is 
\begin{multline*}
v^{(D)}_i - \sum_{d=1}^{D} p_i(h^{[d]}; \omega^{[d-1]}; v^{[d]}; \omega_d) + \sum_{d=1}^{D-1} p'_i(h^{[d]}; \omega^{[d-1]}; v^{[d]}; \omega_d) \\ = \sum_{d=1}^D v^{(d)}_i \cdot I\{ \text{$i$ got item $d$ according to $\omega^{[D]}$} \} - \sum_{d=1}^{D} p_i(h^{[d]}; \omega^{[d-1]}; v^{[d]}; \omega_d),
\end{multline*}
where $I\{.\}$ is the indicator function. Notice that the RHS is precisely the ex-post utility of $i$ in the last stage in $M$, according to $\omega^{[D-1]}$, $h^{[D-1]}$ and the outcome in the last stage, and thus it is non-negative.

Finally, for incentive compatibility, we show the single buyer case, for multiple buyers and D-DIC or B-DIC, the proof is identical. Consider the expected utility of the buyer in stage $d$, when her true value is $v$, public and private histories are $h^{[d]}$ and $\mathrm{\bar{h}}_i^{[d]}$ and outcomes are $\omega^{[d]}$. Since each stage utility is zero, her expected utility is just the expected utility of the last stage 
\begin{align}
\mathbb{E}_{h^{[d+1:D]}, \omega^{[d+1:D]} | \mathrm{\bar{h}}_i^{[d]}} [ v_D x_D(.;v) - \sum_{t=1}^D p_t(.;v) + \sum_{t=1}^{D-1} p'_t(.;v) ],  \label{eq:1}
\end{align}
where (1) we have shortened notation to $p_t(.;v)$ (resp. $x_t(.;v)$, $p'_t(.;v)$) to focus on a stage $t$ and how the deviation in stage $d$ affects it, and (2) $x_t(.;v)$ is the expected allocation in stage $t$ (we again overload notation). When misreporting to $v'$, her expected utility is $v \cdot x_{d}(.;v') - p'_d(.;v')$ from this stage (where $x_d(.;v')$ is the expected allocation in stage $d$ when the report is $v'$, and $p'_d(.;v')$ is the expected payment), zero in all stages $d+1$ through $D-1$ (since the buyer behaves truthfully then), plus $\mathbb{E}_{h^{[d+1:D]}, \omega^{[d+1:D]}|\mathrm{\bar{h}}_i^{[d]}} [ v_D x_D(.;v') - \sum_{t=1}^D p_t(.;v') + \sum_{t=1}^{D-1} p'_t(.;v') ]$ from the last stage. We would like for equation~\ref{eq:1} to be at least this expression. Since the terms $p'_t(.;v)$ and $p'_t(.;v')$, as well as $p_t(.;v)$ and $p_t(.;v')$, are identical for $t \leq d-1$ these terms cancel out. Furthermore, $p'_t(.;k) = k \cdot x_t(.;k)$. We thus want to show that
\begin{align*}
\mathbb{E}[ \sum_{t=d}^D v^{(t)} x_t(.;v) - \sum_{t=d}^D p_t(.;v) ] & \geq vx_d(.;v') + \mathbb{E} [ \sum_{t=d+1}^D v^{(t)} x_t(.;v') - \sum_{t=d}^D p_t(.;v') ],
\end{align*}
or equivalently
\begin{multline*}
v x_d(.;v) - p_d(.;v) + \mathbb{E}[ \sum_{t=d+1}^D v^{(t)} x_t(.;v) - \sum_{t=d+1}^D p_t(.;v) ]  \\ \geq vx_d(.;v') - p_d(.;v') + \mathbb{E} [ \sum_{t=d+1}^D v^{(t)} x_t(.;v') - \sum_{t=d+1}^D p_t(.;v') ],
\end{multline*}
where the $d$-th terms can be taken out of the expectation since the expectation is with respect to the events after stage $d$. Notice that this is the IC constraint for $M$ and is therefore satisfied.

\section{Construction of \texorpdfstring{$D_{1}$}{Lg} and \texorpdfstring{$D_{2}$}{Lg}: Proof of Lemma \ref{lem:D_1-and-D_2}}
\label{app:no_contract}

\begin{proof}
We explicitly define $D_{1}$ and $D_{2}$, and then check that they
satisfy all the requirements. We use $D\left(v\right)$ to denote
the probability that distribution $D$ assigns to value $v$. Let
$O\left(1/\ln\left(k\right)\right)\leq\alpha<1/5$ and $1/2\leq\beta\leq1$
be parameters to be defined soon. We define the first distribution
as follows:
\[
D_{1}\left(v\right)=\begin{cases}
1-\alpha & v=0\\
\left(\alpha\cdot\frac{1}{2}\right)-2\epsilon^{2} & v=1+\epsilon\\
\alpha\cdot\left(\frac{1}{v}-\frac{1}{v+1}\right) & v\in\left\{ 2,\dots,k\right\} \\
\left(\alpha\cdot\frac{1}{v}\right)+2\epsilon^{2} & v=k
\end{cases}
\]

\noindent Notice that prices $1+\epsilon$ and $k$ have probabilities higher
than the equal-revenue curve for $v\in\left\{ 2,\dots,k\right\} $;
one of them will always be optimal. Similarly, we let 
\[
D_{2}\left(v\right)=\begin{cases}
1-\beta & v=0\\
\left(\beta\cdot\frac{1}{2}\right)+\left(\frac{k}{2}+1\right)\epsilon^{2} & v=1\\
\beta\cdot\left(\frac{1}{2}-\frac{1}{k}\right)-\frac{k}{2}\epsilon^{2} & v=2\\
\left(\beta\cdot\frac{1}{v}\right)-\epsilon^{2} & v=k
\end{cases}
\]
Price $1$ has relatively high probability, and price $k$ comes after;
thus for $D_{2}$ alone price $1$ will be optimal, but together with
$D_{1}$, price $k$ maximizes the revenue.

\paragraph{Myerson pricing} For prior $D_{1}$ the maximal revenue is achieved
by $p_{2}=1+\epsilon$: 
\begin{equation}
\forall p'\in\{2,\dots,k\}\,\,\,\left(1+\epsilon\right)\cdot\Pr_{v_{2}\sim D_{1}}\left[v_{2}\geq1+\epsilon\right]=\alpha\left(1+\epsilon\right)>\alpha+2p'\epsilon^{2}=p'\cdot\Pr_{v_{2}\sim D_{1}}\left[v_{2}\geq p'\right].\label{eq:rev-D1}
\end{equation}
Similarly, for $D_{2}$ the revenue is maximized by $p_{2}=1$:
\begin{equation}
\forall p'\in\{2,k\}\,\,\,1\cdot\Pr_{v_{2}\sim D_{2}}\left[v_{2}\geq1\right]=\beta>\beta-k\epsilon^{2}=p'\cdot\Pr_{v_{2}\sim D_{1}}\left[v_{2}\geq p'\right].\label{eq:rev-D2}
\end{equation}
For the seller's initial prior, $\frac{1}{2}D_{1}+\frac{1}{2}D_{2}$,
the revenue is maximized by $k$:
\begin{equation}
k\cdot\Pr_{v_{2}\sim\frac{1}{2}D_{1}+\frac{1}{2}D_{2}}\left[v_{2}\geq k\right]=\frac{1}{2}\alpha+\frac{1}{2}\beta+\frac{k}{2}\epsilon^{2}>\frac{1}{2}\alpha+\frac{1}{2}\beta=1\cdot\Pr_{v_{2}\sim\frac{1}{2}D_{1}+\frac{1}{2}D_{2}}\left[v_{2}\geq1\right]\mbox{.}\label{eq:rev-D1-D2}
\end{equation}
Finally, we show that for any convex combination $\lambda D_{1}+\left(1-\lambda\right)D_{2}$
of the distributions, the revenue is maximized by some price $p_{2}\in\{1,1+\epsilon,k\}$.
It is easy to see that the optimal price belongs to the support
of the mixed distribution. Yet, for any $p'\in\{3,\dots,k-1\}$ we
have:
\begin{align*}
k\cdot\Pr_{v_{2}\sim\gamma D_{1}+\left(1-\gamma\right)D_{2}}\left[v_{2}\geq k\right] &=  \gamma\alpha+\left(1-\gamma\right)\beta+\left(3\gamma-1\right)\cdot k\epsilon^{2}\\
& > \gamma\alpha+\left(1-\gamma\right)\beta\cdot\frac{p'}{k}+\left(3\gamma-1\right)p'\epsilon^{2}\\
& = p'\cdot\Pr_{v_{2}\sim\gamma D_{1}+\left(1-\gamma\right)D_{2}}\left[v_{2}\geq p'\right].
\end{align*}
Similarly, for $p'=2$ and $\gamma<1$, we still have 
\[
k\cdot\Pr_{v_{2}\sim\gamma D_{1}+\left(1-\gamma\right)D_{2}}\left[v_{2}\geq k\right]>\gamma\alpha+\left(1-\gamma\right)\beta+\left(6\gamma-k\right)\epsilon^{2}=2\cdot\Pr_{v_{2}\sim\gamma D_{1}+\left(1-\gamma\right)D_{2}}\left[v_{2}\geq2\right]\mbox{.}
\]

\paragraph{Buyer's utility} 
Recall that $u_{2}\left(D'\mid D\right)$ denotes
the buyer's expected utility from the second-stage mechanism when her
true distribution is $D$, but the seller runs a Myerson mechanism
against a (possibly misreported) prior of $D'$.

\begin{description}
\item [{Truthfulness}] When the buyer draws her second stage valuation from
$D_{1}$ and the price is $p\left(D_{1}\right)=1+\epsilon$, her utility
is 
\[
u_{2}\left(D_{1}\mid D_{1}\right)=\alpha\cdot\left(H_{k}-1-\epsilon/2\right)+2\epsilon^{2}\cdot\left(k-1-\epsilon\right)
\]
together with a discount of $\epsilon/5$ on the first stage, it is
greater than the utility from price $p\left(D_{2}\right)=1$: 
\[
u_{2}\left(D_{2}\mid D_{1}\right)=\alpha\cdot\left(H_{k}-1+\epsilon/2\right)+2\epsilon^{2}\cdot\left(k-1-\epsilon\right)=u_{2}\left(D_{1}\mid D_{1}\right)+\alpha\epsilon\mbox{.}
\]
On the other hand, if the buyer draws her valuation from $D_{2}$,
then we have:
\[
u_{2}\left(D_{1}\mid D_{2}\right)=\left(\beta\left(\frac{1}{k}\right)-\epsilon^{2}\right)\cdot\left(k-1-\epsilon\right)+\left(\beta\left(\frac{1}{2}-\frac{1}{k}\right)-\frac{k}{2}\epsilon^{2}\right)\cdot\left(1-\epsilon\right)\mbox{;}
\]
 as well as
\begin{align*}
u_{2}\left(D_{2}\mid D_{2}\right) &=  \left(\beta\left(\frac{1}{k}\right)-\epsilon^{2}\right)\cdot\left(k-1\right)+\left(\beta\left(\frac{1}{2}-\frac{1}{k}\right)-\frac{k}{2}\epsilon^{2}\right)\\
 &=  u_{2}\left(D_{1}\mid D_{2}\right)+\left(\left(\beta\cdot\frac{1}{2}\right)-\left(\frac{k}{2}+1\right)\epsilon^{2}\right)\cdot\epsilon\mbox{.}
\end{align*}
 Therefore, the discount must satisfy $\left(\beta\cdot\frac{1}{2}-\left(\frac{k}{2}+1\right)\epsilon^{2}\right)\epsilon>\epsilon/5>\alpha\epsilon$.
\item [{Value of OTR}]
The value of the OTR for a buyer with prior $D_{1}$
is given by 
\begin{align*}
u_{2}\left(D_{1}\mid D_{1}\right)&+\epsilon/5-u_{2}\left(\frac{1}{2}D_{1}+\frac{1}{2}D_{2}\mid D_{1}\right) \\
&= \alpha\cdot\left(H_{k}-1-\epsilon/2\right)-2\epsilon^{2}\cdot\left(k-1-\epsilon\right)+\epsilon/5\\
 &= \alpha\cdot\left(H_{k}-1\right)+O\left(\epsilon\right)\mbox{.}
\end{align*}
The value for a buyer with prior $D_{2}$ is given by
\begin{align*}
u_{2}\left(D_{2}\mid D_{2}\right)&-u_{2}\left(\frac{1}{2}D_{1}+\frac{1}{2}D_{2}\mid D_{2}\right) \\
&=  \left(\beta\left(\frac{1}{k}\right)-\epsilon^{2}\right)\cdot\left(k-1\right)+\left(\beta\left(\frac{1}{2}-\frac{1}{k}\right)-\frac{k}{2}\epsilon^{2}\right)\\
&=  \beta\left(1.5-O\left(\frac{1}{k}\right)\right)+O\left(\epsilon^{2}\right)\mbox{.}
\end{align*}
Finally, in order to achieve equal value of the OTR, choose $\alpha$
and $\beta$ such that 

\begin{multline*}
\alpha\cdot\left(H_{k}-1-\epsilon/2\right)-2\epsilon^{2}\cdot\left(k-1-\epsilon\right)+\epsilon/5 \\ =\left(\beta\left(\frac{1}{k}\right)-\epsilon^{2}\right)\cdot\left(k-1\right)+\left(\beta\left(\frac{1}{2}-\frac{1}{k}\right)-\frac{k}{2}\epsilon^{2}\right).
\end{multline*}

In particular consider $\beta=1$ and $\alpha\approx1.5/H_{k}$.
\end{description}

\paragraph{Seller's revenue} As we already showed in (\ref{eq:rev-D1})-(\ref{eq:rev-D1-D2}),
the optimal expected revenue from the second item is approximately
the same whether the seller learns the buyer's partial information
or not:
\[
\frac{1}{2}\textup{\texttt{Rev}}\left(D_{1}\right)+\frac{1}{2}\textup{\texttt{Rev}}\left(D_{2}\right)=\frac{1}{2}\alpha\left(1+\epsilon\right)+\frac{1}{2}\beta=\frac{1}{2}\alpha+\frac{1}{2}\beta+O\left(\epsilon\right)
\]
versus
\[
\textup{\texttt{Rev}}\left(\frac{1}{2}D_{1}+\frac{1}{2}D_{2}\right)=\frac{1}{2}\alpha+\frac{1}{2}\beta+\frac{k}{2}\epsilon^{2}=\frac{1}{2}\alpha+\frac{1}{2}\beta+O\left(\epsilon^{2}\right)\mbox{.} \qedhere
\]
\end{proof}

\end{appendix}

\end{document}